\numberwithin{equation}{section} \swapnumbers
\newtheorem{satz}{Satz}[section]
\newtheorem{theorem}[satz]{Theorem}
\newtheorem{proposition}[satz]{Proposition}
\newtheorem{corollary}[satz]{Corollary}
\newtheorem{lemma}[satz]{Lemma}
\newtheorem{definition}[satz]{Definition}
\newtheorem{remark}[satz]{Remark}
\newtheorem{example}[satz]{Example}
\newtheorem{examples}[satz]{Examples}
\newcommand{\bbr}{\mathbb{R}}
\newcommand{\bbe}{\mathbb{E}}
\newcommand{\bbn}{\mathbb{N}}
\newcommand{\bbp}{\mathbb{P}}
\newcommand{\bbq}{\mathbb{Q}}
\newcommand{\bbg}{\mathbb{G}}
\newcommand{\bbf}{\mathbb{F}}
\newcommand{\bbi}{\mathbb{I}}
\newcommand{\bbk}{\mathbb{K}}
\newcommand{\bbs}{\mathbb{S}}
\newcommand{\bbx}{\mathbb{X}}
\newcommand{\cala}{\mathscr{A}}
\newcommand{\calb}{\mathscr{B}}
\newcommand{\calc}{\mathscr{C}}
\newcommand{\cale}{\mathscr{E}}
\newcommand{\calf}{\mathscr{F}}
\newcommand{\calg}{\mathscr{G}}
\newcommand{\calh}{\mathscr{H}}
\newcommand{\cali}{\mathscr{I}}
\newcommand{\calk}{\mathscr{K}}
\newcommand{\call}{\mathscr{L}}
\newcommand{\calm}{\mathscr{M}}
\newcommand{\calp}{\mathscr{P}}
\newcommand{\cals}{\mathscr{S}}
\newcommand{\calv}{\mathscr{V}}
\newcommand{\loc}{{\rm loc}}
\newcommand{\adm}{{\rm adm}}
\newcommand{\sfi}{{\rm sf}}
\newcommand{\Var}{{\rm Var}}
\newcommand{\var}{{\rm var}}
\newcommand{\lin}{{\rm lin}}
\newcommand{\la}{\langle}
\newcommand{\ra}{\rangle}
\newcommand{\bbI}{\mathbbm{1}}
\newcommand{\bdot}{\bullet}
\newcommand{\IL}{[\![}
\newcommand{\IR}{]\!]}
\begin{document}

\title[No arbitrage and multiplicative special semimartingales]{No arbitrage and multiplicative special semimartingales}
\author{Eckhard Platen \and Stefan Tappe}
\address{University of Technology Sydney, School of Mathematical and Physical Sciences, Finance Discipline Group, PO Box 123, Broadway, NSW 2007, Australia}
\email{eckhard.platen@uts.edu.au}
\address{Albert Ludwig University of Freiburg, Department of Mathematical Stochastics, Ernst-Zermelo-Stra\ss{}e 1, D-79104 Freiburg, Germany}
\email{stefan.tappe@math.uni-freiburg.de}
\date{10 September, 2022}
\thanks{We are grateful to Martin Schweizer and Josef Teichmann for valuable discussions, and to the Editor and the two referees for helpful comments and suggestions. Stefan Tappe gratefully acknowledges financial support from the Deutsche Forschungsgemeinschaft (DFG, German Research Foundation) -- project number 444121509.}
\begin{abstract}
Consider a financial market with nonnegative semimartingales which does not need to have a num\'{e}raire. We are interested in the absence of arbitrage in the sense that no self-financing portfolio gives rise to arbitrage opportunities, where we are allowed to add a savings account to the market. We will prove that in this sense the market is free of arbitrage if and only if there exists an equivalent local martingale deflator which is a multiplicative special semimartingale. In this case, the additional savings account relates to the finite variation part of the multiplicative decomposition of the deflator.
\end{abstract}
\keywords{fundamental theorem of asset pricing, market without a num\'{e}raire, self-financing portfolio, no-arbitrage concept, equivalent local martingale deflator, multiplicative special semimartingale, market price of risk, short rate, jump-diffusion model with fixed times of discontinuities}
\subjclass[2020]{91B02, 91B70, 60G48}

\maketitle\thispagestyle{empty}

\section{Introduction}

There exists now a rich literature on no-arbitrage concepts and their relationships. This literature involves a wide range of statements and proofs, and their links are not always obvious. Some no-arbitrage concepts remain difficult to interpret for real markets because of their purely mathematical orientation. Any form of potential arbitrage can only be exploited by market participants with the self-financing portfolio of their entire wealth. The legal concept of limited liability ensures that these portfolios remain nonnegative. Self-financing and nonnegativity of potential arbitrage portfolios are not always assumed in the no-arbitrage literature, which created a wide range of theoretical no-arbitrage concepts. However, to be practically relevant, these two properties need to be considered when clarifying whether a market model makes sense from a realistic no-arbitrage point of view.

This paper focuses on no-arbitrage concepts with the above mentioned practically relevant self-financing and nonnegativity properties, and also covers many other of the more theoretical no-arbitrage notions. It provides crucial new results concerning the links between no-arbitrage concepts and notions of deflators. Its approach leads to compact proofs and brings clarity into the links between no-arbitrage concepts. The theoretical key to these results is the handling of no-arbitrage concepts in topological vector lattices, which were developed in \cite{Platen-Tappe-tvs}. The current paper applies and extends systematically this methodology, which generalizes and systematizes the existing no-arbitrage theory.

Before giving in this introduction a brief description of the main results of the paper, let us first list some of the important papers in the no-arbitrage literature which relate to our results. These include the papers \cite{Harrison-Kreps, Harrison, Dalang, Stricker, Schach-Hilbert, Schach-94, Jacod-Shiryaev-Finance, Kabanov-Stricker} and the textbook \cite{FS}, which treat the fundamental theorem of asset pricing (FTAP) in discrete time. The papers \cite{DS-94, DS-98} and the textbook \cite{DS-book} establish the FTAP in continuous time and its connection between NFLVR (No Free Lunch with Vanishing Risk)\footnote{Concerning the no-arbitrage concepts used in this paper, we refer the reader to Definition \ref{def-NA-concepts} for the formal definitions, and to Remark \ref{rem-NA-concepts} for interpretations of these concepts.} and the existence of a martingale measure. Under the NFLVR condition the density process of a local martingale (or, more generally, a $\sigma$-martingale) measure is a strictly positive local martingale deflator. The latter is a process that acts multiplicatively and transforms nonnegative wealth processes into local martingales. The papers \cite{DS-95, Kabanov, Shiryaev-Cherny, Protter-Shimbo, Kardaras-10b, Mancin-Rung, Fontana-2, Fontana, Cuc-Teichmann, Imkeller} and the textbooks \cite{Jarrow, KK-book} treat further developments and related topics concerning the FTAP. In certain situations, results about criteria for the absence of arbitrage have been derived, for example, in \cite{Christopeit-Musiela, Melnikov-Shiryaev, Criens-1, Criens-2}.

Empirical evidence, e.g. in \cite{BGP, BIP, Platen-Rendek, SZP}, and theoretical considerations, e.g. in \cite{Loewenstein, Platen-2002, Karatzas-Kardaras}, point out that the NFLVR condition may be too restrictive for realistic long-term modeling, and that there are less expensive ways of producing long-term payouts than widely practiced under the NFLVR condition, which is somehow equivalent to risk-neutral pricing and hedging.  Several market viability properties that are weaker than NFLVR turn out to be equivalent to the existence of supermartingale (or local martingale) num\'{e}raires, where the papers \cite{Choulli-Stricker, Kardaras-12, Takaoka-Schweizer, Song} present versions of the FTAP, which connect the notions NA$_1$ (No Arbitrage of the 1st Kind), NAA$_1$ (No Asymptotic Arbitrage of the 1st Kind) and NUPBR (No Unbounded Profit with Bounded Risk) with the existence of a martingale deflator. Finally, the articles \cite{Karatzas-Kardaras, Christensen-Larsen, Hulley-Schweizer, Fontana-Rung, KKS, Herdegen, Herdegen-Schweizer-2} study related topics, including the NINA (Num\'{e}raire-Independent No-Arbitrage) condition.

So far, there are only a few references (such as \cite{Tehranchi, Herdegen, Balint-Schweizer, Balint-Schweizer-1, Balint-Schweizer-2, Harms}) which deal with financial market models without a num\'{e}raire. As a consequence, versions of the FTAP are typically formulated under the assumption that the considered market $\bbs = \{ S^1,\ldots,S^d,1 \}$ is already discounted by some num\'{e}raire. However, in \cite{Balint-Schweizer-2} it has been pointed out that, in economic terms, the existence of a discounted num\'{e}raire $S^{d+1} = 1$ is a nontrivial restriction. In the current paper, we consider a financial market $\bbs = \{ S^1,\ldots,S^d \}$ with nonnegative semimartingales. Thus, we do not need the widely used assumption that the considered market is already discounted by some num\'{e}raire, avoiding the aforementioned modeling restriction.

In our setting, a savings account does not have to exist in the market, and we demonstrate how to permit price processes and portfolios that extend the absence of arbitrage from the set of self-financing portfolios to other price and value processes. This clarifies, e.g., what can happen when adding a savings account or other price process to a market. When assuming that a savings account $B$, i.e., a predictable, strictly positive, finite variation process with $B_0 = 1$, may be added to the market, our main results can be verbally summarized as follows:
\begin{itemize}
\item The market satisfies NUPBR for its nonnegative, self-financing portfolios if and only if there exists an equivalent local martingale deflator (ELMD) $Z$ which is a multiplicative special semimartingale $Z = D B^{-1}$; see Theorem \ref{thm-FTAP}. This is a criterion which can often be checked for concrete markets; see Theorems \ref{thm-ELMD-h}, \ref{thm-ELMD-special}, \ref{thm-jd} and Corollary \ref{cor-diffusion-models}. If such a deflator exists, then its multiplicative decomposition provides us with the additional savings account $B$ which we can use in the market. This no-arbitrage characterization is also practically relevant because potential arbitrage portfolios are assumed to be self-financing and nonnegative.

\item The market satisfies NFLVR for its admissible\footnote{This means that the portfolio is allowed to be negative in between, but it must be bounded from below.}, self-financing portfolios if and only if there exists an ELMD $Z$ which is a multiplicative special semimartingale $Z = D B^{-1}$ such that the local martingale part $D$ is a true martingale; see Theorem \ref{thm-FTAP-classical}. The martingale $D$ appearing in the multiplicative decomposition $Z = D B^{-1}$ is just the density process of a measure change $\bbq \approx \bbp$, which provides a connection to the classical FTAP by Delbaen and Schachermayer (see \cite{DS-94} and \cite{DS-98}). It also provides a characterization of the currently widely practiced risk-neutral pricing approach.

\item As a consequence, when considering the discrete time setting, the market satisfies the NA (No Arbitrage) condition for all of its self-financing portfolios if and only if there exists an equivalent martingale deflator (EMD) $Z$ which is a multiplicative special semimartingale $Z = D B^{-1}$ such that the local martingale part $D$ is a true martingale; see Theorem \ref{thm-diskret-2}. Thus, the general results of this paper also allow to deduce no-arbitrage conditions for discrete time markets.
\end{itemize}
As we will see later on, an ELMD $Z$ which is a multiplicative special semimartingale does not to need exist, and if it exists, it does not need to be unique; see Examples \ref{ex-particular}. When we try to find such an ELMD $Z$, there are two possible approaches:
\begin{itemize}
\item We fix a local martingale $D > 0$ and look for a savings account $B$ such that $Z = D B^{-1}$ is an ELMD for the market $\bbs$. If such a savings account exists, then it is unique; see Proposition \ref{prop-savings-account-unique}.

\item We fix a savings account $B$ and look for a local martingale $D > 0$ such that $Z = D B^{-1}$ is an ELMD for the market $\bbs$. If such a local martingale $D$ exists, it does not need to be unique.
\end{itemize}
In general, we do not have a method in order to construct the savings account $B$ or the local martingale $D$. However, in the particular situation of diffusion models, this can be done by solving linear equations; see Corollary \ref{cor-diffusion-models}.

As already indicated, a crucial ingredient for the proofs of these results is the paper \cite{Platen-Tappe-tvs}, where we have developed a general theory of no-arbitrage concepts in topological vector lattices. More precisely, in \cite{Platen-Tappe-tvs} we have studied the relations between these no-arbitrage concepts, and we have provided abstract versions of the FTAP, including a version on Banach function spaces. For the proofs of the results from the current paper, we will apply those results from \cite{Platen-Tappe-tvs} which concern the relations between the no-arbitrage concepts; in particular the results from Section 7 in \cite{Platen-Tappe-tvs}, where the outcomes of trading strategies in a market model with semimartingales are considered, including the outcomes of nonnegative, self-financing portfolios. The abstract FTAPs from \cite{Platen-Tappe-tvs} cannot be used in order to prove the main results of the current paper.

The paper is organized as follows: In Section \ref{sec-notation} we introduce the respective notation and the no-arbitrage concepts. In Section \ref{sec-FTAP-revisited} we review FTAPs in our framework and present extensions. In Section \ref{sec-main} we describe our main results and show some of their consequences. In Section \ref{sec-tsm} we discuss consequences for semimartingale term structure models. In Section \ref{sec-ELMD} we clarify when an ELMD exists, and in Section \ref{sec-jd} we focus on jump-diffusion models with fixed times of discontinuities. In Section \ref{sec-further-examples} we provide further examples which are related to our main results, and in Section \ref{sec-dynamic}, where we treat dynamic trading strategies, we present further consequences. In Section \ref{sec-semimartingale} we discuss the assumption that the primary security accounts are semimartingales, and prove that under natural conditions this is automatically satisfied. In Section \ref{sec-filtration} we show how our results can be transferred to a mathematical setting with filtration enlargement, allowing for insider information. In Section \ref{sec-discrete} we deal with discrete time markets and present further consequences of our main results. In Appendix \ref{app-vector-integration} we provide results about vector stochastic integration, whereas in Appendix \ref{app-trans} we review a transformation result for self-financing portfolios. In Appendix \ref{app-ELMD} we present results about ELMDs and related concepts. Finally, in Appendix \ref{app-suff} we provide some sufficient conditions for the absence of arbitrage.

\section{Basic notation and no-arbitrage concepts}\label{sec-notation}

In this section we introduce basic notation and the no-arbitrage concepts which we consider in this paper, see also \cite[Sec. 7]{Platen-Tappe-tvs}. Concerning the required theory of stochastic processes, we adopt the terminology from \cite{Jacod-Shiryaev}, where further details can be found.

From now on, let $T \in (0,\infty)$ be a finite time horizon, and let $(\Omega,\calf,(\calf_t)_{t \in [0,T]},\bbp)$ be a stochastic basis satisfying the usual conditions such that $\calf_0$ is $\bbp$-trivial. We consider a market $\bbs = \{ S^i : i \in I \}$ consisting of nonnegative semimartingales $S^i \geq 0$, the primary security accounts, for some index set $I \neq \emptyset$. We denote by $\Delta(\bbs)$ the set of all strategies for $\bbs$; these are the numbers of units of components of $\bbs$ held. For a strategy $\delta \in \Delta(\bbs)$ we define the \emph{portfolio} $S^{\delta} := \delta \cdot S$, where we recall that `$\,\, \cdot \,$' denotes the usual inner product in Euclidean space. Furthermore, we denote by $\Delta_{\sfi}(\bbs)$ the set of all self-financing strategies for $\bbs$, where changes in the portfolio value result only from changes in the primary security account values. Mathematically, this is expressed by the condition $S^{\delta} = S_0^{\delta} + \delta \bdot S$, where `$\,\, \bdot \,$' denotes stochastic integration; see \cite[Sec. I.4.d]{Jacod-Shiryaev} for the stochastic integral with respect to a real-valued semimartingale, and \cite{Shiryaev-Cherny} for vector stochastic integration. For $\alpha \geq 0$ and a strategy $\delta \in \Delta(\bbs)$ we introduce the integral process $I^{\alpha,\delta} := \alpha + \delta \bdot S$. We call a process $B = (B_t)_{t \in [0,T]}$ a \emph{savings account} (or a \emph{locally risk-free asset}) if it is predictable, c\`{a}dl\`{a}g and of finite variation with $B_0 = 1$ and $B, B_- > 0$. For another nonnegative semimartingale $Y \geq 0$ we agree on the notation
\begin{align}\label{market-Y}
\bbs Y := \{ S^i Y : i \in I \}.
\end{align}
Later on, we will often consider the discounted market $\bbs B^{-1}$ for some saving account $B$. For each $\alpha \geq 0$ we introduce the following sets of potential security processes:
\begin{itemize}
\item $\mathbb{I}_{\alpha}(\bbs)$ consists of all integral processes $I^{\alpha,\delta}$ with $\delta \in \Delta(\bbs)$.

\item $\mathbb{I}_{\alpha}^{\adm}(\bbs)$ consists of all admissible integral processes from $\mathbb{I}_{\alpha}(\bbs)$. Recall that a process $X$ is called admissible if $X \geq -a$ for some constant $a \in \bbr_+$.

\item $\mathbb{I}_{\alpha}^+(\bbs)$ consists of all nonnegative integral processes from $\mathbb{I}_{\alpha}(\bbs)$.

\item $\bbp_{\sfi,\alpha}(\bbs)$ consists of all self-financing portfolios starting in $\alpha$.

\item $\bbp_{\sfi,\alpha}^{\adm}(\bbs)$ consists of all admissible self-financing portfolios starting in $\alpha$.

\item $\bbp_{\sfi,\alpha}^+(\bbs)$ consists of all nonnegative self-financing portfolios starting in $\alpha$.
\end{itemize}
In practice, under limited liability, market participants can only exploit forms of arbitrage with the nonnegative self-financing portfolio of their entire wealth, starting with their initial wealth $\alpha$. This makes the set $\bbp_{\sfi,\alpha}^+(\bbs)$ rather special from a practical perspective because it captures  natural constraints that exist for exploiting possible forms of arbitrage. Most of the other above introduced sets of security processes remain arguably more of a purely theoretical nature. Now, let $(\bbk_{\alpha})_{\alpha \geq 0}$ be any of the above families of potential security processes. Denoting by $L^0 = L^0(\Omega,\calf_T,\bbp)$ the space of all equivalence classes of real-valued random variables, where two random variables are identified if they coincide $\bbp$-almost surely, we define the family $(\calk_{\alpha})_{\alpha \geq 0}$ of subsets of $L^0$ as
\begin{align*}
\calk_{\alpha} := \{ X_T : X \in \bbk_{\alpha} \}, \quad \alpha \geq 0.
\end{align*}
We may think of outcomes of trading strategies with initial value $\alpha$. The definition above provides us with the families
\begin{align*}
&(\cali_{\alpha}(\bbs))_{\alpha \geq 0}, (\cali_{\alpha}^{\adm}(\bbs))_{\alpha \geq 0}, (\cali_{\alpha}^+(\bbs))_{\alpha \geq 0},
\\ &(\calp_{\sfi,\alpha}(\bbs))_{\alpha \geq 0}, (\calp_{\sfi,\alpha}^{\adm}(\bbs))_{\alpha \geq 0}, (\calp_{\sfi,\alpha}^+(\bbs))_{\alpha \geq 0}
\end{align*}
of outcomes of security processes. Denoting by $L_+^0$ the convex cone of all nonnegative random variables, and by $L^{\infty}$ the space of all bounded random variables, we define the convex cone $\calc \subset L^{\infty}$ as
\begin{align*}
\calc := (\calk_0 - L_+^0) \cap L^{\infty}.
\end{align*}
Moreover, we define the family $(\calb_{\alpha})_{\alpha \geq 0}$ of subsets of $L_+^0$ as
\begin{align*}
\calb_{\alpha} := (\calk_{\alpha} - L_+^0) \cap L_+^0, \quad \alpha \geq 0.
\end{align*} 
We may think of all nonnegative elements which are equal to or below the outcome of a trading strategy with initial value $\alpha$. Setting $\calb := \calb_1$, by \cite[Lemma 3.11]{Platen-Tappe-tvs} we have $\calb_{\alpha} = \alpha \calb$ for each $\alpha > 0$, and hence for the upcoming no-arbitrage concepts it suffices to consider $\calk_1$ rather than the family $(\calk_{\alpha})_{\alpha > 0}$. The Minkowski functional $p_{\calb} : L^0 \to [0,\infty]$ is given by
\begin{align*}
p_{\calb}(\xi) = \inf \{ \alpha > 0 : \xi \in \calb_{\alpha} \}, \quad \xi \in L^0.
\end{align*}
Note that $p_{\calb}(\xi)$ has the interpretation of the minimal superreplication price of $\xi$. In the following definition we denote by $L_+^{\infty}$ the convex cone of all bounded, nonnegative random variables.

\begin{definition}\label{def-NA-concepts}
We introduce the following no-arbitrage concepts:
\begin{enumerate}
\item $\calk_0$ satisfies \emph{NA (No Arbitrage)} if $\calk_0 \cap L_+^0 = \{ 0 \}$, or equivalently $\calc \cap L_+^{\infty} = \{ 0 \}$.

\item $\calk_0$ satisfies \emph{NFL (No Free Lunch)} if $\overline{\calc}^* \cap L_+^{\infty} = \{ 0 \}$, where $\overline{\calc}^*$ denotes the closure with respect to the weak-$^*$ topology $\sigma(L^{\infty},L^1)$.

\item $\calk_0$ satisfies \emph{NFLBR (No Free Lunch with Bounded Risk)} if $\widetilde{\calc}^* \cap L_+^{\infty} = \{ 0 \}$, where $\widetilde{\calc}^*$ denotes the sequential closure with respect to the weak-$^*$ topology $\sigma(L^{\infty},L^1)$.

\item $\calk_0$ satisfies \emph{NFLVR (No Free Lunch with Vanishing Risk)} if $\overline{\calc} \cap L_+^{\infty} = \{ 0 \}$, where $\overline{\calc}$ is the denotes the closure with respect to the norm topology on $L^{\infty}$.

\item $\calk_1$ satisfies \emph{NUPBR (No Unbounded Profit with Bounded Risk)} if $\calb$ is topologically bounded, or equivalently bounded in probability.

\item $\calk_1$ satisfies \emph{NAA$_1$ (No Asymptotic Arbitrage of the 1st Kind)} if $\calb$ is sequentially bounded.

\item $\calk_1$ satisfies \emph{NA$_1$ (No Arbitrage of the 1st Kind)} if $p_{\calb}(\xi) > 0$ for all $\xi \in L_+^0 \setminus \{ 0 \}$.
\end{enumerate}
\end{definition}

\begin{remark}\label{rem-NA-concepts}
Note that the above list covers several well-known no-arbitrage concepts. Recall that an \emph{arbitrage opportunity} is an element $X \in \calk_0$ such that
\begin{align*}
\bbp(X \geq 0) = 1 \quad \text{and} \quad \bbp(X > 0) > 0.
\end{align*}
Therefore, the condition $\calk_0 \cap L_+^0 = \{ 0 \}$ just means that there are no arbitrage opportunities, which explains the no-arbitrage concept NA. Often it is easy to find mathematical conditions which are sufficient for NA, but typically they fail to be necessary. In order to overcome this problem, there are two approaches in order to define slightly stronger no-arbitrage concepts:
\begin{itemize}
\item Observing that NA can equivalently be expressed as $\calc \cap L_+^{\infty} = \{ 0 \}$, we can impose stronger conditions of the type $\overline{\calc}^{\tau} \cap L_+^{\infty} = \{ 0 \}$, where the closure is taken with respect to some topology $\tau$ on $L^{\infty}$. The no-arbitrage concepts NFL, NFLBR and NFLVR introduced above are particular examples. Note that all these no-arbitrage concepts are related to $\calk_0$, which has the interpretation of outcomes of trading strategies with initial value zero.

\item Another approach is to consider no-arbitrage concepts which are related to $(\calk_{\alpha})_{\alpha > 0}$, the outcomes of trading strategies with strictly positive initial value. Then the idea is that in a reasonable market it should not be possible to make unbounded profit when investing money in that market. With our notation, this means that the set $\calb$ should be bounded in some sense, which is expressed by the no-arbitrage concepts NUPBR and NAA$_1$. The no-arbitrage concept NA$_1$ means that a strictly positive payoff $\xi \in L_+^0 \setminus \{ 0 \}$ can only be replicated with strictly positive initial wealth.
\end{itemize}
\end{remark}

The pricing under a given no-arbitrage concept is typically aligned to a, so-called, deflator. In order to introduce this concept, let $\bbx$ be a family of semimartingales. We call a semimartingale $Z = (Z_t)_{t \in [0,T]}$ such that $Z,Z_- > 0$
\begin{enumerate}
\item an \emph{equivalent martingale deflator (EMD)} for $\bbx$ if $X Z$ is a martingale for all $X \in \bbx$.

\item an \emph{equivalent local martingale deflator (ELMD)} for $\bbx$ if $X Z$ is a local martingale for all $X \in \bbx$.

\item an \emph{equivalent $\sigma$-martingale deflator (E$\mathit{\Sigma}$MD)} for $\bbx$ if $X Z$ is a $\sigma$-martingale for all $X \in \bbx$.
\end{enumerate}
A pricing rule can often also be characterized by a, so-called, pricing measure. More precisely, we call an equivalent probability measure $\bbq \approx \bbp$ on $(\Omega,\calf_T)$
\begin{enumerate}
\item an \emph{equivalent martingale measure (EMM)} for $\bbx$ if $X$ is a $\bbq$-martingale for all $X \in \bbx$.

\item an \emph{equivalent local martingale measure (ELMM)} for $\bbx$ if $X$ is a $\bbq$-local martingale for all $X \in \bbx$.

\item an \emph{equivalent $\sigma$-martingale measure (E$\mathit{\Sigma}$MM)} for $\bbx$ if $X$ is a $\bbq$-$\sigma$-martingale for all $X \in \bbx$.
\end{enumerate}
The required results about the just introduced concepts are presented in Appendix \ref{app-ELMD}. For this, an essential tool are the following two results, where Proposition \ref{prop-deflator-prop-2} will also be useful for the study of dynamic trading strategies in Section \ref{sec-dynamic} later on. For what follows, we recall that $\cals$ denotes the space of all semimartingales (see \cite[Sec. I.4.c]{Jacod-Shiryaev}), that $\calm_{\loc}$ denotes the space of all local martingales (see \cite[Sec. I.1.e]{Jacod-Shiryaev}), and that $\calm_{\sigma}$ denotes the space of all $\sigma$-martingales (see \cite[Sec. III.6.e]{Jacod-Shiryaev}).

\begin{proposition}\label{prop-deflator-prop-1}
Let $X \in \cals^d$ and $Y \in \calm_{\sigma}$ be such that $X^i Y \in \calm_{\sigma}$ for each $i = 1,\ldots,d$. Then for every $H \in L(X)$ we have $(H \bdot X) Y \in \calm_{\sigma}$.
\end{proposition}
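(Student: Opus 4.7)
The plan is to combine integration by parts with the two auxiliary lemmas of this appendix, reducing everything to the preservation of $\sigma$-martingality under stochastic integration. First, I would apply integration by parts to $(H \bdot X) Y$. Since $(H \bdot X)_0 = 0$, this yields
\begin{align*}
(H \bdot X) Y = (H \bdot X)_- \bdot Y + Y_- \bdot (H \bdot X) + [H \bdot X, Y].
\end{align*}
The first summand is the stochastic integral of the c\`agl\`ad (hence locally bounded) predictable process $(H \bdot X)_-$ against $Y \in \calm_\sigma$, so it belongs to $\calm_\sigma$.

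For the remaining two summands I would pull $H$ outside using the lemmas. By Lemma \ref{lemma-bracket}, $[H \bdot X, Y] = H \bdot [X, Y]$, where $[X, Y]$ denotes the $\bbr^d$-valued process with components $[X^i, Y]$. Because $Y$ is c\`adl\`ag, $Y_-$ is adapted and c\`agl\`ad, hence locally bounded; Lemma \ref{lemma-ass} with $K := Y_-$ then gives $Y_- \bdot (H \bdot X) = H \bdot (Y_- \bdot X)$. Invoking linearity of the vector integral I combine these to
\begin{align*}
Y_- \bdot (H \bdot X) + [H \bdot X, Y] = H \bdot N, \qquad N := Y_- \bdot X + [X, Y].
\end{align*}

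The next step is to verify that the $\bbr^d$-valued process $N$ is componentwise a $\sigma$-martingale. Applying integration by parts to each scalar product $X^i Y$ gives
\begin{align*}
X^i Y - X^i_0 Y_0 = X^i_- \bdot Y + \bigl( Y_- \bdot X^i + [X^i, Y] \bigr) = X^i_- \bdot Y + N^i.
\end{align*}
By hypothesis $X^i Y \in \calm_\sigma$, while $X^i_- \bdot Y \in \calm_\sigma$ as an integral of a locally bounded predictable process against a $\sigma$-martingale; solving for $N^i$ gives $N^i \in \calm_\sigma$ for every $i$.

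Finally, since $H \in L(Y_- \bdot X)$ by Lemma \ref{lemma-ass} and $H \in L_{\var}([X, Y])$ by Lemma \ref{lemma-bracket}, one has $H \in L(N)$, and the standard preservation of $\sigma$-martingality under vector stochastic integration (e.g.\ \cite[III.6.41]{Jacod-Shiryaev}) then yields $H \bdot N \in \calm_\sigma$. Adding the $\sigma$-martingale first summand from integration by parts produces $(H \bdot X) Y \in \calm_\sigma$. The main obstacle I anticipate is the careful bookkeeping at the interface between scalar and vector stochastic integration—specifically confirming that $H \bdot N$ is genuinely well defined and really does split as $H \bdot (Y_- \bdot X) + H \bdot [X, Y]$—but the auxiliary lemmas of this appendix are tailored precisely to this purpose, so once those justifications are in place the remaining steps are routine.
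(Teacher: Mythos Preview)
Your argument is correct and follows essentially the same route as the paper: integration by parts on $(H\bdot X)Y$, the identification $Y_-\bdot(H\bdot X)+[H\bdot X,Y]=H\bdot(Y_-\bdot X+[X,Y])$ via Lemmas~\ref{lemma-ass} and~\ref{lemma-bracket}, and the observation that $Y_-\bdot X^i+[X^i,Y]\in\calm_\sigma$ from integration by parts on $X^iY$. The only difference is expository order---the paper first establishes $Y_-\bdot X^i+[X^i,Y]\in\calm_\sigma$ and then expands $(H\bdot X)Y$, whereas you do the reverse---and you are slightly more explicit about why $H\bdot N\in\calm_\sigma$.
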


\begin{proof}
Let $i \in \{ 1,\ldots,d \}$ be arbitrary. Using integration by parts (see \cite[Def. I.4.45]{Jacod-Shiryaev}) we have
\begin{align*}
X_0^i Y_0 + X_-^i \bdot Y + Y_- \bdot X^i + [X^i,Y] = X^i Y \in \calm_{\sigma}.
\end{align*}
Since $Y \in \calm_{\sigma}$, we have $X_-^i \bdot Y \in \calm_{\sigma}$, and hence
\begin{align*}
Y_- \bdot X^i + [X^i,Y] \in \calm_{\sigma}.
\end{align*}
By Lemma \ref{lemma-bracket} we have $H \in L([X,Y])$ and
\begin{align*}
[H \bdot X,Y] = H \bdot [X,Y].
\end{align*}
Since $Y_-$ is predictable and locally bounded, by Lemma \ref{lemma-ass} we have
\begin{align*}
Y_- \in L(H \bdot X), \quad Y_- H \in L(X), \quad H \in L(Y_- \bdot X)
\end{align*}
and
\begin{align*}
Y_- \bdot (H \bdot X) = (Y_- H) \bdot X = H \bdot (Y_- \bdot X).
\end{align*}
Therefore, using integration by parts (see \cite[Def. I.4.45]{Jacod-Shiryaev}) again, we obtain
\begin{align*}
(H \bdot X) Y &= (H \bdot X)_- \bdot Y + Y_- \bdot (H \bdot X) + [H \bdot X,Y]
\\ &= (H \bdot X)_- \bdot Y + H \bdot (Y_- \bdot X) + H \bdot [X,Y]
\\ &= (H \bdot X)_- \bdot Y + H \bdot ( Y_- \bdot X + [X,Y] ) \in \calm_{\sigma},
\end{align*}
completing the proof.
\end{proof}

\begin{proposition}\label{prop-deflator-prop-2}
Let $X \in \cals^d$ and $Z \in \cals$ be such that $X^i Z \in \calm_{\sigma}$ for each $i = 1,\ldots,d$. Then for every $H \in L(X)$ and every $Y \in \calm_{\sigma}$ with $[Y,Z] = 0$ and
\begin{align}\label{sf-app}
H \cdot X = H_0 \cdot X_0 + H \bdot X + Y
\end{align}
the process $Y$ is predictable, and we have $(H \cdot X) Z \in \calm_{\sigma}$.
\end{proposition}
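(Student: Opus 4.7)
The proof divides naturally into two parts: the predictability of $Y$, and the $\sigma$-martingale property of $(H \cdot X) Z$. In both parts the core tool is integration by parts combined with Lemmas \ref{lemma-ass} and \ref{lemma-bracket}, applied to the decomposition $H \cdot X = H_0 \cdot X_0 + H \bdot X + Y$.

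For the predictability of $Y$, I would identify its jumps. Since $\Delta(H \bdot X) = H \cdot \Delta X$ by the properties of vector stochastic integration, while $\Delta(H \cdot X) = H \cdot \Delta X + \Delta H \cdot X_-$ (using that $(H \cdot X)_- = H_- \cdot X_-$), one obtains $\Delta Y = \Delta H \cdot X_-$, where $\Delta H \cdot X_- := \sum_{i=1}^d \Delta H^i X^i_-$. Because $H$ is predictable, its jumps occur only at predictable stopping times and the jump values are $\calf_{\tau-}$-measurable; the same is then true of $\Delta Y$. Combined with $Y$ being c\`{a}dl\`{a}g and adapted (as a semimartingale), the standard criterion for predictability of a c\`{a}dl\`{a}g adapted process yields the predictability of $Y$.

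For the $\sigma$-martingale property, I would start from integration by parts:
\begin{equation*}
(H \cdot X) Z = H_0 \cdot X_0 \, Z_0 + (H \cdot X)_- \bdot Z + Z_- \bdot (H \cdot X) + [H \cdot X, Z].
\end{equation*}
Using the decomposition, $[Y,Z] = 0$, and Lemmas \ref{lemma-ass} and \ref{lemma-bracket}, I rewrite $Z_- \bdot (H \cdot X) = (Z_- H) \bdot X + Z_- \bdot Y$ and $[H \cdot X, Z] = H \bdot [X, Z]$. On the other hand, componentwise integration by parts gives $X^i Z - X^i_0 Z_0 = X^i_- \bdot Z + Z_- \bdot X^i + [X^i, Z]$, and a routine application of Lemma \ref{lemma-ass} yields
\begin{equation*}
H \bdot (X Z - X_0 Z_0) = (H \cdot X_-) \bdot Z + (Z_- H) \bdot X + H \bdot [X, Z].
\end{equation*}
Subtracting these expressions and using $(H \cdot X)_- = H \cdot X_- - \Delta H \cdot X_-$, I arrive at the key identity
\begin{equation*}
(H \cdot X) Z = H_0 \cdot X_0 \, Z_0 + H \bdot (X Z - X_0 Z_0) + Z_- \bdot Y - (\Delta H \cdot X_-) \bdot Z.
\end{equation*}
Here $H \bdot (X Z - X_0 Z_0) \in \calm_{\sigma}$ because $X^i Z \in \calm_{\sigma}$ for each $i$ by hypothesis, and $Z_- \bdot Y \in \calm_{\sigma}$ because $Z_-$ is predictable locally bounded with $Y \in \calm_{\sigma}$.

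The main obstacle is therefore to show that $(\Delta H \cdot X_-) \bdot Z = 0$. Here the assumption $[Y,Z] = 0$ is decisive: by Lemma \ref{lemma-bracket} applied to the decomposition, $[Y,Z] = [H \cdot X, Z] - H \bdot [X, Z] = X_- \bdot [H,Z]$, so the pointwise identity $(\Delta H \cdot X_-) \Delta Z = \sum_i X^i_- \Delta H^i \Delta Z = 0$ must hold at every time. Since the predictable integrand $\Delta H \cdot X_-$ is supported on the (countable) set of predictable jump times of $H$, its integral against the continuous part of $Z$ vanishes, while its integral against the purely discontinuous part inherits the zero jumps $(\Delta H \cdot X_-) \Delta Z = 0$; a standard argument then yields $(\Delta H \cdot X_-) \bdot Z = 0$, and consequently $(H \cdot X) Z \in \calm_{\sigma}$.
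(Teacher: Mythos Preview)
Your overall strategy coincides with the paper's: integration by parts, associativity manipulations via Lemmas \ref{lemma-ass} and \ref{lemma-bracket}, and showing that the residual term (your $(\Delta H \cdot X_-) \bdot Z$, the paper's $\Delta Y \bdot Z$) vanishes thanks to $[Y,Z]=0$. Your final identity is equivalent to the paper's $(H \cdot X) Z = (H_0 \cdot X_0) Z_0 + H \bdot (XZ) + Z_- \bdot Y$.

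There is, however, a genuine gap: you repeatedly assume that $H$ has left limits, writing $H_-$, $\Delta H$, and $(H \cdot X)_- = H_- \cdot X_-$. The hypothesis is only $H \in L(X)$, i.e.\ $H$ is predictable and $X$-integrable; such a process need not be l\`adl\`ag, so $H_-$ and $\Delta H$ are in general undefined. For the same reason the expression $X_- \bdot [H,Z]$ is meaningless, since $H$ is not assumed to be a semimartingale.

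The paper avoids this entirely. Using $H \cdot X - Y = H_0 \cdot X_0 + H \bdot X$, it computes
\[
(H \cdot X - Y)_- = (H \cdot X - Y) - \Delta(H \bdot X) = (H \cdot X - Y) - H \cdot \Delta X = H \cdot X_- - Y,
\]
which only uses that $H \bdot X$ and $H \cdot X$ are semimartingales (hence c\`adl\`ag with well-defined jumps). This yields $\Delta Y = H \cdot X_- - (H \cdot X)_-$, a difference of two manifestly predictable processes, so $Y = Y_- + \Delta Y$ is predictable. The vanishing of $\Delta Y \bdot Z$ then follows exactly along your lines, via $\Delta Y\, \Delta Z = \Delta [Y,Z] = 0$ and thinness of the support of $\Delta Y$. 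If you replace every occurrence of $\Delta H \cdot X_-$ by $\Delta Y$ and drop the unjustified identity $(H \cdot X)_- = H_- \cdot X_-$, your argument becomes essentially the paper's.
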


\begin{proof}
By (\ref{sf-app}) we have $H \cdot X \in \cals$. Using integration by parts (see \cite[Def. I.4.45]{Jacod-Shiryaev}) we have
\begin{equation}\label{sf-prod-1}
\begin{aligned}
(H \cdot X - Y) Z &= (H_0 \cdot X_0 - Y_0) Z_0 + (H \cdot X - Y)_- \bdot Z
\\ &\quad + Z_- \bdot (H \cdot X - Y ) + [H \cdot X - Y,Z].
\end{aligned}
\end{equation}
By (\ref{sf-app}) and Lemma \ref{lemma-bracket} we have $H \in L_{\rm var}([X,Z])$ and
\begin{align}\label{sf-prod-1a}
[H \cdot X - Y,Z] = [H \bdot X,Z] = H \bdot [X,Z].
\end{align}
Furthermore, we have
\begin{align*}
(H \cdot X - Y)_- &= (H \cdot X - Y) - \Delta ( H \cdot X - Y ) = ( H \cdot X - Y ) - \Delta ( H \bdot X )
\\ &= ( H \cdot X - Y ) - H \cdot \Delta X = H \cdot X_- - Y.
\end{align*}
Therefore, we have
\begin{align*}
\Delta Y = H \cdot X_- - (H \cdot X)_-, 
\end{align*}
showing that the process $Y$ is predictable. By \cite[Thm. I.4.52]{Jacod-Shiryaev}, the assumption $[Y,Z] = 0$ implies that
\begin{align*}
\sum_{s \leq t} \Delta Y_s \Delta Z_s = 0, \quad t \in \bbr_+,
\end{align*}
and hence we have $\Delta Y \in L(Z)$ with $\Delta Y \bdot Z = 0$. Therefore, we also have $Y \in L(Z)$ with $Y \bdot Z = Y_- \bdot Z$. Furthermore, by Lemma \ref{lemma-int-same-comp} and \cite[Thm. 4.7]{Shiryaev-Cherny} we have $H X_- \in L(Z \bbI_{\bbr^d})$, $H \in L(X_- \bdot Z)$ and
\begin{equation}\label{sf-prod-2}
\begin{aligned}
(H \cdot X - Y)_- \bdot Z &= (H \cdot X_- - Y) \bdot Z = (H X_-) \bdot (Z \bbI_{\bbr^d}) - Y \bdot Z
\\ &= H \bdot (X_- \bdot Z) - Y_- \bdot Z,
\end{aligned}
\end{equation}
where $X_- \bdot Z$ denotes the $\bbr^d$-valued process with components $(X_- \bdot Z)^i = X_-^i \bdot Z$ for each $i=1,\ldots,d$.
Furthermore, by Lemma \ref{lemma-ass} we have $H \in L(Z_- \bdot X)$ and
\begin{align}\label{sf-prod-3}
Z_- \bdot (H \cdot X - Y) = Z_- \bdot (H \bdot X) = H \bdot (Z_- \bdot X),
\end{align}
where $Z_- \bdot X$ denotes the $\bbr^d$-valued process with components $(Z_- \bdot X)^i = Z_- \bdot X^i$ for each $i=1,\ldots,d$.  Consequently, using (\ref{sf-prod-1})--(\ref{sf-prod-3}), integration by parts and the assumption $[Y,Z] = 0$, we deduce that
\begin{align*}
(H \cdot X) Z &= (H \cdot X - Y) Z + YZ
\\ &= (H_0 \cdot X_0 - Y_0) Z_0 + H \bdot \big( X_- \bdot Z + Z_- \bdot X + [X,Z] \big) - Y_- \bdot Z + YZ
\\ &= (H_0 \cdot X_0) Z_0 + H \bdot (XZ) + Z_- \bdot Y \in \calm_{\sigma},
\end{align*}
where $XZ \in \calm_{\sigma}^d$ denotes the $\bbr^d$-valued $\sigma$-martingale with components $X^i Z$ for each $i = 1,\ldots,d$.
\end{proof}

\section{The fundamental theorems of asset pricing revisited}\label{sec-FTAP-revisited}

In this section we review the fundamental theorems of asset pricing in our present framework for the given no-arbitrage concepts, and present some extensions. The mathematical framework is that of the previous Section \ref{sec-notation}, however, now we consider a finite market $\bbs = \{ S^1,\ldots,S^d \}$ with nonnegative semimartingales for some $d \in \bbn$. Let us first formulate a theorem that links weaker no-arbitrage concepts related to NUPBR that allow a richer modelling world than those we link later on to NFLVR.

\begin{theorem}\label{thm-Takaoka}
The following statements are equivalent:
\begin{enumerate}
\item[(i)] $\cali_1^+(\bbs)$ satisfies NUPBR. 

\item[(ii)] $\cali_1^+(\bbs)$ satisfies NAA$_1$.

\item[(iii)] $\cali_1^+(\bbs)$ satisfies NA$_1$.

\item[(iv)] There exists an ELMD $Z$ for $\bbs$ such that $Z \in \calm_{\loc}$.
\end{enumerate}
If the previous conditions are fulfilled, then $\cali_0^+(\bbs)$ satisfies NFL, NFLBR, NFLVR and NA.
\end{theorem}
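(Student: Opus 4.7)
The plan is to establish (i) $\Leftrightarrow$ (ii) $\Leftrightarrow$ (iii) $\Leftrightarrow$ (iv) and then deduce the final no-arbitrage claim. The three equivalences among (i), (ii), (iii) concern purely the boundedness properties of the convex set $\calb$ built from $\cali_1^+(\bbs)$ and are independent of the semimartingale structure; I would invoke the general equivalences on topological vector lattices established in Section~7 of \cite{Platen-Tappe-tvs}.

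For the easy direction (iv) $\Rightarrow$ (i), given an ELMD $Z \in \calm_{\loc}$, the key fact is that for every $X = 1 + \delta \bdot S \in \cali_1^+(\bbs)$, the product $XZ$ is a local martingale. This follows from integration by parts: the hypotheses $Z \in \calm_{\loc}$ and $S^i Z \in \calm_{\loc}$ together control the two resulting differentials, and Ansel--Stricker is then used to pass from $\sigma$-martingale to local martingale via $XZ \geq 0$. Results of this type are bundled into Appendix~\ref{app-ELMD}. As a nonnegative local martingale, $XZ$ is a supermartingale, so $\bbe[X_T Z_T] \leq Z_0$ uniformly in $X$. Because $Z_T > 0$ almost surely, this yields boundedness in probability of $\{X_T : X \in \cali_1^+(\bbs)\}$, that is, NUPBR.

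The substantive direction (iii) $\Rightarrow$ (iv) is the main obstacle. I would invoke the Takaoka--Schweizer FTAP \cite{Takaoka-Schweizer} (equivalently, Kardaras's deflator theorem \cite{Kardaras-12}) for nonnegative semimartingale markets; starting from NA$_1$, this produces an equivalent $\sigma$-martingale deflator, and nonnegativity of each $S^i Z$ upgrades $\sigma$-martingale to local martingale via Ansel--Stricker. To additionally force $Z \in \calm_{\loc}$, I apply the FTAP to the augmented market $\bbs \cup \{1\}$: since $d 1 = 0$, the augmentation leaves $\cali_1^+$ unchanged so NA$_1$ transfers, and the resulting deflator $Z$ automatically satisfies $1 \cdot Z = Z \in \calm_{\loc}$. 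The technical details are collected in Appendix~\ref{app-ELMD}.

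Finally, fix an ELMD $Z \in \calm_{\loc}$ and let $X \in \cali_0^+(\bbs)$. By the same Appendix~\ref{app-ELMD} result, $XZ$ is a nonnegative local martingale starting at $0$, hence a supermartingale of mean at most $0$, which forces $X_T Z_T = 0$ almost surely; as $Z_T > 0$, also $X_T = 0$ almost surely. Therefore $\calk_0 = \{0\}$ for $\cali_0^+(\bbs)$, so $\calc = -L_+^\infty$, which is weak-$^*$ closed (as the intersection of the half-spaces $\{\xi \in L^\infty : \bbe[\xi \eta] \leq 0\}$ for $\eta \in L_+^1$). Consequently $\overline{\calc}^* \cap L_+^\infty = \{0\}$, i.e.\ NFL holds, and NFLBR, NFLVR, and NA follow a fortiori.
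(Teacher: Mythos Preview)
Your proposal is correct and follows essentially the same route as the paper: the equivalences (i)$\Leftrightarrow$(ii)$\Leftrightarrow$(iii) via \cite[Thm.~7.25]{Platen-Tappe-tvs}, the hard direction via the Takaoka--Schweizer FTAP, and the final NFL claim via the supermartingale argument (packaged in the paper as Proposition~\ref{prop-ELMD-suff} together with \cite[Prop.~5.7]{Platen-Tappe-tvs}). One minor simplification: your augmentation trick $\bbs \cup \{1\}$ is unnecessary, because a \emph{strict $\sigma$-martingale density} in the sense of \cite{Takaoka-Schweizer} is by definition a strictly positive \emph{local martingale} $Z$ with $ZS \in \calm_{\sigma}$, so $Z \in \calm_{\loc}$ comes for free---this is precisely the identification the paper makes when citing \cite[Thm.~2.6]{Takaoka-Schweizer}.
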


Recall that the securities in the set $\cali_1^+(\bbs)$ represent the nonnegative sums of the initial security value one and integrals of strategies with respect to primary security accounts. Important is that NUPBR is equivalent to the existence of an ELMD which is a local martingale. 

\begin{proof}[Proof of Theorem \ref{thm-Takaoka}]
(i) $\Leftrightarrow$ (ii) $\Leftrightarrow$ (iii): See, for example \cite[Thm. 7.25]{Platen-Tappe-tvs}.

\noindent(i) $\Leftrightarrow$ (iv): Noting that a process $Z$ with $Z,Z_- > 0$ is an ELMD $Z$ for $\bbs$ with $Z \in \calm_{\loc}$ if and only if it is a strict $\sigma$-martingale density in the sense of \cite{Takaoka-Schweizer}, this equivalence follows from \cite[Thm. 2.6]{Takaoka-Schweizer}.

\noindent The additional statement is a consequence of Proposition \ref{prop-ELMD-suff} and \cite[Prop. 5.7]{Platen-Tappe-tvs}.
\end{proof}

Let us now link no-arbitrage concepts related to the stronger widely assumed NFLVR condition.

\begin{theorem}\label{thm-FTAP-DS}
The following statements are equivalent:
\begin{enumerate}
\item[(i)] $\cali_0^{\adm}(\bbs)$ satisfies NFL.

\item[(ii)] $\cali_0^{\adm}(\bbs)$ satisfies NFLBR.

\item[(iii)] $\cali_0^{\adm}(\bbs)$ satisfies NFLVR.

\item[(iv)] There exists an ELMD $Z$ for $\bbs$ such that $Z$ is a martingale.

\item[(v)] There exists an ELMM $\bbq \approx \bbp$ on $(\Omega,\calf_T)$ for $\bbs$.
\end{enumerate}
If the previous conditions are fulfilled, then $\cali_0^{\adm}(\bbs)$ satisfies NA, and $\cali_1^{\adm}(\bbs)$ satisfies NUPBR, NAA$_1$ and NA$_1$.
\end{theorem}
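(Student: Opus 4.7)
The plan is to combine three ingredients: the trivial implications among (i)--(iii) coming from topological inclusions, the equivalence of (iv) and (v) by a standard change-of-measure argument, and the hard equivalences linking (iii)--(v) via the Delbaen--Schachermayer FTAP of \cite{DS-98}, together with the nonnegativity of $\bbs$ to upgrade $\sigma$-martingales to local martingales.

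For the first block I use the inclusions $\overline{\calc} \subseteq \widetilde{\calc}^{\,*} \subseteq \overline{\calc}^{\,*}$, which hold because norm convergence in $L^{\infty}$ implies $\sigma(L^{\infty},L^1)$-convergence and the sequential weak-$^*$ closure lies inside the full weak-$^*$ closure. Hence (i) $\Rightarrow$ (ii) $\Rightarrow$ (iii) is automatic, since a smaller set intersects $L^\infty_+$ trivially whenever a larger one does. For (iv) $\Leftrightarrow$ (v), a strictly positive martingale ELMD $Z$ defines $\bbq \approx \bbp$ via $d\bbq/d\bbp := Z_T/Z_0$, while the density process of an ELMM $\bbq\approx\bbp$ is a strictly positive martingale; by \cite[III.3.9]{Jacod-Shiryaev}, the $\bbq$-local martingale property of each $S^i$ is equivalent to the $\bbp$-local martingale property of $S^i Z$.

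The main obstacle is the direction (iii) $\Rightarrow$ (v): this is the general semimartingale FTAP of \cite{DS-98}, which asserts that NFLVR on $\cali_0^{\adm}(\bbs)$ guarantees an equivalent $\sigma$-martingale measure $\bbq \approx \bbp$ for $\bbs$. Since each $S^i \geq 0$ is a nonnegative $\bbq$-$\sigma$-martingale, the Ansel--Stricker theorem (cf.\ \cite[III.6.41]{Jacod-Shiryaev}) promotes it to a $\bbq$-local martingale, so $\bbq$ is already an ELMM. To close the loop, I invoke the easy direction of the same FTAP: an ELMM (a fortiori, an E$\mathit{\Sigma}$MM) forces $\overline{\calc}^{\,*} \cap L^{\infty}_+ = \{0\}$, since every element of $\overline{\calc}^{\,*}$ has nonpositive $\bbq$-expectation, which yields (i).

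For the concluding statements, NA is weaker than NFLVR and is therefore immediate. For NUPBR on $\cali_1^{\adm}(\bbs)$, pick a martingale ELMD $Z$ from (iv); every $X \in \cali_1^{\adm}(\bbs)$ with $X \geq -a$ makes $(X+a)Z$ a nonnegative local martingale, hence a supermartingale, so $\bbe[(X_T+a)Z_T] \leq (1+a)Z_0$. A standard truncation in $Z_T$ (using $Z_T > 0$ a.s.) then shows that the cone $\calb$ associated with $\calk_1 := \{X_T : X \in \cali_1^{\adm}(\bbs)\}$ is bounded in probability. The equivalences NUPBR $\Leftrightarrow$ NAA$_1$ $\Leftrightarrow$ NA$_1$ for $\cali_1^{\adm}(\bbs)$ then follow from the abstract no-arbitrage framework of \cite{Platen-Tappe-tvs}.
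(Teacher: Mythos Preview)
Your proof is correct and follows essentially the same route as the paper: the chain (i)$\Rightarrow$(ii)$\Rightarrow$(iii) via topological inclusions, (iv)$\Leftrightarrow$(v) via the density process, (iii)$\Rightarrow$(v) via the Delbaen--Schachermayer theorem \cite{DS-98} together with Ansel--Stricker to pass from $\sigma$-martingales to local martingales (the paper uses its Lemma~\ref{lemma-M-sigma-loc} here), and (v)$\Rightarrow$(i) by the nonpositive-$\bbq$-expectation argument (the paper packages this as Proposition~\ref{prop-ELMD-suff}).

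The one noteworthy difference is in the supplementary NUPBR claim for $\cali_1^{\adm}(\bbs)$. The paper simply cites \cite[Prop.~7.27 and Thm.~7.25]{Platen-Tappe-tvs}, deriving NUPBR abstractly from NFLVR. You instead give a direct deflator argument: for $X\in\bbi_1^{\adm}(\bbs)$ with $X\geq -a$, the process $(X+a)Z$ is a nonnegative local martingale, hence a supermartingale, and since $Z$ is a \emph{true} martingale the $a$-dependence cancels to yield the uniform bound $\bbe[X_T Z_T]\leq Z_0$, from which boundedness in probability follows by the truncation in $Z_T$. This is a valid and self-contained alternative; just note that the step ``$(X+a)Z$ is a local martingale'' implicitly uses that an ELMD $Z\in\calm_{\loc}$ for $\bbs$ is automatically an ELMD for all of $\bbi^{\adm}(\bbs)$ (the paper's Proposition~\ref{prop-Z-deflator}), which you may want to make explicit. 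Also, the Ansel--Stricker reference in \cite{Jacod-Shiryaev} is not III.6.41; the paper cites \cite[Cor.~3.5]{Ansel-Stricker} directly.
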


Recall that the set $\cali_0^{\adm}(\bbs)$ relates to admissible securities that are the  integral of strategies with respect to primary security accounts and can get negative down to a certain level. Important is that NFLVR is equivalent to the existence of a deflator that is a true martingale, which underpins risk-neutral pricing.

\begin{proof}[Proof of Theorem \ref{thm-FTAP-DS}]
(i) $\Rightarrow$ (ii) $\Rightarrow$ (iii): See, for example \cite[Prop. 5.7]{Platen-Tappe-tvs}.

\noindent(iv) $\Leftrightarrow$ (v): This equivalence is obvious.

\noindent(iii) $\Leftrightarrow$ (v): Noting that the semimartingales $S^i$, $i=1,\ldots,d$ are nonnegative, by Lemma \ref{lemma-M-sigma-loc} an equivalent probability measure $\bbq \approx \bbp$ is an ELMM for $\bbs$ if and only if it is an E$\Sigma$MM for $\bbs$. Therefore, the stated equivalence is a consequence of \cite[Thm. 1.1]{DS-98}.

\noindent(v) $\Rightarrow$ (i): This implication follows from Proposition \ref{prop-ELMD-suff}.

\noindent The additional statements follow from \cite[Prop. 5.7, Prop. 7.27 and Thm. 7.25]{Platen-Tappe-tvs}.
\end{proof}

\section{The main results and their consequences}\label{sec-main}

In this section we present our main results and show some of their consequences. We emphasize that the proofs of the main results presented in this section are straightforward, and essentially rely on the FTAPs revisited in Section \ref{sec-FTAP-revisited}. As in Section \ref{sec-FTAP-revisited}, we consider a finite market $\bbs = \{ S^1,\ldots,S^d \}$ with nonnegative semimartingales for some $d \in \bbn$. Recall that a semimartingale $Z$ with $Z,Z_- > 0$ is called a \emph{multiplicative special semimartingale} if it admits a multiplicative decomposition
\begin{align}\label{mult-decomp}
Z = DC
\end{align}
with a local martingale $D > 0$ and a predictable c\`{a}dl\`{a}g process $C > 0$ of finite variation. Then by \cite[Lemma III.3.6]{Jacod-Shiryaev} we also have $D_- > 0$, and hence $C_- > 0$.

\begin{theorem}\cite[Thm. II.8.21]{Jacod-Shiryaev}\label{thm-mult-decomp}
For a semimartingale $Z$ with $Z,Z_- > 0$ the following statements are equivalent:
\begin{enumerate}
\item[(i)] $Z$ is a multiplicative special semimartingale.

\item[(ii)] $Z$ is a special semimartingale.
\end{enumerate}
If the previous conditions are fulfilled, then the processes $D$ and $C$ appearing in the multiplicative decomposition (\ref{mult-decomp}) are unique up to an evanescent set.
\end{theorem}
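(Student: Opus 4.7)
The plan is to use the stochastic logarithm $L := (1/Z_-) \bdot Z$, well defined since $Z_- > 0$ and satisfying $Z = Z_0 \cale(L)$, to pass between additive decompositions of $L$ and multiplicative decompositions of $Z$ via Yor's formula $\cale(X)\cale(Y) = \cale(X+Y+[X,Y])$.

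For (i) $\Rightarrow$ (ii), I would start from integration by parts, which gives $Z - Z_0 = C_- \bdot D + D_- \bdot C + [D,C]$. The first term lies in $\calm_{\loc}$ (integrand locally bounded predictable, integrator a local martingale) and the second is predictable of finite variation. Since $C$ is predictable, the pure-jump process $[D,C] = \sum_{s\leq\bullet}\Delta D_s \Delta C_s$ has its jumps at predictable times only, and one verifies it is of locally integrable variation, hence admits a dual predictable projection $\langle D,C\rangle$. Regrouping $Z - Z_0 = \bigl(C_- \bdot D + [D,C] - \langle D,C\rangle\bigr) + \bigl(D_- \bdot C + \langle D,C\rangle\bigr)$ then yields a canonical additive decomposition of $Z$, so $Z$ is special.

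For (ii) $\Rightarrow$ (i), let $Z = Z_0 + M + A$ be the canonical decomposition and split $L = L^M + L^A$ accordingly, with $L^M := (1/Z_-)\bdot M \in \calm_{\loc}$ and $L^A := (1/Z_-)\bdot A$ predictable of finite variation. By Yor's formula, producing a multiplicative decomposition $Z = Z_0\cale(N)\cale(B)$ with $N \in \calm_{\loc}$ and $B$ predictable FV is equivalent to finding $N,B$ with $L = N + B + [N,B]$. Matching continuous parts forces $B^c = (L^A)^c$, while at each predictable jump time $T$ the remaining conditions reduce to the algebraic system
\begin{equation*}
(1+\Delta N_T)(1+\Delta B_T) \;=\; 1+\Delta L_T,\qquad \Delta B_T \;=\; \Delta L^A_T \;=\; {}^{p}(\Delta L)_T,
\end{equation*}
which has a unique predictable solution. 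The main obstacle is verifying the positivity constraints $1+\Delta N_T > 0$ and $1+\Delta B_T > 0$; both follow from $1+\Delta L > 0$ (implied by $Z,Z_- > 0$) together with the fact that the predictable projection of a strictly positive integrable random variable is strictly positive almost surely. Setting $D := Z_0\cale(N)$ and $C := \cale(B)$ then yields the required multiplicative decomposition.

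For uniqueness, any two multiplicative decompositions $Z = D_1 C_1 = D_2 C_2$ (normalized by $C_{i,0} = 1$) produce, via the integration-by-parts computation of the first step, two canonical additive decompositions of the special semimartingale $Z$. These must agree by the uniqueness of the canonical decomposition, and unwinding the algebraic construction above then forces $D_1 = D_2$ and $C_1 = C_2$ up to an evanescent set.
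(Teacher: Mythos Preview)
The paper does not give its own proof of this theorem; it simply cites \cite[Thm.~II.8.21]{Jacod-Shiryaev}. Your sketch is essentially the argument found there, built on the stochastic logarithm $L = (1/Z_-)\bdot Z$ and the identity $Z = Z_0\,\cale(L)$, so in that sense the approaches coincide.

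Two remarks on the details. First, in (i) $\Rightarrow$ (ii) you work harder than necessary: when $C$ is predictable of finite variation and $D \in \calm_{\loc}$, the bracket $[D,C]$ is \emph{already} a local martingale (Yoeurp's lemma, \cite[Prop.~I.4.49.c]{Jacod-Shiryaev}; the paper itself uses exactly this fact in the proof of Lemma~\ref{lemma-prod-M-A}). Hence $Z - Z_0 = (C_-\bdot D + [D,C]) + D_-\bdot C$ is directly the canonical decomposition, with no need to introduce or justify the existence of $\langle D,C\rangle$. Your route via compensation is not wrong, but the claim ``one verifies it is of locally integrable variation'' is precisely the nontrivial step, and it is most cleanly obtained as a consequence of $[D,C]\in\calm_{\loc}\cap\calv\subset\cala_{\loc}$ rather than by a separate estimate.

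Second, your sketch for (ii) $\Rightarrow$ (i) is correct in outline but compresses the construction of $N$ and $B$ quite a bit. In the reference the predictable finite-variation part $B$ is built explicitly (continuous part plus a sum over predictable jump times determined by $1+\Delta B_T = {}^p(1+\Delta L)_T$), and one must also check that the totally inaccessible jumps of $L$ are absorbed entirely into $N$. Your positivity argument for $1+\Delta B_T$ via the predictable projection of the strictly positive variable $1+\Delta L_T = Z_T/Z_{T-}$ is the right idea; just be sure to note that the conditional expectation is well defined because $L$ is special, so $\Delta L_T$ is integrable on $\{T<\infty\}$ for predictable $T$.
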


In the situation of Theorem \ref{thm-mult-decomp} we call $D$ the \emph{local martingale part} and $C$ the \emph{finite variation part} of the multiplicative decomposition (\ref{mult-decomp}).

\begin{lemma}\label{lemma-extended-B}
Let $Z = D B^{-1}$ be a multiplicative special semimartingale with a local martingale part $D \in \calm_{\loc}$ and a savings account $B$. Then the following statements are equivalent:
\begin{enumerate}
\item[(i)] $Z$ is an ELMD for $\bbs$.

\item[(ii)] $Z$ is an ELMD for $\bbs \cup \{ B \}$.
\end{enumerate}
\end{lemma}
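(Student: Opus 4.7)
The proof should be essentially immediate, so my plan is to exhibit both implications with minimal fuss.

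The direction (ii) $\Rightarrow$ (i) is trivial from the definition of ELMD: since $\bbs \subset \bbs \cup \{B\}$, the local martingale property of $XZ$ for every $X \in \bbs \cup \{B\}$ automatically yields the same for every $X \in \bbs$. (The strict positivity conditions $Z, Z_- > 0$ are part of the standing hypothesis that $Z$ is a multiplicative special semimartingale, so they transfer trivially.)

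For (i) $\Rightarrow$ (ii), I would observe that $Z$ is already an ELMD for $\bbs$ by assumption, so the only new requirement in passing to $\bbs \cup \{B\}$ is that $BZ$ be a local martingale. But the multiplicative decomposition gives
\begin{align*}
B Z = B \cdot D B^{-1} = D,
\end{align*}
and $D \in \calm_{\loc}$ by hypothesis. Hence $BZ$ is a local martingale, and (ii) follows.

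I expect no real obstacle here; the content of the lemma is simply to record that augmenting the market by the predictable finite-variation ``savings account'' factor $B$ extracted from the multiplicative decomposition costs nothing — the deflated savings account is precisely the local martingale part $D$. The only minor point worth noting is that we do not need to invoke Theorem \ref{thm-mult-decomp} or the uniqueness of the multiplicative decomposition; the identity $BZ = D$ is a direct algebraic consequence of the given decomposition $Z = D B^{-1}$.
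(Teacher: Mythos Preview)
Your proof is correct and matches the paper's own argument exactly: the paper simply notes that $BZ = D \in \calm_{\loc}$ and declares the proof immediate.
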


\begin{proof}
Since $BZ = D \in \calm_{\loc}$, the proof is immediate.
\end{proof}

Now, we are ready to state our result about the no-arbitrage concept NUPBR under the for practice important two natural constraints of nonnegativity and self-financing, which give access to a wide range of financial market models.

\begin{theorem}\label{thm-FTAP}
The following statements are equivalent:
\begin{enumerate}
\item[(i)] There exists a savings account $B$ such that $\calp_{\sfi,1}^+(\bbs \cup \{ B \})$ satisfies NUPBR.

\item[(ii)] There exists a savings account $B$ such that $\calp_{\sfi,1}^+(\bbs \cup \{ B \})$ satisfies NAA$_1$.

\item[(iii)] There exists a savings account $B$ such that $\calp_{\sfi,1}^+(\bbs \cup \{ B \})$ satisfies NA$_1$.

\item[(iv)] There exist a savings account $B$ and an ELMD $D$ for $\bbs B^{-1}$ such that $D \in \calm_{\loc}$.

\item[(v)] There exists an ELMD $Z$ for $\bbs$ which is a multiplicative special semimartingale.
\end{enumerate}
If the previous conditions are fulfilled, then the savings accounts $B$ in (i)--(iv) can be chosen to be equal, and in (v) we can choose an ELMD $Z$ for $\bbs$ with multiplicative decomposition $Z = D B^{-1}$ with this savings account $B$. Furthermore $\calp_{\sfi,0}^+(\bbs \cup \{ B \})$ satisfies NA.
\end{theorem}

\begin{proof}
Let $B$ be an arbitrary savings account.

\noindent(i) $\Leftrightarrow$ (ii) $\Leftrightarrow$ (iii): See, for example \cite[Thm. 7.25]{Platen-Tappe-tvs}.

\noindent(i) $\Rightarrow$ (iv): By Proposition \ref{prop-NA-concepts-equiv} the set $\cali_1^+(\bbs B^{-1})$ satisfies NUPBR. Hence, by Theorem \ref{thm-Takaoka} there exists an ELMD $D$ for $\bbs B^{-1}$ such that $D \in \calm_{\loc}$.

\noindent(iv) $\Rightarrow$ (v): It is obvious that $Z = D B^{-1}$ is an ELMD for $\bbs$.

\noindent(v) $\Rightarrow$ (i): By Lemma \ref{lemma-extended-B} the process $Z$ is also an ELMD for $\bbs \cup \{ B \}$. Hence, by Proposition \ref{prop-ELMD-suff} the set $\calp_{\sfi,1}^+(\bbs \cup \{ B \})$ satisfies NUPBR.

\noindent The additional statement follows from \cite[Prop. 7.28]{Platen-Tappe-tvs}.
\end{proof}

Important in the previous result is that NUPBR is equivalent to NAA$_1$ and NA$_1$, as well as the existence of a deflator which is a multiplicative special semimartingale.

The previous result is also in line with the no-arbitrage concept \emph{Num\'{e}raire-Independent No-Arbitrage (NINA)} from \cite{Herdegen}, which means (in the mathematical framework of \cite{Herdegen}) that the zero strategy is strongly maximal for the strategy cone of all self-financing, undefaultable strategies; see \cite[Def. 3.21]{Herdegen} for the precise definition. Verbally, the definition says that every nonzero (nonnegative) contingent claim must have a positive (superreplication) price. Moreover, as the name NINA suggests, it does not assume the existence of a num\'{e}raire strategy; see the paragraphs after \cite[Def. 3.21]{Herdegen}, where the relations to other no-arbitrage concepts are discussed as well. For the proof of the following result we will use \cite[Thm. 4.10]{Herdegen}, where NINA is characterized for so-called num\'{e}raire markets. In (\ref{market-NINA}) below, $\cals$ denotes the space of all semimartingales.

\begin{theorem}
The following statements are equivalent:
\begin{enumerate}
\item[(i)] There exists a savings account $B$ such that the market 
\begin{align}\label{market-NINA}
\{ (S^1 Z,\ldots,S^d Z,B Z) : Z \in \cals \text{ with } Z,Z_- > 0  \}
\end{align}
satisfies NINA.

\item[(ii)] There exists an ELMD $Z$ for $\bbs$ which is a multiplicative special semimartingale.
\end{enumerate}
If the previous conditions are fulfilled, then we can choose an ELMD $Z$ for $\bbs$ with multiplicative decomposition $Z = D B^{-1}$, where $B$ is a savings account as in (i).
\end{theorem}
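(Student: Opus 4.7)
The plan is to combine Theorem \ref{thm-FTAP} with \cite[Thm.~4.10]{Herdegen}, which characterizes NINA for num\'{e}raire markets. By Theorem \ref{thm-FTAP}, statement (ii) is equivalent to the existence of a savings account $B$ such that $\calp_{\sfi,1}^+(\bbs \cup \{ B \})$ satisfies NUPBR, and in that case one may take an ELMD of the form $Z = D B^{-1}$ with $D$ a positive local martingale. Accordingly, it suffices to show that for a fixed savings account $B$, the NINA condition in (i) is equivalent to NUPBR of $\calp_{\sfi,1}^+(\bbs \cup \{ B \})$.

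The first step is to reduce NINA for the large collection (\ref{market-NINA}) to NINA for a single num\'{e}raire market. For any positive semimartingale $Z$ with $Z_- > 0$, the tuple $(S^1 Z, \ldots, S^d Z, BZ)$ is just the base market $(S^1, \ldots, S^d, B)$ redenominated by multiplication with $Z$. By the num\'{e}raire invariance principle (see Appendix \ref{app-trans}), self-financing strategies and nonnegative portfolios transform covariantly under such redenominations, and hence arbitrage opportunities in one denomination correspond to arbitrage opportunities in any other. This is precisely why NINA is num\'{e}raire-independent: NINA for the full collection (\ref{market-NINA}) is equivalent to NINA for any single member of it. The natural choice is $Z := B^{-1}$, yielding the market $\{ S^1 B^{-1}, \ldots, S^d B^{-1}, 1 \}$ whose last component $1$ serves as a num\'{e}raire in the sense of \cite{Herdegen}.

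Applying \cite[Thm.~4.10]{Herdegen} to this num\'{e}raire market, NINA is equivalent to NUPBR of its nonnegative self-financing portfolios. By Proposition \ref{prop-NA-concepts-equiv} (the same rescaling invariance already used in the proof of Theorem \ref{thm-FTAP-plus}), NUPBR for the discounted market $\bbs B^{-1} \cup \{ 1 \}$ is equivalent to NUPBR for $\calp_{\sfi,1}^+(\bbs \cup \{ B \})$. Combined with Theorem \ref{thm-FTAP-plus}, this produces statement (ii) together with the claimed multiplicative decomposition $Z = D B^{-1}$, and the reverse direction follows by running the chain of equivalences backwards.

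The main obstacle is the reduction carried out in the second paragraph. Making the num\'{e}raire-independence of NINA rigorous requires a careful matching of Herdegen's strategy cones of self-financing, undefaultable strategies with our self-financing strategies in $\Delta_{\sfi}(\bbs \cup \{ B \})$ under the rescaling $X \mapsto XZ$; once this matching is verified, the remaining steps reduce to direct applications of previously established results and the cited theorem.
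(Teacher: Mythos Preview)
Your approach is correct but takes a longer route than the paper. The paper's proof is direct: it first checks that the set (\ref{market-NINA}) is a num\'{e}raire market in Herdegen's sense (the constant strategy $e_{d+1}=(0,\ldots,0,1)$ is a num\'{e}raire strategy), and then applies \cite[Thm.~4.10]{Herdegen} in the form \emph{NINA $\Leftrightarrow$ existence of a $\bbp$-local-martingale representative}. A local-martingale representative is precisely a tuple $(S^1 Z,\ldots,S^d Z,BZ)$ with all components in $\calm_{\loc}$, which says exactly that $Z$ is an ELMD for $\bbs$ and that $D:=BZ\in\calm_{\loc}$, i.e.\ $Z=DB^{-1}$ is a multiplicative special semimartingale. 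This handles both directions and the additional claim in one stroke, without invoking Theorem~\ref{thm-FTAP} or any NUPBR statement.

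You instead pass through NUPBR and Theorem~\ref{thm-FTAP}. That works, but two comments are in order. First, what you call the ``main obstacle''---reducing NINA for the whole collection (\ref{market-NINA}) to a single representative---is not actually an obstacle: in Herdegen's framework the set (\ref{market-NINA}) \emph{is} a single market (the equivalence class of $(S^1,\ldots,S^d,B)$ under exchange-rate processes), and NINA is defined for that object; the paper simply verifies Herdegen's Definitions~2.3 and~2.10. No separate num\'{e}raire-invariance argument for strategy cones is required. Second, your use of \cite[Thm.~4.10]{Herdegen} as ``NINA $\Leftrightarrow$ NUPBR'' invokes a different equivalence in that theorem than the paper does; both are available for num\'{e}raire markets, but the local-martingale-representative formulation gives the ELMD and its multiplicative decomposition immediately, whereas your route recovers them only after appealing back to Theorem~\ref{thm-FTAP-plus}.
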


\begin{proof}
First of all, note that for every saving account $B$ we have
\begin{align*}
\inf_{0 \leq t \leq T} \bigg( \sum_{i=1}^d |S_t^i| + |B_t| \bigg) \geq \inf_{0 \leq t \leq T} B_t > 0 \quad \text{$\bbp$-almost surely,}
\end{align*}
showing that (\ref{market-NINA}) is a market in the sense of \cite[Def. 2.3]{Herdegen}. Furthermore, the market (\ref{market-NINA}) is a num\'{e}raire market in the sense of \cite[Def. 2.10]{Herdegen} because $e_{d+1} = (0,\ldots,0,1)$ is a num\'{e}raire strategy. This ensures that we may apply \cite[Thm. 4.10]{Herdegen} in the sequel.

\noindent(i) $\Rightarrow$ (ii): By \cite[Thm. 4.10]{Herdegen} there exists a semimartingale $Z$ with $Z,Z_- > 0$ such that $(S^1 Z,\ldots,S^d Z,B Z)$ is a $\bbp$-local-martingale representative; that is, we have $S^1 Z,\ldots,S^d Z,B Z \in \calm_{\loc}$, showing that $Z$ is an ELMD for $\bbs$. Furthermore, the process $Z$ is a multiplicative special semimartingale with multiplicative decomposition $Z = D B^{-1}$, where $D = BZ$.

\noindent(ii) $\Rightarrow$ (i): There exist a local martingale $D \in \calm_{\loc}$ with $D > 0$ and a savings account $B$ such that $Z = D B^{-1}$. Since $Z$ is an ELMD for $\bbs$, we have $S^1 Z, \ldots, S^d Z, BZ \in \calm_{\loc}$; that is, the process $(S^1 Z, \ldots, S^d Z, BZ)$ is $\bbp$-local-martingale representative in the terminology of \cite{Herdegen}. Therefore, by \cite[Thm. 4.10]{Herdegen} the market (\ref{market-NINA}) satisfies NINA.
\end{proof}

\begin{remark}
A semimartingale $Z$ with $Z,Z_- > 0$ as in (\ref{market-NINA}) is also called an \emph{exchange rate process}; see \cite[Def. 2.1]{Herdegen}.
\end{remark}

Next, we present our result concerning the no-arbitrage concept NFLVR, which is more restrictive than NUPBR.

\begin{theorem}\label{thm-FTAP-classical}
The following statements are equivalent:
\begin{enumerate}
\item[(i)] There exists a savings account $B$ such that $B$ and $B^{-1}$ are bounded, and $\calp_{\sfi,0}^{\adm}(\bbs \cup \{ B \})$ satisfies NFL.

\item[(ii)] There exists a savings account $B$ such that $B$ and $B^{-1}$ are bounded, and $\calp_{\sfi,0}^{\adm}(\bbs \cup \{ B \})$ satisfies NFLBR.

\item[(iii)] There exists a savings account $B$ such that $B$ and $B^{-1}$ are bounded, and $\calp_{\sfi,0}^{\adm}(\bbs \cup \{ B \})$ satisfies NFLVR.

\item[(iv)] There exist a savings account $B$ such that $B$ and $B^{-1}$ are bounded, and an ELMM $\bbq \approx \bbp$ for $\bbs B^{-1}$.

\item[(v)] There exists an ELMD $Z$ for $\bbs$ which is a multiplicative special semimartingale such that the local martingale part is a true martingale, and the finite variation part and its inverse are bounded.
\end{enumerate}
If the previous conditions are fulfilled, then the savings accounts $B$ in (i)--(iv) can be chosen to be equal, and in (v) we can choose an ELMD $Z$ for $\bbs$ with multiplicative decomposition $Z = D B^{-1}$ with this savings account $B$. Furthermore $\calp_{\sfi,0}^{\adm}(\bbs \cup \{ B \})$ satisfies NA, and $\calp_{\sfi,1}^{\adm}(\bbs \cup \{ B \})$ satisfies NA$_1$, NAA$_1$ and NUPBR.
\end{theorem}

\begin{proof}
Let $B$ be a savings account $B$ such that $B$ and $B^{-1}$ are bounded.

\noindent(i) $\Rightarrow$ (ii) $\Rightarrow$ (iii): See, for example \cite[Prop. 5.7]{Platen-Tappe-tvs}.

\noindent(iii) $\Rightarrow$ (v): By Proposition \ref{prop-NA-concepts-equiv} the set $\cali_0^{\adm}(\bbs B^{-1})$ satisfies NFLVR. Hence, by Theorem \ref{thm-FTAP-DS} there exists an ELMD $D$ for $\bbs B^{-1}$ such that $D$ is a martingale. Therefore, the process $Z = D B^{-1}$ is an ELMD for $\bbs$.

\noindent(v) $\Rightarrow$ (iv): Note that $D$ is an ELMD for $\bbs B^{-1}$. Let $\bbq \approx \bbp$ be the equivalent probability measure on $(\Omega,\calf_T)$ with Radon-Nikodym derivative $\frac{d \bbq}{d \bbp} = D_T / D_0$. Then $\bbq$ is an ELMM for $\bbs B^{-1}$.

\noindent(iv) $\Rightarrow$ (i): By Proposition \ref{prop-ELMD-suff} the set $\cali_0^{\adm}(\bbs B^{-1})$ satisfies NFL. Thus, by Proposition \ref{prop-NA-concepts-equiv} the set $\calp_{\sfi,0}^{\adm}(\bbs \cup \{ B \})$ satisfies NFL as well.

\noindent The remaining statements follow from \cite[Prop. 5.7, Prop. 7.29 and Thm. 7.25]{Platen-Tappe-tvs}.
\end{proof}

\begin{remark}
In Theorem  \ref{thm-FTAP-classical} the boundedness assumption concerning the savings account is needed since we use Proposition \ref{prop-NA-concepts-equiv} concerning market transformations for the proof. We refer to Remark \ref{remark-B-bounded}, where the use of the boundedness assumption is explained.
\end{remark}

Important in the previous result to note is that NFLVR is equivalent to the existence of an ELMM $\bbq \approx \bbp$, which underpins the practice of risk-neutral pricing. Furthermore, when NFLVR is requested, then NUPBR still holds. However, this is, in general, not true the other way around. Indeed, if there exists a deflator as in Theorem \ref{thm-FTAP}, which is a multiplicative special semimartingale, then NFLVR only holds true if the local martingale part is a true martingale, which then gives rise to the aforementioned measure change. This means that one is not giving anything away in modeling freedom when assuming NUPBR, but restricts the phenomena one can describe, when postulating NFLVR for a market model, which is what most of the literature has been assuming.

Before we proceed with the question how to extend a market by a savings account, let us prepare an auxiliary result.

\begin{lemma}\label{lemma-prod-M-A}
Let $M \in \calm_{\loc}$ be a local martingale with $M > 0$, and let $A$ be a predictable c\`{a}dl\`{a}g process of finite variation such that $MA \in \calm_{\loc}$. Then we have $A = A_0$ up to an evanescent set.
\end{lemma}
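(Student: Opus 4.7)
The plan is to apply integration by parts to $MA$, identify three of the four resulting terms as local martingales, and then exploit the uniqueness of the canonical decomposition of a special semimartingale: a predictable finite variation process that is also a local martingale starting at $0$ must be evanescent.

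First, I would write down the integration by parts formula
\begin{align*}
MA = M_0 A_0 + M_- \bdot A + A_- \bdot M + [M, A].
\end{align*}
The term $A_- \bdot M$ is a local martingale because $A_-$ is left-continuous and therefore locally bounded and predictable. For $[M, A]$, the fact that $A$ has no continuous local martingale part (it being of finite variation) gives $[M, A] = \sum_{s \leq \bullet} \Delta M_s \Delta A_s$; since $A$ is predictable, so is $\Delta A$, and a short argument using $\bbe[\Delta M_\tau \mid \calf_{\tau-}] = 0$ at each predictable stopping time $\tau$ (equivalently, Yoeurp's lemma, cf.\ \cite[Thm.\ I.4.49]{Jacod-Shiryaev}) shows that $[M, A]$ is a local martingale as well.

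Rearranging and using that $MA$ is a local martingale by hypothesis, I then obtain
\begin{align*}
M_- \bdot A = (MA - M_0 A_0) - A_- \bdot M - [M, A] \in \calm_{\loc}.
\end{align*}
On the other hand, $M_-$ is predictable and $A$ is predictable of finite variation, so the Stieltjes integral $M_- \bdot A$ is itself a predictable process of finite variation starting at $0$. By the uniqueness of the canonical decomposition of a special semimartingale, such a process must be evanescent, so $M_- \bdot A$ is indistinguishable from $0$.

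To conclude, I would invoke the fact that a strictly positive local martingale also satisfies $M_- > 0$ (as the excerpt already notes via \cite[Lemma III.3.6]{Jacod-Shiryaev}). The identity $\int_0^t M_{s-}\,dA_s = 0$ for every $t$ then forces the signed Stieltjes measure $dA(\omega)$ to vanish on $[0,T]$ pathwise, whence $A = A_0$ up to an evanescent set. I expect the only delicate step to be the verification that $[M, A]$ is a local martingale; everything else is routine semimartingale calculus combined with the uniqueness of the special semimartingale decomposition.
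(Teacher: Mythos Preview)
Your proposal is correct and follows essentially the same route as the paper: integration by parts, the fact that $[M,A]\in\calm_{\loc}$ (the paper cites \cite[Prop.~I.4.49.c]{Jacod-Shiryaev} directly rather than spelling out the Yoeurp argument), hence $M_-\bdot A$ is a predictable finite-variation local martingale starting at $0$ and therefore vanishes (the paper invokes \cite[Cor.~I.3.16]{Jacod-Shiryaev} for this step), and finally $M_->0$ forces $A=A_0$. The only differences are in the level of detail and the specific references used for the two key facts.
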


\begin{proof}
Using integration by parts (see \cite[Def. I.4.45]{Jacod-Shiryaev}) we have
\begin{align*}
MA = M_0 A_0 + M_- \bdot A + A_- \bdot M + [M,A].
\end{align*}
By \cite[Prop. I.4.49.c]{Jacod-Shiryaev} we have $[M,A] \in \calm_{\loc}$, and hence $M_- \bdot A \in \calm_{\loc} \cap \calv$. Since this process is also predictable, by \cite[Cor. I.3.16]{Jacod-Shiryaev} we deduce that $M_- \bdot A = 0$. Since $M > 0$, by \cite[Lemma III.3.6]{Jacod-Shiryaev} we also have $M_- > 0$, and it follows that $A = A_0$ up to an evanescent set.
\end{proof}

As we have seen, NUPBR represents, in the described sense, the least restrictive no-arbitrage concept, and we may assume it for our next result. In the previous results, the savings account $B$ could already be contained in the market $\bbs$. As the next result shows, an arbitrage free market can have at most one savings account.

\begin{proposition}
Suppose that $\calp_{\sfi,1}^+(\bbs)$ satisfies NUPBR (or, equivalently, NAA$_1$ or NA$_1$), and let $B,\hat{B} \in \bbs$ be two savings accounts. Then we have $B = \hat{B}$ up to an evanescent set.
\end{proposition}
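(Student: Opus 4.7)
The plan is to extract from NUPBR a positive local martingale $D$ for which both $D$ and $\hat{B}D/B$ are local martingales, and then to apply Lemma~\ref{lemma-prod-M-A} with $M = D$ and $A = \hat{B}B^{-1}$ to conclude that $\hat{B}/B$ is constantly equal to one.

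First, because $B \in \bbs$ is already a savings account, we have $\bbs \cup \{B\} = \bbs$, so the assumption that $\calp_{\sfi,1}^+(\bbs)$ satisfies NUPBR is literally condition~(i) of Theorem~\ref{thm-FTAP-plus} applied with this $B$. Invoking that theorem supplies a positive local martingale $D$ such that $Z := DB^{-1}$ is an ELMD for $\bbs$. Since $\hat{B} \in \bbs$ as well, the product $\hat{B}Z = D \cdot (\hat{B}B^{-1})$ lies in $\calm_{\loc}$.

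Set $M := D$ and $A := \hat{B}B^{-1}$. Then $M$ is a positive local martingale and $MA \in \calm_{\loc}$, so Lemma~\ref{lemma-prod-M-A} will close the proof as soon as one knows that $A$ is predictable, càdlàg and of finite variation. By the first lemma of Section~\ref{sec-notation}, $B^{-1}$ is itself a savings account, so both $\hat{B}$ and $B^{-1}$ are predictable, càdlàg, and of finite variation; their product inherits all three properties through the pathwise integration-by-parts identity
\[
\hat{B}B^{-1} = 1 + \hat{B}_- \bdot B^{-1} + B_-^{-1} \bdot \hat{B} + [\hat{B}, B^{-1}],
\]
in which the bracket of the two FV processes is the pure-jump, predictable, FV process $[\hat{B}, B^{-1}] = \sum \Delta \hat{B}\, \Delta B^{-1}$. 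Moreover $A_0 = \hat{B}_0/B_0 = 1$, so Lemma~\ref{lemma-prod-M-A} gives $A \equiv 1$, that is $\hat{B} = B$ up to an evanescent set.

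The sole nonroutine point is verifying that the ratio of two savings accounts is again predictable, càdlàg and of finite variation so that Lemma~\ref{lemma-prod-M-A} applies; once this is granted the argument reduces to one invocation each of Theorem~\ref{thm-FTAP-plus} and that lemma.
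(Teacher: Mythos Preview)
Your proof is correct and follows exactly the same route as the paper: apply Theorem~\ref{thm-FTAP-plus} (using $B\in\bbs$ so that $\bbs\cup\{B\}=\bbs$) to obtain $D>0$ with $Z=DB^{-1}$ an ELMD for $\bbs$, then set $A=\hat B B^{-1}$ and invoke Lemma~\ref{lemma-prod-M-A}. The only difference is that you spell out why $A$ is predictable, c\`adl\`ag and of finite variation, whereas the paper takes this as evident.
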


\begin{proof}
By Theorem \ref{thm-FTAP} there exists a local martingale $D \in \calm_{\loc}$ with $D > 0$ such that $Z = D B^{-1}$ is an ELMD for $\bbs$. In particular, setting $A := B^{-1} \hat{B}$ we have $DA = Z \hat{B} \in \calm_{\loc}$. Applying Lemma \ref{lemma-prod-M-A} gives us $A=1$ up to an evanescent set, and hence $B = \hat{B}$ up to an evanescent set.
\end{proof}

In the situation of the previous results, the savings account $B$, and hence the ELMD $Z = D B^{-1}$, do not need to be unique. However, as the following result shows, for a given local martingale $D > 0$ there is at most one suitable savings account fitting into the multiplicative decomposition $Z = D B^{-1}$. For this result, we consider a general market $\bbs = \{ S^i : i \in I \}$ with an arbitrary index set $I \neq \emptyset$.

\begin{proposition}\label{prop-savings-account-unique}
Suppose that $S^i > 0$ for some $i \in I$. Let $D > 0$ be a local martingale, and let $B,\hat{B}$ be two savings accounts such that the multiplicative special semimartingales $Z = D B^{-1}$ and $\hat{Z} = D \hat{B}^{-1}$ are ELMDs for the market $\bbs$. Then we have $B = \hat{B}$ up to an evanescent set.
\end{proposition}

\begin{proof}
We have $S^i D B^{-1} \in \calm_{\loc}$ and $S^i D \hat{B}^{-1} \in \calm_{\loc}$. Note that $A := B \hat{B}^{-1}$ is another savings account, and that
\begin{align*}
(S^i D B^{-1}) A \in \calm_{\loc}.
\end{align*}
Applying Lemma \ref{lemma-prod-M-A} gives us that $A = 1$ up to an evanescent set, and hence we have $B = \hat{B}$ up to an evanescent set.
\end{proof}

\begin{remark}\label{rem-more-general}
In this section we have considered a finite market $\bbs = \{ S^1,\ldots,S^d \}$. However, note that for an arbitrary market $\bbs = \{ S^i : i \in I \}$ with nonnegative semimartingales and an arbitrary index set $I \neq \emptyset$ the existence of an appropriate ELMD, which is a multiplicative special semimartingale, is sufficient for the absence of arbitrage. More precisely, in such a more general market the following implications still hold true:
\begin{itemize}
\item (iv) $\Rightarrow$ (i), (v) $\Rightarrow$ (i) in Theorem \ref{thm-FTAP}.

\item (iv) $\Rightarrow$ (i), (v) $\Rightarrow$ (i) in Theorem \ref{thm-FTAP-classical}.
\end{itemize}
\end{remark}

\section{Semimartingale term structure models}\label{sec-tsm}

In this section we discuss the connections with the results of \cite{Doeberlein-Schweizer} concerning semimartingale term structure models. We consider a market $\bbs = \{ P^S : S \in [0,T] \}$ consisting of zero coupon bonds. Assume that $P^S,P_-^S > 0$ and $P_S^S = 1$ for each $S \in [0,T]$. We call the market $\bbs$ a \emph{semimartingale term structure model}, or simply a \emph{term structure model}. The following concepts of a generated term structure model and of an implied savings account are provided in \cite{Doeberlein-Schweizer} in a risk-neutral framework. We recall and extend these notions to the present situation with the real-world probability measure as follows.

\begin{definition}
We introduce the following concepts:
\begin{enumerate}
\item Let $\bbq \approx \bbp$ be an equivalent probability measure on $(\Omega,\calf_T)$, and let $Y$ be a semimartingale such that $Y,Y_- > 0$. We say that the term structure model $\bbs$ is \emph{generated by $(\bbq,Y)$} if $\bbq$-almost surely
\begin{align*}
P_s^S = \bbe_{\bbq} \bigg[ \frac{Y_S}{Y_s} \bigg| \calf_s \bigg] \quad \text{for all $0 \leq s \leq S \leq T$.}
\end{align*}
\item Let $Z$ be a semimartingale such that $Z,Z_- > 0$. We say that the term structure model $\bbs$ is generated by $Z$ if $\bbp$-almost surely
\begin{align*}
P_s^S = \bbe \bigg[ \frac{Z_S}{Z_s} \bigg| \calf_s \bigg] \quad \text{for all $0 \leq s \leq S \leq T$.}
\end{align*}
\end{enumerate}
\end{definition}

\begin{definition}
Let $B$ be a savings account.
\begin{enumerate}
\item Let $\bbr \approx \bbp$ be an equivalent probability measure on $(\Omega,\calf_T)$. We say that $B$ is a \emph{savings account implied by the term structure model $\bbs$ relative to the measure $\bbr$} if $\bbr$-almost surely
\begin{align*}
P_s^S = \bbe_{\bbr} \bigg[ \frac{B_S^{-1}}{B_s^{-1}} \bigg| \calf_s \bigg] \quad \text{for all $0 \leq s \leq S \leq T$.}
\end{align*}

\item Let $D > 0$ be a local martingale. We say that $B$ is a \emph{savings account implied by the term structure model $\bbs$ relative to the local martingale $D$} if $\bbp$-almost surely
\begin{align*}
P_s^S = \bbe \bigg[ \frac{D_S B_S^{-1}}{D_s B_s^{-1}} \bigg| \calf_s \bigg] \quad \text{for all $0 \leq s \leq S \leq T$.}
\end{align*}
\end{enumerate}
\end{definition}

\begin{remark}\label{rem-bond-NA}
Let $B$ be a savings account.
\begin{enumerate}
\item Suppose that $B$ is implied by the term structure model $\bbs$ relative to some local martingale $D > 0$. Then the multiplicative special semimartingale $Z := D B^{-1}$ is an ELMD for the term structure model $\bbs$, and hence, by Remark \ref{rem-more-general} the set $\calp_{\sfi,1}^+(\bbs \cup \{ B \})$ satisfies NUPBR.

\item Suppose that $B$ is implied by the term structure model $\bbs$ relative to some equivalent probability measure $\bbr \approx \bbp$ on $(\Omega,\calf_T)$. Then the measure $\bbr$ is an ELMM for the discounted term structure model $\bbs B^{-1}$, and hence, by Remark \ref{rem-more-general} the set $\calp_{\sfi,0}^{\adm}(\bbs \cup \{ B \})$ satisfies NFL.
\end{enumerate}
\end{remark}

\begin{remark}\label{rem-implied-savings-1}
Let $Z = D B^{-1}$ be a multiplicative special semimartingale with a local martingale $D > 0$ and a savings account $B$ such that the term structure model $\bbs$ is generated by $Z$. Then the following statements are true:
\begin{enumerate}
\item $B$ is a savings account implied by $\bbs$ relative to $D$.

\item If $D$ is a true martingale, then using \cite[III.3.9]{Jacod-Shiryaev} shows that $B$ is a savings account implied by $\bbs$ relative to $\bbr$, where $\bbr \approx \bbp$ denotes the equivalent probability measure on $(\Omega,\calf_T)$ with density process $D/D_0$.
\end{enumerate}
\end{remark}

\begin{proposition}\label{prop-implied-savings-2}
Let $B$ be a savings account, let $\bbq \approx \bbp$ be an equivalent probability measure on $(\Omega,\calf_T)$, and let $M > 0$ be a $\bbq$-local martingale. If $\bbs$ is generated by $(\bbq,Y)$, where $Y := M B^{-1}$, then the following statements are true:
\begin{enumerate}
\item $B$ is a savings account implied by $\bbs$ relative to the local martingale $MD$, where $D$ denotes the density process of the measure change $\bbq \approx \bbp$.

\item If $M$ is a true $\bbq$-martingale, then $B$ is a savings account implied by $\bbs$ relative to $\bbr$, where $\bbr \approx \bbq$ denotes the equivalent probability measure on $(\Omega,\calf_T)$ with density process $M/M_0$.
\end{enumerate}
\end{proposition}

\begin{proof}
By \cite[III.3.9]{Jacod-Shiryaev} the term structure model $\bbs$ is generated by $Z := M D B^{-1}$, and by \cite[Prop. III.3.8]{Jacod-Shiryaev} the process $MD > 0$ is a local martingale. Therefore, by Remark \ref{rem-implied-savings-1} the process $B$ is a savings account implied by $\bbs$ relative to $MD$, proving the first statement.

For the proof of the second statement we assume that $M$ is a true $\bbq$-martingale. By \cite[Prop. III.3.8]{Jacod-Shiryaev} the process $MD$ is a true martingale. Therefore, by \cite[III.3.9]{Jacod-Shiryaev} the process $B$ is a savings account implied by $\bbs$ relative to $\bbr$, where $\bbr \approx \bbp$ denotes the equivalent probability measure on $(\Omega,\calf_T)$ with density process $MD/M_0$. To complete the proof, note that $\bbr \approx \bbq$ with density process $M/M_0$.
\end{proof}

\begin{remark}
The second statement of Proposition \ref{prop-implied-savings-2} provides the first statement of \cite[Thm. 5]{Doeberlein-Schweizer}. In the terminology of \cite{Doeberlein-Schweizer}, the pair $(\bbq,Y)$ is called \emph{good}.
\end{remark}

We close this section with the following result about the uniqueness of an implied savings account.

\begin{proposition}
Let $B$ and $\hat{B}$ be two savings accounts. Then the following statements are true:
\begin{enumerate}
\item Let $D > 0$ be a local martingale such that $B$ and $\hat{B}$ are savings accounts implied by $\bbs$ relative to $D$. Then we have $B = \hat{B}$ up to an evanescent set.

\item Let $\bbr \approx \bbp$ be an equivalent probability measure on $(\Omega,\calf_T)$ such that $B$ and $\hat{B}$ are savings accounts implied by $\bbs$ relative to $\bbr$. Then we have $B = \hat{B}$ up to an evanescent set.
\end{enumerate}
\end{proposition}

\begin{proof}
This is a consequence of Remark \ref{rem-bond-NA} and Proposition \ref{prop-savings-account-unique}.
\end{proof}

\begin{remark}
In \cite{Doeberlein-Schweizer} (see also \cite{Doeberlein-Schweizer-Stricker}) it was shown that under suitable conditions for two equivalent probability measures $\bbr$ and $\hat{\bbr}$ such that $B$ is a savings account implied by $\bbs$ relative to $\bbr$, and $\hat{B}$ is a savings account implied by $\bbs$ relative to $\hat{\bbr}$, we have $B = \hat{B}$ up to an evanescent set.
\end{remark}

\section{Equivalent local martingale deflators}\label{sec-ELMD}

In this section we clarify when a deflator, which is a multiplicative special semimartingale, exists. Recall from Theorem \ref{thm-FTAP} and Remark \ref{rem-more-general} that this ensures that the market satisfies NUPBR for all its nonnegative, self-financing portfolios.

Let $\bbs = \{ S^i : i \in I \}$ be a financial market consisting of nonnegative semimartingales $S^i \geq 0$ with an arbitrary index set $I \neq \emptyset$. We assume that for each $i \in I$ the semimartingale $S^i$ cannot revive from bankruptcy, which means that $S_t^i = 0$ for all $t \geq T_i$, where $T_i$ denotes the \emph{bankruptcy time} of $S^i$ given by
\begin{align*}
T_i := \inf \{ t \in \bbr_+ : S_{t-}^i = 0 \text{ or } S_t^i = 0 \}.
\end{align*}
Then we have
\begin{align}\label{stock-i}
S^i = S_0^i \cale(X^i),
\end{align}
where $X^i$ denotes the semimartingale
\begin{align*}
X^i := (S_-^i)^{-1} \bbI_{\IL 0,T_i \IR} \bdot S^i,
\end{align*}
and where $\cale$ denotes the stochastic exponential; see \cite[Sec. I.4.f]{Jacod-Shiryaev}. Note that $X_0^i = 0$ and $X^i = (X^i)^{T_i}$. Before we proceed, we recall that $\calv$ denotes the space of all c\`{a}dl\`{a}g, adapted processes of locally finite variation starting in zero, and that $\cala_{\loc}$ denotes the space of all locally integrable processes from $\calv$; see \cite[Sec. I.3.a]{Jacod-Shiryaev}. Furthermore, we recall that $\cals$ denotes the space of all semimartingales, and that $\cals_p$ denotes the space of all special semimartingales; see \cite[Sec. I.4.c]{Jacod-Shiryaev}. For each truncation function $h^i \in \calc_t$ (see \cite[Def. II.2.3]{Jacod-Shiryaev}) we have
\begin{align*}
X^i = X^i(h^i) + \breve{X}^i(h^i),
\end{align*}
where $\breve{X}^i(h^i) \in \calv$ and $X^i(h^i) \in \cals_p$ are defined as
\begin{align*}
\breve{X}^i(h^i) &:= \sum_{s \leq \bullet} [ \Delta X_s^i - h^i(\Delta X_s^i) ],
\\ X^i(h^i) &:= X - \breve{X}^i(h^i).
\end{align*}
Here for every optional process $H$ we agree that $\sum_{s \leq \bullet} H_s$ denotes the process given by $\sum_{s \leq t} H_s$ for each $t \in \bbr_+$, provided the series are convergent. We denote by
\begin{align*}
X^i(h^i) = M^i(h^i) + A^i(h^i)
\end{align*}
the canonical decomposition of the special semimartingale $X^i(h^i)$. Then the semimartingale $X^i$ has the decomposition
\begin{align}\label{deco-Si}
X^i = M^i(h^i) + A^i(h^i) + \breve{X}^i(h^i).
\end{align}
Let $Z$ be a multiplicative special semimartingale with multiplicative decomposition $Z = D B^{-1}$ for some $D \in \calm_{\loc}$ with $D_0 = 1$ and a savings account $B$. Then we have $D = \cale(-\Theta)$ for some $\Theta \in \calm_{\loc}$ with $\Theta_0 = 0$ and $\Delta \Theta < 1$, and $B = \cale(R)$ for some predictable process $R \in \calv$ with $\Delta R > -1$. Indeed, these two processes are given by the stochastic logarithms $\Theta = - D_-^{-1} \bdot D$ and $R = B_-^{-1} \bdot B$; see \cite[Sec. II.8.a]{Jacod-Shiryaev}. We call $\Theta$ the \emph{market price of risk} and $R$ the \emph{locally risk-free return} or \emph{virtual short rate} of the savings account $B$. In the subsequent results, all upcoming identities are meant up to an evanescent set. Furthermore, we recall that for each $A \in \cala_{\loc}$ the process $A^p$ denotes its \emph{predictable compensator}; see \cite[Sec. I.3.b]{Jacod-Shiryaev}.

\begin{theorem}\label{thm-ELMD-h}
The following statements are equivalent:
\begin{enumerate}
\item[(i)] $Z$ is an ELMD for $\bbs$.

\item[(ii)] $Z$ is an ELMD for $\bbs \cup \{ B \}$.

\item[(iii)] For each $i \in I$ and each truncation function $h^i \in \calc_t$ we have
\begin{align}\label{drift-A-loc}
&[M^i(h^i) + \breve{X}^i(h^i), \Theta] - \breve{X}^i(h^i) \in \cala_{\loc},
\\ \label{drift} &A^i(h^i) - R = ([M^i(h^i) + \breve{X}^i(h^i), \Theta] - \breve{X}^i(h^i))^p.
\end{align}

\item[(iv)] For each $i \in I$ there exists a truncation function $h^i \in \calc_t$ such that we have (\ref{drift-A-loc}) and (\ref{drift}).
\end{enumerate}
\end{theorem}

We will provide the proof of Theorem \ref{thm-ELMD-h} below. Now, let us assume that for each $i \in I$ the process $X^i$ is a special semimartingale, and consider its canonical decomposition
\begin{align*}
X^i = M^i + A^i.
\end{align*}
Furthermore, let us agree on the notation $\la M,N \ra := [M,N]^p$ for all $M,N \in \calm_{\loc}$ with $[M,N] \in \cala_{\loc}$. Due to \cite[Prop. I.4.50.b]{Jacod-Shiryaev}, this is consistent with the definition of the predictable quadratic covariation. The proof of the following result is similar to that of Theorem \ref{thm-ELMD-h}; indeed, the arguments are even simpler. 

\begin{theorem}\label{thm-ELMD-special}
If $X^i \in \cals_p$ for each $i \in I$, then the following statements are equivalent:
\begin{enumerate}
\item[(i)] $Z$ is an ELMD for $\bbs$.

\item[(ii)] $Z$ is an ELMD for $\bbs \cup \{ B \}$.

\item[(iii)] For each $i \in I$ we have $[M^i,\Theta] \in \cala_{\loc}$ and
\begin{align}\label{drift-final}
A^i - R = \la M^i,\Theta \ra.
\end{align}
\end{enumerate}
\end{theorem}

Now, we prepare the proof of Theorem \ref{thm-ELMD-h}. For this purpose, we provide some auxiliary results. The predictable process $\widetilde{R} \in \calv$ given by
\begin{align*}
\widetilde{R} := \frac{1}{1 + \Delta R} \bdot R
\end{align*}
satisfies $\Delta \widetilde{R} < 1$, and we have $\widetilde{R}^c = R^c$. Since the inverse of $(-1,\infty) \to (-\infty,1)$, $x \mapsto x/(1+x)$ is given by $(-\infty,1) \to (-1,\infty)$, $x \mapsto x/(1-x)$, we have
\begin{align*}
R = \frac{1}{1 - \Delta \widetilde{R}} \bdot \widetilde{R}.
\end{align*}
Therefore, it follows that
\begin{align}\label{R-identities}
\Delta \widetilde{R} = \frac{\Delta R}{1 + \Delta R}, \quad \Delta R = \frac{\Delta \widetilde{R}}{1 - \Delta \widetilde{R}}, \quad (1 + \Delta R) (1 - \Delta \widetilde{R}) = 1, \quad R = \widetilde{R} + [R,\widetilde{R}],
\end{align}
and by Yor's formula (see \cite[II.8.19]{Jacod-Shiryaev}) we obtain
\begin{align*}
B^{-1} = \cale(-\widetilde{R}).
\end{align*}
Now, we define the two processes
\begin{align}\label{theta-tilde}
\widetilde{\Theta} &:= \Theta - [\Theta,\widetilde{R}] = (1 - \Delta \widetilde{R}) \bdot \Theta,
\\ \label{deco-2} Y &:= \widetilde{\Theta} + \widetilde{R}.
\end{align}
Then we have $\widetilde{\Theta} \in \calm_{\loc}$ with $\widetilde{\Theta}_0 = 0$ and $\widetilde{\Theta}^c = \Theta^c$ as well as $Y \in \cals_p$ with $Y_0 = 0$ and $\Delta Y < 1$. Furthermore, we have
\begin{align*}
\Theta = \widetilde{\Theta} + [\widetilde{\Theta},R] = (1 + \Delta R) \bdot \widetilde{\Theta}
\end{align*}
as well as
\begin{align}\label{jumps-Y}
\frac{\Delta Y - 1}{1 - \Delta \widetilde{R}} &= \Delta \Theta - 1.
\end{align}
By Yor's formula (see \cite[II.8.19]{Jacod-Shiryaev}) we obtain
\begin{align}\label{Z-cale-Y}
Z = D B^{-1} = \cale(-\Theta) \cale(-\widetilde{R}) = \cale(-Y).
\end{align}

\begin{lemma}\label{lemma-E-formel}
The following statements are equivalent:
\begin{enumerate}
\item[(i)] $Z$ is an ELMD for $\bbs$.

\item[(ii)] For each $i \in I$ we have $X^i - Y - [X^i,Y] \in \calm_{\loc}$.
\end{enumerate}
\end{lemma}

\begin{proof}
Taking into account (\ref{stock-i}) and (\ref{Z-cale-Y}), this is a consequence of Yor's formula (see \cite[II.8.19]{Jacod-Shiryaev}).
\end{proof}

\begin{proposition}\label{prop-semimartingales-h}
The following statements are equivalent:
\begin{enumerate}
\item[(i)] $Z$ is an ELMD for $\bbs$.

\item[(ii)] $Z$ is an ELMD for $\bbs \cup \{ B \}$.

\item[(iii)] For each $i \in I$ and each truncation function $h^i \in \calc_t$ we have $[X^i,Y] - \breve{X}^i(h^i) \in \cala_{\loc}$ and
\begin{align}\label{drift-R-Theta-h}
A^i(h^i) - \widetilde{R} = ( [ X^i,Y ] - \breve{X}^i(h^i) )^p.
\end{align}

\item[(iv)] For each $i \in I$ there exists a truncation function $h^i \in \calc_t$ such that we have $[X^i,Y] - \breve{X}^i(h^i) \in \cala_{\loc}$ and (\ref{drift-R-Theta-h}).
\end{enumerate}
\end{proposition}

\begin{proof}
(i) $\Rightarrow$ (ii): This implication follows, because $BZ = D \in \calm_{\loc}$.

\noindent(ii) $\Rightarrow$ (iii): Let $i \in I$ and $h^i \in \calc_t$ be arbitrary. By Lemma \ref{lemma-E-formel} we have $X^i - Y - [X^i,Y] \in \calm_{\loc}$. Taking into account the decompositions (\ref{deco-Si}) and (\ref{deco-2}), we obtain
\begin{align*}
M^i(h^i) + A^i(h^i) + \breve{X}^i(h^i) - \widetilde{\Theta} - \widetilde{R} - [X^i,Y] \in \calm_{\loc},
\end{align*}
which implies
\begin{align*}
A^i(h^i) + \breve{X}^i(h^i) - \widetilde{R} - [X^i,Y] \in \calm_{\loc} \cap \calv.
\end{align*}
Taking into account \cite[Lemmas I.3.10, I.3.11]{Jacod-Shiryaev}, we have $[X^i,Y] - \breve{X}^i(h^i) \in \cala_{\loc}$ and
\begin{align}\label{drift-cond-proof-h}
A^i(h^i) - \widetilde{R} - ([X^i,Y] - \breve{X}^i(h^i))^p \in \calm_{\loc} \cap \calv.
\end{align}
Note that the process on the left-hand side of (\ref{drift-cond-proof-h}) is predictable. Hence, according to \cite[Cor. I.3.16]{Jacod-Shiryaev} we obtain (\ref{drift-R-Theta-h}) up to an evanescent set.

\noindent(iii) $\Rightarrow$ (iv): This implication is obvious.

\noindent(iv) $\Rightarrow$ (i): Let $i \in I$ be arbitrary, and let $h^i \in \calc_t$ be a truncation function such that we have $[X^i,Y] - \breve{X}^i(h^i) \in \cala_{\loc}$ and (\ref{drift-R-Theta-h}). Then by the decompositions (\ref{deco-Si}) and (\ref{deco-2}) we have
\begin{align*}
&X^i - Y - [X^i,Y] = M^i(h^i) + A^i(h^i) + \breve{X}^i(h^i) - \widetilde{\Theta} - \widetilde{R} - [X^i,Y]
\\ &= M^i(h^i) + ( [ X^i,Y ] - \breve{X}^i(h^i) )^p - ([X^i,Y] - \breve{X}^i(h^i)) - \widetilde{\Theta} \in \calm_{\loc}.
\end{align*}
Therefore, by Lemma \ref{lemma-E-formel} the process $Z$ is an ELMD for $\bbs$.
\end{proof}

For what follows, we fix an index $i \in I$ and a truncation function $h^i \in \calc_t$. We define the two processes $B^i(h^i), C^i(h^i) \in \calv$ as
\begin{align*}
B^i(h^i) &:= [M^i(h^i),\widetilde{\Theta}] + [\breve{X}^i(h^i),Y] - \breve{X}^i(h^i),
\\ C^i(h^i) &:= [ M^i(h^i) + \breve{X}^i(h^i), \Theta ] - \breve{X}^i(h^i).
\end{align*}

\begin{proposition}\label{prop-h-2}
The following statements are equivalent:
\begin{enumerate}
\item[(i)] We have $[X^i,Y] - \breve{X}^i(h^i) \in \cala_{\loc}$ and (\ref{drift-R-Theta-h}).

\item[(ii)] We have $B^i(h^i) \in \cala_{\loc}$ and
\begin{align}\label{drift-h-2}
A^i(h^i) - \widetilde{R} = [A^i(h^i),\widetilde{R}] + B^i(h^i)^p.
\end{align}
\end{enumerate}
\end{proposition}

\begin{proof}
By the decompositions (\ref{deco-Si}), (\ref{deco-2}) and \cite[Prop. I.4.49.a]{Jacod-Shiryaev} we have
\begin{align*}
[X^i,Y] - \breve{X}^i(h^i) &= [M^i(h^i),\widetilde{\Theta}] + [M^i(h^i),\widetilde{R}] + [A^i(h^i),\widetilde{\Theta}] + [A^i(h^i),\widetilde{R}]
\\ &\quad + [\breve{X}^i(h^i),Y] - \breve{X}^i(h^i)
\\ &= B^i(h^i) + [M^i(h^i),\widetilde{R}] + [A^i(h^i),\widetilde{\Theta}] + [A^i(h^i),\widetilde{R}].
\end{align*}
Furthermore, by \cite[Prop. I.4.49.c]{Jacod-Shiryaev} we have $[M^i(h^i),\widetilde{R}], [A^i(h^i),\widetilde{\Theta}] \in \calm_{\loc}$. Therefore, taking into account \cite[Lemmas I.3.10, I.3.11]{Jacod-Shiryaev} we have $[X^i,Y] - \breve{X}^i(h^i) \in \cala_{\loc}$ if and only if $B^i(h^i) \in \cala_{\loc}$, and in this case, we have
\begin{align*}
([X^i,Y] - \breve{X}^i(h^i))^p = [A^i(h^i),\widetilde{R}] + B^i(h^i)^p,
\end{align*}
completing the proof.
\end{proof}

Before we proceed, recall that each $A \in \calv$ admits a unique decomposition $A = A^c + A^d$ with a continuous process $A^c \in \calv$ and purely discontinuous process $A^d \in \calv$, and that each purely discontinuous $A \in \calv$ admits a unique decomposition $A = A^q + A^a$ with purely discontinuous processes $A^q,A^a \in \calv$ such that $A^q$ is quasi-left-continuous and $A^a$ has only accessible jumps; see \cite[Thm. 4.25]{He}.

\begin{lemma}\label{lemma-B-C-1}
We have $B^i(h^i)^c = C^i(h^i)^c$ and $B^i(h^i)^{dq} = C^i(h^i)^{dq}$.
\end{lemma}

\begin{proof}
By (\ref{theta-tilde}) we have
\begin{align*}
B^i(h^i)^c = [M^i(h^i),\widetilde{\Theta}]^c = (1 - \Delta \widetilde{R}) \bdot [M^i(h^i),\Theta]^c = [M^i(h^i),\Theta]^c = C^i(h^i)^c.
\end{align*}
Now, we define the two purely discontinuous processes $I,J \in \calv$ as
\begin{align*}
I &:= [\breve{X}^i(h^i),Y] - \breve{X}^i(h^i),
\\ J &:= [\breve{X}^i(h^i),\Theta] - \breve{X}^i(h^i).
\end{align*}
Then by \cite[Prop. I.4.49.a]{Jacod-Shiryaev} and (\ref{jumps-Y}) we have
\begin{align*}
I &= (\Delta Y - 1) \bdot \breve{X}^i(h^i),
\\ J &= (\Delta \Theta - 1) \bdot \breve{X}^i(h^i) = \frac{\Delta Y - 1}{1 - \Delta \widetilde{R}} \bdot \breve{X}^i(h^i).
\end{align*}
According to \cite[Thm. 4.21]{He} there are an exhausting sequence $(T_n)_{n \in \bbn}$ of totally inaccessible stopping times and an exhausting sequence $(S_m)_{m \in \bbn}$ of predictable stopping times such that
\begin{align*}
\{ \Delta \breve{X}^i(h^i) \neq 0 \} \subset \bigcup_{n \in \bbn} \IL T_n \IR \cup \bigcup_{m \in \bbn} \IL S_m \IR.
\end{align*}
Since the process $\widetilde{R}$ is predictable, by the construction in the proof of \cite[Thm. 4.25]{He} and \cite[Prop. I.2.24]{Jacod-Shiryaev} we have
\begin{align*}
I^q = \sum_{n \in \bbn} \Delta I_{T_n} \bbI_{\IL T_n,\infty \IL} = \sum_{n \in \bbn} \Delta J_{T_n} \bbI_{\IL T_n,\infty \IL} = J^q.
\end{align*}
Similarly, by (\ref{theta-tilde}) we obtain
\begin{align*}
[M^i(h^i),\widetilde{\Theta}]^{dq} = \big( (1 - \Delta \widetilde{R}) \bdot [M^i(h^i),\Theta]^d \big)^q = [M^i(h^i),\Theta]^{dq}.
\end{align*}
Consequently, we obtain
\begin{align*}
B^i(h^i)^{dq} = [M^i(h^i),\widetilde{\Theta}]^{dq} + I^q = [M^i(h^i),\Theta]^{dq} + J^q = C^i(h^i)^{dq},
\end{align*}
completing the proof.
\end{proof}

\begin{lemma}\label{lemma-B-C-2}
We have
\begin{align}\label{Delta-C}
\Delta C^i(h^i) &= (1 + \Delta R) \Delta B^i(h^i),
\\ \label{Delta-B} \Delta B^i(h^i) &= (1 - \Delta \widetilde{R}) \Delta C^i(h^i).
\end{align}
\end{lemma}

\begin{proof}
By (\ref{R-identities}), (\ref{theta-tilde}) and (\ref{jumps-Y}) we obtain
\begin{align*}
(1 + \Delta R) \Delta B^i(h^i) &= \frac{1}{1 - \Delta \widetilde{R}} \Delta ( [M^i(h^i),\widetilde{\Theta}] + [\breve{X}^i(h^i),Y] - \breve{X}^i(h^i) )
\\ &= \frac{1}{1 - \Delta \widetilde{R}} ( (1 - \Delta \widetilde{R}) \Delta [M^i(h^i),\Theta] + (\Delta Y - 1) \Delta \breve{X}^i(h^i) )
\\ &= \Delta M^i(h^i) \Delta \Theta + (\Delta \Theta - 1) \Delta \breve{X}^i(h^i)
\\ &= \Delta ( [ M^i(h^i) + \breve{X}^i(h^i), \Theta ] - \breve{X}^i(h^i) ) = \Delta C^i(h^i),
\end{align*}
showing (\ref{Delta-C}). Now, the identity (\ref{Delta-B}) follows from (\ref{R-identities}).
\end{proof}

\begin{lemma}\label{lemma-B-C-3}
We have
\begin{align*}
C^i(h^i)^d = B^i(h^i)^d + [R,B^i(h^i)]^d \quad \text{and} \quad B^i(h^i)^d = C^i(h^i)^d - [\widetilde{R},B^i(h^i)]^d.
\end{align*}
\end{lemma}

\begin{proof}
This is an immediate consequence of Lemma \ref{lemma-B-C-2}.
\end{proof}

\begin{proposition}\label{prop-B-C-A-loc}
We have $B^i(h^i) \in \cala_{\loc}$ if and only if $C^i(h^i) \in \cala_{\loc}$.
\end{proposition}

\begin{proof}
Suppose that $B^i(h^i) \in \cala_{\loc}$. By \cite[Lemma I.3.10]{Jacod-Shiryaev} we have
\begin{align*}
\Var [R,B^i(h^i)]^d &= \sum_{s \leq \bullet} | \Delta R_s \Delta B^i(h^i)_s | \leq \Var(R) \sum_{s \leq \bullet} |\Delta B^i(h^i)_s|
\\ &\leq \Var(R) \Var(B^i(h^i)) \in \cala_{\loc}.
\end{align*}
Therefore, by Lemma \ref{lemma-B-C-3} we deduce that $C^i(h^i) \in \cala_{\loc}$. The converse implication is proven analogously.
\end{proof}

\begin{lemma}\label{lemma-B-C-A-loc}
Suppose that $B^i(h^i) \in \cala_{\loc}$, or equivalently $C^i(h^i) \in \cala_{\loc}$. Then we have
\begin{align}\label{comp-part-1}
(B^i(h^i)^p)^c &= (C^i(h^i)^p)^c,
\\ \label{comp-part-2} \Delta C^i(h^i)^p &= (1 + \Delta R) \Delta B^i(h^i)^p,
\\ \label{comp-part-3} \Delta B^i(h^i)^p &= (1 - \Delta \widetilde{R}) \Delta C^i(h^i)^p.
\end{align}
\end{lemma}

\begin{proof}
By \cite[Thm. 7.14]{He} and Lemma \ref{lemma-B-C-1} we have
\begin{align*}
(B^i(h^i)^p)^c = (B^i(h^i)^c + B^i(h^i)^{dq})^p = (C^i(h^i)^c + C^i(h^i)^{dq})^p = (C^i(h^i)^p)^c,
\end{align*}
showing (\ref{comp-part-1}). Furthermore, by \cite[I.3.21, I.2.30]{Jacod-Shiryaev} and Lemma \ref{lemma-B-C-2} we obtain
\begin{align*}
\Delta C^i(h^i)^p &= {}^p [\Delta C^i(h^i)] = {}^p [(1 + \Delta R) \Delta B^i(h^i)] = (1 + \Delta R) \, {}^p [\Delta B^i(h^i)]
\\ &= (1 + \Delta R) \Delta B^i(h^i)^p,
\end{align*}
showing (\ref{comp-part-2}). Now, the identity (\ref{comp-part-3}) is a consequence of (\ref{R-identities}).
\end{proof}

\begin{proposition}\label{prop-h-final}
Suppose that $B^i(h^i) \in \cala_{\loc}$, or equivalently $C^i(h^i) \in \cala_{\loc}$. Then we have (\ref{drift-h-2}) if and only if
\begin{align}\label{drift-h-3}
A^i(h^i) - R = C^i(h^i)^p.
\end{align}
\end{proposition}

\begin{proof}
We have (\ref{drift-h-2}) if and only if
\begin{align}\label{drift-h-2-c}
A^i(h^i)^c - R^c &= (B^i(h^i)^p)^c \quad \text{and}
\\ \label{drift-h-2-d} \Delta A^i(h^i) - \Delta \widetilde{R} &= \Delta [A^i(h^i), \widetilde{R}] + \Delta B^i(h^i)^p.
\end{align}
By virtue of (\ref{R-identities}), condition (\ref{drift-h-2-d}) is equivalent to
\begin{align}\label{drift-h-2-e}
\Delta A^i(h^i) - \Delta R = \frac{\Delta B^i(h^i)^p}{1 - \Delta \widetilde{R}}.
\end{align}
Furthermore, we have (\ref{drift-h-3}) if and only if
\begin{align}\label{drift-h-3-c}
A^i(h^i)^c - R^c &= (C^i(h^i)^p)^c \quad \text{and}
\\ \label{drift-h-3-d} \Delta A^i(h^i) - \Delta R &= \Delta C^i(h^i)^p.
\end{align}
Using Lemma \ref{lemma-B-C-A-loc}, we obtain the equivalences (\ref{drift-h-2-c}) $\Leftrightarrow$ (\ref{drift-h-3-c}) and (\ref{drift-h-2-e}) $\Leftrightarrow$ (\ref{drift-h-3-d}), completing the proof.
\end{proof}

Now, the proof of Theorem \ref{thm-ELMD-h} is a consequence of Propositions \ref{prop-semimartingales-h}, \ref{prop-h-2} and \ref{prop-B-C-A-loc}, \ref{prop-h-final}.

\section{Jump-diffusion models with fixed times of discontinuities}\label{sec-jd}

In this section we study the existence of ELMDs for jump-diffusion models with fixed times of discontinuities. Let $\lambda$ be the Lebesgue measure on $(\bbr_+,\mathcal{B}(\bbr_+))$, and let $W$ be an $\bbr^m$-valued standard Wiener process for some $m \in \bbn$. Furthermore, let $\mu^c$ be a homogeneous Poisson random measure (see \cite[Sec. II.1.c]{Jacod-Shiryaev}) on some mark space $(E,\cale)$, which we assume to be a Blackwell space. Then its predictable compensator (see \cite[Thm. II.1.8]{Jacod-Shiryaev}) is of the form $\nu^c = \lambda \otimes F$ with some $\sigma$-finite measure $F$ on the mark space $(E,\cale)$. Let $\mu^d$ be another Poisson random measure on $(E,\cale)$, and denote by $\nu^d$ its predictable compensator. We set $a_t := \nu^d(\{ t \} \times E)$ for each $t \in \bbr_+$, and define the set $J \subset \bbr_+$ as $J := \{ t \in \bbr_+ : a_t > 0 \}$. According to \cite[Prop. II.1.17]{Jacod-Shiryaev} the set $J$ is countable and we may assume that $a_t \leq 1$ for all $t \in \bbr_+$. We assume that $\mu^d$ is purely discontinuous in the sense that $\nu^d(dt,dx) = \nu^d(dt,dx) \bbI_J(t)$. Let $\zeta^J$ be the measure on $(\bbr_+,\mathcal{B}(\bbr_+))$ given by $\zeta^J(B) = \sum_{k \in B \cap J} 1$ for each $B \in \mathcal{B}(\bbr_+)$; that is, $\zeta^J$ is the counting measure with support $J$.

Let $L_{\loc}^1(\lambda)$ be the space of all predictable processes $\alpha : \Omega \times \bbr_+ \to \bbr$ such that $|\alpha| \bdot \lambda \in \cala_{\loc}$, let $L_{\loc}^1(\zeta^J)$ be the space of all predictable processes $\alpha : \Omega \times \bbr_+ \to \bbr$ such that $|\alpha| \bdot \zeta^J \in \cala_{\loc}$, and let $L_{\loc}^2(W)$ be the space of all predictable processes $\sigma : \Omega \times \bbr_+ \to \bbr^m$ such that $\| \sigma \|_{\bbr^m}^2 \bdot \lambda \in \cala_{\loc}$. Furthermore, let $L_{\loc}^2(\mu^c)$ be the space of all predictable processes $\gamma : \Omega \times \bbr_+ \times E \to \bbr$ such that $|\gamma|^2 * \nu^c \in \cala_{\loc}$, and let $L_{\loc}^2(\mu^d)$ be the space of all predictable processes $\delta : \Omega \times \bbr_+ \times E \to \bbr$ such that $(\delta - \widehat{\delta})^2 * \nu^d + \sum_{s \leq \bullet} (1 - a_s) \widehat{\delta}_s^2 \in \cala_{\loc}$, where
\begin{align*}
\widehat{\delta}_t := \int_E \delta_t(x) \nu^d(\{ t \} \times dx), \quad t \in \bbr_+.
\end{align*}
As in the previous section, we consider a financial market $\bbs = \{ S^i : i \in I \}$ with an arbitrary index set $I \neq \emptyset$. We assume that for each $i \in I$ the semimartingale $S^i$ is given by
\begin{align*}
S^i = S_0^i \cale \big( \alpha^{i,c} \bdot \lambda + \alpha^{i,d} \bdot \zeta^J + \sigma^i \bdot W + \gamma^i * (\mu^c - \nu^c) + \delta^i * (\mu^d - \mu^d) \big)
\end{align*}
with $\alpha^{i,c} \in L_{\loc}^1(\lambda)$, $\alpha^{i,d} \in L_{\loc}^1(\zeta^J)$, $\sigma^i \in L_{\loc}^2(W)$, $\gamma^i \in L_{\loc}^2(\mu^c)$ such that $\gamma^i > -1$, and $\delta^i \in L_{\loc}^2(\mu^d)$ such that $\delta^i - \widehat{\delta}^i > -1$. Here `$\,\, * \,$' denotes the stochastic integral with respect to a random measure; see \cite[Sec. II.1.d]{Jacod-Shiryaev}. Let $Z$ be a multiplicative special semimartingale with multiplicative decomposition $Z = D B^{-1}$, where
\begin{align*}
D = \cale \big( -\theta \bdot W - \psi * (\mu^c - \nu^c) - \phi * (\mu^d - \nu^d) \big) \quad \text{and} \quad B = \cale \big( r^c \bdot \lambda + r^d \bdot \zeta^J \big)
\end{align*}
with $\theta \in L_{\loc}^2(W)$, $\psi \in L_{\loc}^2(\mu^c)$ such that $\psi < 1$, and $\phi \in L_{\loc}^2(\mu^d)$ such that $\phi - \widehat{\phi} < 1$, as well as $r^c \in L_{\loc}^1(\lambda)$ and $r^d \in L_{\loc}^1(\zeta^J)$.

\begin{theorem}\label{thm-jd}
The following statements are equivalent:
\begin{enumerate}
\item[(i)] $Z$ is an ELMD for $\bbs$.

\item[(ii)] $Z$ is an ELMD for $\bbs \cup \{ B \}$.

\item[(iii)] For each $i \in I$ we have
\begin{align}
\alpha^{i,c} - r^c &= \langle \sigma^i,\theta \rangle_{\bbr^m} + \langle \gamma^i, \psi \rangle_{L^2(F)} \quad \text{$\lambda$-a.e.} \quad \text{$\bbp$-a.e.}
\\ \alpha_t^{i,d} - r_t^d &= \int_E \delta_t^i(x) \phi_t(x) \nu^d(\{ t \} \times dx) - \widehat{\delta}_t^i \widehat{\phi}_t, \quad t \in J, \quad \text{$\bbp$-a.e.}
\end{align}
\end{enumerate}
\end{theorem}

Before we provide the proof of Theorem \ref{thm-jd}, let us prepare an auxiliary result.

\begin{lemma}\label{lemma-qv-int-d}
Let $\delta \in L_{\loc}^2(\mu^d)$ be arbitrary, and define the purely discontinuous local martingale $M := \delta * (\mu^d - \nu^d)$. Then the predictable quadratic variation $\la M,M \ra$ is purely discontinuous, and we have
\begin{align*}
\Delta \la M,M \ra_t = \int_E \delta_t(x)^2 \nu^d(\{ t \} \times dx) - \widehat{\delta}_t^2, \quad t \in \bbr_+.
\end{align*}
\end{lemma}

\begin{proof}
According to \cite[Thm. II.1.33.a]{Jacod-Shiryaev} we have
\begin{align*}
\la M,M \ra = (\delta - \widehat{\delta})^2 * \nu^d + \sum_{s \leq \bullet} (1 - a_s) \widehat{\delta}_s^2.
\end{align*}
For each $t \in \bbr_+$ we obtain
\begin{align*}
(\delta - \widehat{\delta})^2 * \nu_t^d &= \int_{[0,t] \times E} (\delta_s(x) - \widehat{\delta}_s)^2 \nu^d(ds,dx) = \sum_{s \leq t} \int_E (\delta_s(x) - \widehat{\delta}_s)^2 \nu^d(\{ s \} \times dx)
\\ &= \sum_{s \leq t} \int_E \big( \delta_s(x)^2 - 2 \delta_s(x) \widehat{\delta}_s + \widehat{\delta}_s^2 \big) \nu^d(\{ s \} \times dx)
\\ &= \sum_{s \leq t} \bigg( \int_E \delta_s(x)^2 \nu^d(\{ s \} \times dx) - 2 \widehat{\delta}_s^2 + \widehat{\delta}_s^2 a_s \bigg),
\end{align*}
and hence
\begin{align*}
\la M,M \ra = \sum_{s \leq \bullet} \bigg( \int_E \delta_s(x)^2 \nu^d(\{ s \} \times dx) - \widehat{\delta}_s^2 \bigg),
\end{align*}
completing the proof.
\end{proof}

For the upcoming proof of Theorem \ref{thm-jd}, we recall that $\calh_{\loc}^2$ denotes the space of all locally square-integrable martingales.

\begin{proof}[Proof of Theorem \ref{thm-jd}]
We define the processes
\begin{align*}
M^i &:= \sigma^i \bdot W + \gamma^i * (\mu^c - \nu^c) + \delta^i * (\mu^d - \nu^d), \quad i \in I,
\\ A^i &:= \alpha^{i,c} \bdot \lambda + \alpha^{i,d} \bdot \zeta^J, \quad i \in I,
\\ \Theta &:= \theta \bdot W + \psi * (\mu^c - \mu^c) + \phi * (\nu^d - \nu^d),
\\ R &:= r^c \bdot \lambda + r^d \bdot \zeta^J.
\end{align*}
According to \cite[Thm. II.1.33.a]{Jacod-Shiryaev} we have $M^i \in \calh_{\loc}^2$ for each $i \in I$, and we have $\Theta \in \calh_{\loc}^2$. Let $i \in I$ be arbitrary. By \cite[Thm. I.4.50.b]{Jacod-Shiryaev} we have $[M^i,\Theta] \in \cala_{\loc}$. Furthermore, by \cite[Thm. II.1.33.a]{Jacod-Shiryaev} we obtain
\begin{align*}
\la M^i,\Theta \ra^c = \big( \langle \sigma^i,\theta \rangle_{\bbr^m} + \langle \gamma^i, \psi \rangle_{L^2(F)} \big) \bdot \lambda,
\end{align*}
and by Lemma \ref{lemma-qv-int-d} we have
\begin{align*}
\Delta \la M^i,\Theta \ra_t = \int_E \delta_t^i(x) \phi_t(x) \nu^d(\{ t \} \times dx) - \widehat{\delta}_t^i \widehat{\phi}_t, \quad t \in \bbr_+.
\end{align*}
Consequently, applying Theorem \ref{thm-ELMD-special} completes the proof.
\end{proof}

We conclude this section with considering the particular situation where the risky assets are given by finitely many diffusion processes. More precisely, consider a continuous market $\bbs = \{ S^1,\ldots,S^d \}$, where
\begin{align*}
S^i = S_0^i \cale(X^i), \quad i=1,\ldots,d,
\end{align*}
and where the $\bbr^d$-valued semimartingale $X = (X^1,\ldots,X^d)$ is an It\^{o} process of the form
\begin{align*}
X = \alpha \bdot \lambda + \sigma \bdot W.
\end{align*}
Here we may regard $\sigma$ as an $\bbr^{d \times m}$-valued process. Let $Z$ be a multiplicative special semimartingale with multiplicative decomposition $Z = D B^{-1}$, where
\begin{align*}
D = \cale(-\theta \bdot W) \quad \text{and} \quad B = \cale(r \bdot \lambda) = \exp(r \bdot \lambda)
\end{align*}
with market price of risk $\theta \in L_{\loc}^2(W)$ and short rate $r \in L_{\loc}^1(\lambda)$. As an immediate consequence of Theorem \ref{thm-jd} we obtain the following result.

\begin{corollary}\label{cor-diffusion-models}
$Z$ is an ELMD for $\bbs$ if and only if
\begin{align}\label{eq-diffusion-1}
\sigma \theta = \alpha - r \bbI_{\bbr^d} \quad \text{$\lambda$-a.e.} \quad \text{$\bbp$-a.e.}
\end{align}
where we agree on the notation $\bbI_{\bbr^d} = (1,\ldots,1) \in \bbr^d$.
\end{corollary}

\begin{remark}
Concerning the construction of an ELMD $Z$, there are two possible approaches:
\begin{itemize}
\item Let us fix a market price of risk $\theta$. Then an appropriate short rate $r$ satisfying (\ref{eq-diffusion-1}) exists if and only if $\alpha - \sigma \theta \in \lin \{ \bbI_{\bbr^d} \}$. In this case, the short rate is unique and given by
\begin{align*}
r \bbI_{\bbr^d} = \alpha - \sigma \theta.
\end{align*}
This confirms the statement about the uniqueness of the savings account; see Proposition \ref{prop-savings-account-unique}.

\item Let us fix a short rate $r$. Then an appropriate market price $\theta$ is a solution of the $\bbr^d$-valued linear equation (\ref{eq-diffusion-1}). Depending on the structure of the matrix $\sigma$, there may be several solutions.
\end{itemize}
\end{remark}

\begin{examples}\label{ex-particular}
Let us consider some particular situations:
\begin{enumerate}
\item Assume $d=m=1$ and $\sigma > 0$. This example includes the well-known Black Scholes model. It highlights the fact that there can be several ELMDs. Indeed, equation (\ref{eq-diffusion-1}) is satisfied if and only if
\begin{align*}
\theta = \frac{\alpha - r}{\sigma} \quad \Longleftrightarrow \quad r = \alpha - \sigma \theta.
\end{align*}

\item Assume $d=2$, $m=1$ and $\alpha^1 \neq \alpha^2$, $\sigma^1 = \sigma^2$. This example shows that an ELMD does not need to exist. Indeed, setting $\sigma := \sigma^1$, equation (\ref{eq-diffusion-1}) is satisfied if and only if
\begin{align*}
\sigma \theta = \alpha^1 - r \quad \text{and} \quad \sigma \theta = \alpha^2 - r,
\end{align*}
which is impossible because $\alpha^1 \neq \alpha^2$. Therefore, by Theorem \ref{thm-FTAP} a savings account $B$ such that $\calp_{\sfi,1}^+(\{ S^1,S^2,B \})$ satisfies NUPBR does not exist.

\item Assume $d=1$, $m=2$ and $\sigma^1,\sigma^2 > 0$. This example shows that the market price of risk does not need to be unique. Indeed, equation (\ref{eq-diffusion-1}) is satisfied if and only if
\begin{align*}
\sigma^1 \theta^1 + \sigma^2 \theta^2 = \alpha - r.
\end{align*}
Hence, for a fixed short rate $r$ there are several solutions for the market price of risk $\theta = (\theta^1,\theta^2)$.
\end{enumerate}
\end{examples}

\section{Further examples}\label{sec-further-examples}

In this section we provide further examples which are related to our main results. The first example deals with the construction of arbitrage free markets by using the real-world pricing formula:

\begin{example}[Real-world pricing]\label{example-real-world-pricing}
We fix a savings account $B$, a local martingale $D > 0$ and define the multiplicative special semimartingale $Z := D B^{-1}$. Let $H^1,\ldots,H^d$ be nonnegative $\calf_T$-measurable contingent claims for some $d \in \bbn$ such that 
\begin{align}\label{H-int}
H^i Z_T \in L^1 \quad \text{for all $i=1,\ldots,d$.} 
\end{align}
We define the market $\bbs = \{ S^1,\ldots,S^d \}$ by the \emph{real-world pricing formula}
\begin{align}\label{real-world-formula}
S_t^i := Z_t^{-1} \bbe [ H^i Z_T | \calf_t ], \quad t \in [0,T]
\end{align}
for all $i=1,\ldots,d$. Then $Z$ is an ELMD for $\bbs$, and by Theorem \ref{thm-FTAP} the set $\calp_{\sfi,1}^+(\bbs \cup \{ B \})$ satisfies NUPBR. If $H^1 = 1$, then the real-world pricing formula reads
\begin{align}\label{real-world-bond}
S_t^1 = Z_t^{-1} \bbe [ Z_T | \calf_t ], \quad t \in [0,T],
\end{align}
and the first primary security account is a zero-coupon bond with maturity date $T$.
\end{example}

We can derive the widely applied risk-neutral pricing formula from the more general real-world pricing formula as a special case:

\begin{example}[Risk-neutral pricing]
Suppose that in the setting of Example \ref{example-real-world-pricing} the savings accounts $B$ and $B^{-1}$ are bounded. By Theorem \ref{thm-FTAP-classical} the set $\calp_{\sfi,0}^{\adm}(\bbs \cup \{ B \})$ satisfies NFLVR if and only if $D$ is a true martingale. Suppose that this is the case. Then, also by Theorem \ref{thm-FTAP-classical}, there exists an ELMM $\bbq \approx \bbp$ for $\bbs B^{-1}$, and its density process is given by $D / D_0$. By (\ref{H-int}) we have
\begin{align*}
H^i B_T^{-1} \in L^1(\bbq) \quad \text{for all $i=1,\ldots,d$.} 
\end{align*}
Hence, using the Bayes' Rule (see \cite[III.3.9]{Jacod-Shiryaev}) we obtain the \emph{risk-neutral pricing formula}
\begin{align}\label{risk-neutral-formula}
S_t^i = B_t \, \bbe_{\bbq} [ H^i B_T^{-1} | \calf_t ], \quad t \in [0,T]
\end{align}
for all $i=1,\ldots,d$. If $H^1 = 1$, then the real-world pricing formula reads
\begin{align}\label{risk-neutral-bond}
S_t^1 = B_t \, \bbe_{\bbq} [ B_T^{-1} | \calf_t ], \quad t \in [0,T],
\end{align}
and the first primary security account is a zero-coupon bond with maturity date $T$.
\end{example}

The real-world pricing formula (\ref{real-world-formula}) can also be used for pricing and hedging contingent claims:

\begin{example}[Pricing and hedging of contingent claims]\label{example-pricing}
Let $\bbs = \{ S^1,\ldots,S^d \}$ be a market, and let $Z$ be an ELMD for $\bbs$ which is a multiplicative special semimartingale of the form $Z = D B^{-1}$ with a local martingale $D > 0$ and a savings account $B$. By Theorem \ref{thm-FTAP} the market is free of arbitrage in the sense that $\calp_{\sfi,1}^+(\bbs \cup \{ B \})$ satisfies NUPBR. Now, let $H$ be a nonnegative $\calf_T$-measurable contingent claim such that $H Z_T \in L^1$.
\begin{itemize}
\item We would like to find a consistent price process $\pi$ for the contingent claim $H$, which means that $\calp_{\sfi,1}^+(\bbs \cup \{ \pi \} \cup \{ B \})$ should also satisfy NUPBR. This is achieved by using the real-world pricing formula
\begin{align}\label{real-world-pi}
\pi_t := Z_t^{-1} \bbe [ H Z_T | \calf_t ], \quad t \in [0,T].
\end{align}
Indeed, the process $Z$ is also an ELMD for $\bbs \cup \{ \pi \}$, and hence, by Theorem \ref{thm-FTAP} the set $\calp_{\sfi,1}^+(\bbs \cup \{ \pi \} \cup \{ B \})$ satisfies NUPBR.

\item We assume that a market participant always \emph{prefers more for less}. In this spirit, the price process $\pi$ is the most economical price process because
\begin{align}\label{cheapest}
\pi \leq V^{\nu}
\end{align}
for every self-financing strategy $\nu = (\delta,\eta) \in \Delta_{\sfi}(\bbs \cup \{ B \})$ such that $V^{\nu} \geq 0$ and $V_T^{\nu} = H$, where $V^{\nu}$ denotes the corresponding self-financing portfolio $V^{\nu} = \sum_{i=1}^d \delta^i S^i + \eta B$. In other words, the price process $\pi$ is a lower bound for the least expensive nonnegative self-financing portfolio which replicates $H$, provided such a portfolio exists. In particular, if $\pi$ can be realized as a self-financing portfolio, then it is the least expensive portfolio replicating $H$. In order to show (\ref{cheapest}), note that by Proposition \ref{prop-sf} the process $V^{\nu} Z$ is a nonnegative local martingale, and hence a supermartingale. Furthermore, the process $\pi Z$ is a martingale. Since a martingale is the minimal nonnegative supermartingale that reaches at time $T$ a given nonnegative integrable value (see \cite[Lemma 10.4.1]{Platen}), we deduce $\pi_T Z_T \leq V_T^{\nu} Z_T$, and hence the inequality (\ref{cheapest}) follows.
\end{itemize}
\end{example}

In principle, the real-world prices of contingent claims can also be determined with the risk-neutral pricing formula, but only if the local martingale $D$ is a true martingale. An example of a local martingale, which is not a true martingale, is the inverse of a squared Bessel process of dimension $4$, which is given by the solution of the stochastic differential equation
\begin{align}\label{Bessel}
d D_t = - 2 D_t^{\frac{3}{2}} dW_t,
\end{align}
where $W$ is a real-valued standard Wiener process. This process plays an important role in the minimal market model; see, for example \cite[Sec. 13]{Platen}. The following example shows that the real-world pricing formula also provides the forward measure pricing formula in a market with a zero-coupon bond.

\begin{example}[Forward measure pricing formula]
Let $\bbs = \{ P^T \}$ be a market consisting of a zero-coupon bond $P^T$ with maturity $T$. Let $Z$ be an ELMD for $\bbs$ which is a multiplicative special semimartingale of the form $Z = D B^{-1}$ with a local martingale $D > 0$ and a savings account $B$. By Theorem \ref{thm-FTAP} the set $\calp_{\sfi,1}^+(\{ P^T,B \})$ satisfies NUPBR. Now, let $H$ be a nonnegative $\calf_T$-measurable contingent claim such that $H Z_T \in L^1$. By Example \ref{example-pricing} we know that the least expensive price process $\pi$ is given by the real-world pricing formula (\ref{real-world-pi}). Since $Z$ is an ELMD for $\bbs$, the process $P^T Z$ is a local martingale. If $P^T Z$ is a true martingale, then using the Bayes' Rule (see \cite[III.3.9]{Jacod-Shiryaev}) the price process $\pi$ can be expressed by the \emph{forward measure pricing formula}
\begin{align}\label{formula-forward}
\pi_t = P_t^T \bbe_{\bbq^T} [ H | \calf_t ], \quad t \in [0,T],
\end{align}
where $\bbq^T \approx \bbp$ denotes the $T$-forward measure with density process $(P^T Z) / (P_0^T Z_0)$. For practitioners it may be interesting to note that the convenient forward measure pricing formula (\ref{formula-forward}) can still be applied when $D$ is not a true martingale, which means that a risk-neutral measure $\bbq \approx \bbp$ for the discounted bond $P^T / B$ might not exist; see also Example \ref{example-bond-free-lunch} below. If $D$ is a true martingale, then the measure change $\bbq^T \approx \bbp$ can be performed in the two steps $\bbq \approx \bbp$ and $\bbq^T \approx \bbq$ with respective density processes $D / D_0$ and $(P^T B^{-1}) / P_0^T$, which is usually done under the risk-neutral approach.
\end{example}

We continue the above example by calculating the least expensive zero-coupon bond price $P^T_0$ under the minimal market model with $D$ as in (\ref{Bessel}).

\begin{example}[A free lunch with vanishing risk]\label{example-bond-free-lunch}
As in the previous example, we consider a market $\bbs = \{ P^T \}$ consisting of a zero-coupon bond $P^T$ with maturity $T$. Here we fix a deterministic savings account $B$; for example it could be $B_t = e^{rt}$, $t \in [0,T]$ for some constant interest rate $r$. Furthermore, we denote by $D > 0$ the strict local martingale given by the stochastic differential equation (\ref{Bessel}) with $D_0 = 1$, and we specify the zero-coupon bond $P^T$ by the real-world pricing formula (\ref{real-world-bond}) with $Z = D B^{-1}$, which becomes here
\begin{align*}
P_t^T = \big( B_t B_T^{-1} \big) D_t^{-1} \bbe [ D_T | \calf_t ], \quad t \in [0,T].
\end{align*}
Then $Z$ is an ELMD for $\bbs$, and hence, by Theorem \ref{thm-FTAP} the set $\calp_{\sfi,1}^+(\{ P^T,B \})$ satisfies NUPBR. However, since $D$ is not a true martingale, we cannot apply Theorem \ref{thm-FTAP-classical}, and, as a consequence, we cannot ensure that the set $\calp_{\sfi,0}^{\adm}(\{ P^T,B \})$ satisfies NFLVR. However, the process $P^T Z$ is a true martingale, which allows us to use the convenient forward measure pricing formula (\ref{formula-forward}) for pricing contingent claims $H$. In order to investigate our observation about the possible existence of a free lunch with vanishing risk, let us denote by $\widetilde{P}^T$ the price of a zero-coupon bond with maturity $T$ computed formally with the risk-neutral pricing formula (\ref{risk-neutral-bond}) under some putative risk-neutral probability measure $\bbq$. Since $B$ is deterministic, we obtain
\begin{align*}
\widetilde{P}^T = B / B_T
\end{align*}
for every choice of the measure $\bbq$, and hence $P_0^T < \widetilde{P}_0^T$ because by formula (8.7.17) in \cite{Platen} we have
\begin{align*}
P_0^T = \frac{1}{B_T} \bigg[ 1 - \exp \bigg( - \frac{1}{2T} \bigg) \bigg],
\end{align*}
which is less than the risk-neutral zero-coupon bond $\widetilde{P}_0^T = B_T^{-1}$. In accordance with our previous observation, the difference between these two zero-coupon bond prices indicates the presence of some free lunch with vanishing risk. However, as we have seen, the NUPBR condition is still satisfied. By making $1/Z$ tradeable as num\'{e}raire portfolio, see \cite{Platen} and \cite{Karatzas-Kardaras}, one can hedge the least expensive zero-coupon bond. This allows us to exploit the identified free lunch with vanishing risk, which is caused by the strict local martingale property of D. From the practical perspective, this allows us to produce long-term zero-coupon bond type payouts less expensively than under the NFLVR condition, which is likely to have significant impact on the pension and insurance industry, see e.g. \cite{SZP}, through the realistic modeling of the ELMD $Z$ for $\bbs$, which is the inverse of the num\'{e}raire portfolio $1/Z$. The latter coincides, up to some subtleties, with the growth optimal portfolio; see \cite{Karatzas-Kardaras}.
\end{example}

\section{Dynamic trading strategies}\label{sec-dynamic}

For risk management purposes, trading strategies that do not need to be self-financing, can be crucial; see, e.g., \cite{Schweizer-1991} and \cite{Du-Platen}. In this section we will construct such strategies which are in line with our findings from the previous sections. Of course, when looking for trading strategies which are not self-financing, not all strategies can be allowed. In order to construct reasonable strategies that do not need to be self-financing, the concept of a locally real-world mean self-financing dynamic trading strategy (see \cite{Du-Platen}) turns out to be fruitful. In the risk-neutral context, an analogous notion has been introduced in \cite{Schweizer-1991}.

As in the previous sections, we consider a finite market $\bbs = \{ S^1,\ldots,S^d \}$ with nonnegative semimartingales. We assume that there exists an ELMD $Z$ for $\bbs$, which is a multiplicative special semimartingale of the form $Z = D B^{-1}$ with a local martingale $D > 0$ and a savings account $B$. According to Theorem \ref{thm-FTAP}, the set $\calp_{\sfi,1}^+(\bbs \cup \{ B \})$ satisfies NUPBR.

\begin{definition}
A \emph{dynamic trading strategy} $\nu = (\delta,\eta)$ consists of a self-financing strategy $\delta \in \Delta_{\sfi}(\bbs)$ and a real-valued optional process $\eta$, which is integrable with respect to $B$, such that the portfolio
\begin{align*}
V^{\nu} := S^{\delta} + B^{\eta},
\end{align*}
where we use the common notations $S^{\delta} := \delta \cdot S$ and $B^{\eta} := \eta \cdot B$, satisfies
\begin{align*}
V^{\nu} = S_0^{\delta} + \delta \bdot S + B^{\eta}.
\end{align*}
In this case, we call $\delta$ the \emph{self-financing part} of $\nu$.
\end{definition}

\begin{definition}
Let $\nu = (\delta,\eta)$ be a dynamic trading strategy. The \emph{profit and loss (P\&L) process} $C^{\nu}$ is defined as
\begin{align*}
C^{\nu} := B^{\eta} - B_0^{\eta} - \eta \bdot B.
\end{align*}
\end{definition}

\begin{remark}
Note that $C_0^{\nu} = 0$ and
\begin{align}\label{decomp-V-nu}
V^{\nu} = V_0^{\nu} + \delta \bdot S + \eta \bdot B + C^{\nu}.
\end{align}
Thus, the P\&L process $C^{\nu}$ monitors the cumulative inflow and outflow of extra capital. Note that $C^{\nu} = 0$ if and only if
\begin{align}\label{nu-self-fin}
V^{\nu} = V_0^{\nu} + \delta \bdot S + \eta \bdot B.
\end{align}
This is in particular satisfied if $\nu$ is self-financing in the market $\bbs \cup \{ B \}$, but condition (\ref{nu-self-fin}) is a bit more general, because $\eta$ may be optional, which allows us to go beyond predictable processes. More precisely, we have $\nu \in \Delta_{\sfi}(\bbs \cup \{ B \})$ if and only if $\nu \in L((S,B))$ and equation (\ref{nu-self-fin}) is fulfilled.
\end{remark}

\begin{definition}
A dynamic trading strategy $\nu$ and the corresponding portfolio $V^{\nu}$ are called \emph{locally real-world mean self-financing} if $C^{\nu} \in \calm_{\loc}$.
\end{definition}

Note that every self-financing strategy $\nu$ is locally real-world mean self-financing. The following result shows how we can easily construct locally real-world mean self-financing strategies.

\begin{proposition}\label{prop-rwsf}
Let $\nu = (\delta,\eta)$ be a dynamic trading strategy such that $\eta$ is a local martingale. Then the strategy $\nu$ is locally real-world mean self-financing, and the P\&L process is given by $C^{\nu} = B_- \bdot \eta$.
\end{proposition}

\begin{proof}
According to \cite[Prop. I.4.49.a]{Jacod-Shiryaev} we have
\begin{align*}
B^{\eta} = B_0^{\eta} + B_- \bdot \eta + \eta \bdot B, 
\end{align*}
and hence $C^{\nu} = B_- \bdot \eta \in \calm_{\loc}$, completing the proof.
\end{proof}

Recall that $Z$ denotes an ELMD for the financial market $\bbs$. For every self-financing strategy $\nu$ the corresponding self-financing portfolio $V^{\nu}$ has the property that $V^{\nu} Z$ is also a local martingale, which by virtue of Theorem \ref{thm-FTAP} means that $\calp_{\sfi,1}^+(\bbs \cup \{ V^{\nu} \} \cup \{ B \} )$ also satisfies NUPBR. In this spirit, we wish to construct such dynamic trading strategies $\nu$ such that $V^{\nu} Z$ is also a local martingale. As we will see, this can be achieved by choosing locally real-world mean self-financing strategies which are orthogonal to the deflator $Z$. More precisely, we have the following result.

\begin{theorem}\label{thm-mean-sf}
Let $\nu = (\delta,\eta)$ be a dynamic trading strategy with $V^{\nu} \geq 0$ such that $\eta$ is a continuous local martingale with $[\eta,Z] = 0$. Then the following statements are true:
\begin{enumerate}
\item The strategy $\nu$ is locally real-world mean self-financing.

\item The process $V^{\nu} Z$ is a local martingale.
\end{enumerate}
\end{theorem}

\begin{proof}
By Proposition \ref{prop-rwsf} the strategy $\nu$ is locally real-world mean self-financing, and the P\&L process is given by $C^{\nu} = B_- \bdot \eta$. Noting that $[C^{\nu},Z] = B_- \bdot [\eta,Z] = 0$, by the representation (\ref{decomp-V-nu}) and Proposition \ref{prop-deflator-prop-2} we have $V^{\nu} Z \in \calm_{\sigma}$. Since $V^{\nu} \geq 0$, by Lemma \ref{lemma-M-sigma-loc} we deduce that $V^{\nu} Z \in \calm_{\loc}$.
\end{proof}

Recall that $Z$ is an ELMD for $\calp_{\sfi,1}^+(\bbs \cup \{ B \} \})$. Now, let us denote by $\calp_{{\rm msf},1}^+(\bbs \cup \{ B \})$ the set of all outcomes of trading strategies with initial value one which arise from a nonnegative self-financing portfolio $V^{\nu}$ or from a nonnegative locally real-world mean self-financing portfolio $V^{\nu}$ as in Theorem \ref{thm-mean-sf}. Obviously, the set $\calp_{{\rm msf},1}^+(\bbs \cup \{ B \})$ is larger than $\calp_{{\rm sf},1}^+(\bbs \cup \{ B \})$. However, using Theorem \ref{thm-mean-sf} we deduce that $Z$ is even an ELMD for $\calp_{{\rm msf},1}^+(\bbs \cup \{ B \})$.

\begin{example}[Construction of locally real-world mean self-financing strategies]\label{example-msf-1}
Suppose we are in the situation of Corollary \ref{cor-diffusion-models}, where we consider diffusion models. Then we have $Z = \cale(Y)$, where
\begin{align*}
Y = -\theta \bdot W - r \bdot \lambda
\end{align*}
with an $\bbr^m$-valued standard Wiener process process $W$, a market price of risk $\theta \in L_{\loc}^2(W)$ and a short rate $r \in L_{\loc}^1(\lambda)$. Let $\delta \in \Delta_{\sfi}(\bbs)$ be a self-financing strategy, and let $\vartheta \in L_{\loc}^2(W)$ be such that $V^{\nu} \geq 0$, where $\nu = (\delta,\eta)$ denotes the dynamic trading strategy given by $\eta := \vartheta \bdot W$. We assume $\theta \perp \vartheta$ in the sense that $\theta \cdot \vartheta = 0$. Since $Z = 1 + Z_- \bdot Y$, we obtain
\begin{align*}
[\eta,Z] = [\eta, Z_- \bdot Y] = - \la \vartheta \bdot W, Z_- \theta \bdot W \ra = - Z_- (\vartheta \cdot \theta) \bdot \lambda = 0.
\end{align*}
Therefore, by Theorem \ref{thm-mean-sf} the strategy $\nu$ is locally real-world mean self-financing, and the process $V^{\nu} Z$ is a local martingale.
\end{example}

\begin{example}[Pricing and hedging of contingent claims]\label{example-msf-2}
Consider the framework of Example \ref{example-pricing}, where we have constructed the least expensive price process $\pi$ for a nonnegative contingent claim $H$ with maturity $T$ by using the real-world pricing formula (\ref{real-world-pi}). Note that the claim $H$ can be non-replicable; that is, it can happen that a self-financing portfolio $\nu = (\delta,\eta) \in \Delta_{\sfi}(\bbs \cup \{ B \})$ with $\pi = V^{\nu}$ does not exist. However, allowing dynamic trading strategies that do not need to be self-financing, it might be possible to find a locally real-world mean self-financing strategy $\nu = (\delta,\eta)$ as in Theorem \ref{thm-mean-sf} such that $\pi = V^{\nu}$. Then, by the decomposition (\ref{decomp-V-nu}) the P\&L process $C^{\nu}$ can also be regarded as the non-hedgeable part of the claim $H$. In any case, we can improve the lower bound (\ref{cheapest}) for the least expensive nonnegative portfolio which replicates $H$. Indeed, using our previous results, we even have
\begin{align*}
\pi \leq V^{\nu},
\end{align*}
where $\nu$ is any self-financing strategy or any locally real-world mean self-financing strategy as in Theorem \ref{thm-mean-sf} such that $V^{\nu} \geq 0$ and $V_T^{\nu} = H$. 
\end{example}

\section{The semimartingale property of the primary security accounts}\label{sec-semimartingale}

So far, we have assumed that the financial market consists of nonnegative \emph{semimartingales}. Using the results from \cite{Kardaras-Platen}, we will show in this section that the primary security accounts must be semimartingales under the mild and natural condition that the market satisfies NUPBR for all its self-financing portfolios which are simple and without short positions. We also refer to \cite{Balint-Schweizer-2} for similar results.

Let $\bbs = \{ S^1,\ldots,S^d \}$ be a market consisting of nonnegative, adapted, c\`{a}dl\`{a}g processes $S^1,\ldots,S^d$. Furthermore, let $B$ be a savings account. For what follows, we recall that `$\,\, \bdot \,$' denotes stochastic integration, whereas `$\,\, \cdot \,$' denotes the usual inner product in Euclidean space. A strategy $\nu = (\delta,\eta)$ for $\bbs \cup \{ B \}$ is called \emph{simple} if:
\begin{enumerate}
\item $\delta$ is of the form
\begin{align}\label{simple-1}
\delta = \sum_{j=1}^n \Delta_j \bbI_{\IR \tau_{j-1},\tau_j \IR} 
\end{align}
for some $n \in \bbn$, where $0 = \tau_0 < \tau_1 < \ldots < \tau_n$ are finite stopping times, and $\Delta_j = (\Delta_j^i)_{i=1,\ldots,d}$ is $\calf_{\tau_{j-1}}$-measurable for each $j=1,\ldots,n$, and

\item $\eta$ is a real-valued optional process, which is integrable with respect to $B$.
\end{enumerate}
For such a simple strategy we define the portfolio $V^{\nu} := S^{\delta} + B^{\eta}$, where we use the common notations $S^{\delta} := \delta \cdot S$ and $B^{\eta} := \eta \cdot B$. The portfolio $V^{\nu}$ is called \emph{self-financing} if
\begin{align*}
V^{\nu} = V_0^{\nu} + \sum_{j=1}^n \Delta_j \cdot (S^{\tau_j} - S^{\tau_{j-1}}) + \eta \bdot B,
\end{align*}
where $\tau_0,\ldots,\tau_n$ and $\Delta_1,\ldots,\Delta_n$ stem from the representation (\ref{simple-1}). We denote by $\calp_{\sfi,1,s}^{\nu \geq 0}(\bbs \cup \{ B \})$ the set of all outcomes of simple self-financing portfolios with initial value one such that $\delta^1,\ldots,\delta^d \geq 0$ and $\eta \geq 0$. The latter condition means that no short selling is allowed. We say that a nonnegative semimartingale $S \geq 0$ cannot revive from bankruptcy if $S = 0$ on $\IL \tau,\infty \IL$, where $\tau := \inf \{ t \in \bbr_+ : X_{t-} = 0 \text{ or } X_t = 0 \}$.

\begin{theorem}\label{thm-semimartingales}
Suppose there is a savings account $B$ such that $\calp_{\sfi,1,s}^{\nu \geq 0}(\bbs \cup \{ B \})$ satisfies NUPBR. Then $S^1,\ldots,S^d$ are semimartingales which cannot revive from bankruptcy.
\end{theorem}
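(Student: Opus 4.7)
The plan is to reduce the statement, asset by asset, to the main result of \cite{Kardaras-Platen}, which asserts precisely that a single nonnegative adapted c\`{a}dl\`{a}g process is a semimartingale that cannot revive from bankruptcy whenever the associated simple, no-short-selling self-financing portfolios (with a trivial savings account) satisfy NUPBR.

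First, I would fix $i \in \{1,\ldots,d\}$ and isolate the asset $S^i$ by considering those simple strategies in the hypothesis for which $\delta^j \equiv 0$ for $j \neq i$. Since $\calp_{\sfi,1,s}^{\nu \geq 0}(\bbs \cup \{ B \})$ satisfies NUPBR and NUPBR is inherited by subsets (the set of terminal values only shrinks, so boundedness in probability is preserved), the corresponding set $\calp_{\sfi,1,s}^{\nu \geq 0}(\{ S^i, B \})$ satisfies NUPBR as well.

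Next, I would discount by the savings account. Setting $\widetilde S^i := S^i B^{-1}$, the discounted market is $\{ \widetilde S^i \}$ together with the trivial savings account $\widetilde B \equiv 1$. A simple self-financing portfolio with no short selling in $\{ S^i, B \}$ transforms, under division by $B$, into a simple self-financing portfolio with no short selling in $\{ \widetilde S^i, \widetilde B \}$ (this is a numéraire-change argument of the same kind used for Proposition \ref{prop-NA-concepts-equiv}: the correspondence $\nu \mapsto \nu$ with portfolio $V^\nu$ replaced by $V^\nu B^{-1}$ preserves both the initial value one, the simplicity, and the no-short-selling condition, and passes NUPBR across since $B, B^{-1}$ are strictly positive). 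Hence $\calp_{\sfi,1,s}^{\nu \geq 0}(\{ \widetilde S^i, \widetilde B \})$ satisfies NUPBR.

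Now the discounted market has a constant savings account, which is exactly the setting of \cite{Kardaras-Platen}. Their main theorem applied to $\widetilde S^i$ yields that $\widetilde S^i$ is a semimartingale which cannot revive from bankruptcy. Finally, since $B$ is predictable, c\`{a}dl\`{a}g and of finite variation with $B, B_- > 0$, both $B$ and $B^{-1}$ are semimartingales, so $S^i = \widetilde S^i \cdot B$ is a semimartingale as a product of semimartingales; the bankruptcy property transfers from $\widetilde S^i$ to $S^i$ because multiplication by the strictly positive process $B$ preserves the zero set and in particular the stopping time $\tau := \inf\{ t : X_{t-} = 0 \text{ or } X_t = 0 \}$.

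The main obstacle I anticipate is the careful verification that the notion of ``simple self-financing no-short-selling strategy'' used in \cite{Kardaras-Platen} coincides with the one defined here after the numéraire change; in particular that their NUPBR hypothesis is applied to exactly the right family. Once this bookkeeping is checked, the result is immediate asset by asset. All other ingredients (restriction to a subset, numéraire-change invariance of NUPBR, semimartingale-product stability) are routine.
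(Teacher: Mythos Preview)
Your proposal is correct and follows essentially the same route as the paper: a num\'eraire change to the discounted market $\bbs B^{-1}$ followed by an appeal to the main result of \cite{Kardaras-Platen}, then transferring the semimartingale and no-revival properties back via multiplication by $B$. The only cosmetic difference is that the paper carries out the num\'eraire change for the whole market at once (formalized as Lemma~\ref{lemma-change-num} and Proposition~\ref{prop-semimartingales}, which supply exactly the bookkeeping you flag as the main obstacle) and then applies \cite[Prop.~1.1 and Thm.~1.3]{Kardaras-Platen} to all $d$ assets simultaneously, whereas you first restrict to a single asset and then discount.
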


For the proof of Theorem \ref{thm-semimartingales} we prepare some auxiliary results. Let $B$ be a savings account. We consider the discounted market $\bbx := \bbs B^{-1}$. A strategy $\theta$ for $\bbx$ is called \emph{simple} if it is of the form
\begin{align}\label{simple-2}
\theta = \sum_{j=1}^n \vartheta_j \bbI_{\IL \tau_{j-1},\tau_j \IL}
\end{align}
for some $n \in \bbn$, where $0 = \tau_0 < \tau_1 < \ldots < \tau_n$ are finite stopping times, and $\vartheta_j = (\vartheta_j^i)_{i=1,\ldots,d}$ is $\calf_{\tau_{j-1}}$-measurable for each $j=1,\ldots,n$. For $x \in \bbr$ and such a simple strategy $\theta$ we define the integral process
\begin{align*}
X^{x,\theta} := x + \sum_{j=1}^n \vartheta_j \cdot (S^{\tau_j} - S^{\tau_{j-1}}),
\end{align*}
where $\tau_0,\ldots,\tau_n$ and $\vartheta_1,\ldots,\vartheta_n$ stem from the representation (\ref{simple-2}). We denote by $\cali_{1,s}^{\theta \geq 0}(\bbx)$ the set of all outcomes of integral processes with $x=1$ and a simple strategy $\theta$ such that $\theta \geq 0$ and $X_-^{x,\theta} - \theta \cdot X_- \geq 0$. Furthermore, let $\Delta_s(\bbx)$ be the set of all simple strategies for $\bbx$, and let $\Delta_{\sfi,s}(\bar{\bbx})$ be the set of all simple self-financing strategies for $\bar{\bbx} := \bbx \cup \{ 1 \}$. Then we have the following result, which is similar to Lemma \ref{lemma-sf-bijection}. 

\begin{lemma}\label{lemma-change-num}
Suppose that $1 \notin \bbx$. Then there is a bijection between $\bbr \times \Delta_s(\bbx)$ and $\Delta_{\sfi,s}(\bar{\bbx})$, which is defined as follows:
\begin{enumerate}
\item For $\delta \in \Delta_{\sfi}(\bar{\bbx})$ we assign
\begin{align}\label{trans-1}
(\delta,\eta) = \nu \mapsto (x,\theta) := (X_0^{\delta},\delta) \in \bbr \times \Delta(\bbx).
\end{align}
\item For $(x,\theta) \in \bbr \times \Delta(\bbx)$ we assign
\begin{align}\label{trans-2}
(x,\theta) \mapsto (\delta,\eta) = \nu = (\theta, X_-^{x,\theta} - \theta \cdot X_- ) \in \Delta_{\sfi}(\bar{\bbx}).
\end{align}
\end{enumerate}
Furthermore, for all $(x,\theta) \in \bbr \times \Delta_s(\bbx)$ and the corresponding self-financing strategy $\nu \in \Delta_{\sfi,s}(\bar{\bbx})$ we have
\begin{align}\label{trans-3}
\bar{X}^{\nu} = X^{x,\theta}.
\end{align}
\end{lemma}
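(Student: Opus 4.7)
The plan is to argue in direct analogy with Lemma \ref{lemma-sf-bijection}, exploiting that the second component of $\bar{\bbx} = \bbx \cup \{1\}$ is the constant process $1$, so stochastic integration against it vanishes and the self-financing constraint for simple strategies collapses to a finite (telescoping) algebraic identity. The content of the lemma is then mostly bookkeeping.

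\emph{Well-definedness.} For (\ref{trans-1}), the risky part $\delta$ of a simple $\nu = (\delta,\eta) \in \Delta_{\sfi,s}(\bar{\bbx})$ is by definition of the form (\ref{simple-1}) for $\bbx$, so $\theta := \delta \in \Delta_s(\bbx)$. For (\ref{trans-2}), given $(x,\theta) \in \bbr \times \Delta_s(\bbx)$, set $\delta := \theta$ and $\eta := X^{x,\theta}_- - \theta \cdot X_-$. Both components are adapted and, as differences of products of optional processes, optional; integrability of $\eta$ against the constant account is automatic. The only nontrivial point is to check that the resulting $\nu$ is self-financing in $\bar{\bbx}$.

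\emph{The identity (\ref{trans-3}) and self-financing.} Substituting the definition of $\eta$,
\begin{align*}
\bar{X}^{\nu}_t = \theta_t \cdot X_t + \eta_t = X^{x,\theta}_{t-} + \theta_t \cdot \Delta X_t.
\end{align*}
For simple $\theta$ of the form (\ref{simple-2}), direct inspection of the càdlàg process $X^{x,\theta}$ on each interval $\IL \tau_{j-1},\tau_j \IL$ shows that the right-hand side equals $X^{x,\theta}_t$, which is precisely (\ref{trans-3}). Since $X^{x,\theta} = x + \sum_{j=1}^n \vartheta_j \cdot (X^{\tau_j} - X^{\tau_{j-1}})$ and $\eta \bdot 1 \equiv 0$ because the savings account is constant, this is exactly the self-financing identity for $\nu$ in $\bar{\bbx}$, so $\nu \in \Delta_{\sfi,s}(\bar{\bbx})$.

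\emph{Mutual inverses.} Composing (\ref{trans-2}) after (\ref{trans-1}) returns $\theta$ in the second slot and in the first slot the initial value $\bar{X}^\nu_0 = X^{x,\theta}_0 = x$. Composing (\ref{trans-1}) after (\ref{trans-2}) preserves $\delta$ and replaces $\eta$ by $X^{\bar{X}^\nu_0,\delta}_- - \delta \cdot X_-$; the self-financing property of the original $\nu$ combined with (\ref{trans-3}) in the forward direction gives $\bar{X}^\nu = X^{\bar{X}^\nu_0,\delta}$, whence $\eta = \bar{X}^\nu - \delta \cdot X$ agrees with the replacement upon taking left limits. Hence the two maps are mutually inverse.

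The only delicate point I anticipate is the bookkeeping at the rebalancing times $\tau_j$: since the strategy intervals are left-closed, right-open, the process $\theta$ jumps at each $\tau_j$ from $\vartheta_j$ to $\vartheta_{j+1}$, so verifying (\ref{trans-3}) there requires careful matching of which strategy value captures the jump of $X$ at $\tau_j$. Once this convention-dependent reconciliation is in place, the entire argument reduces to a finite telescoping sum and no stochastic calculus beyond the definitions is needed.
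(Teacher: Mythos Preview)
The paper does not actually prove Lemma \ref{lemma-change-num}; it only remarks that the result ``is similar to Lemma \ref{lemma-sf-bijection}'' (which in turn cites \cite[Lemma 5.1]{Takaoka-Schweizer}) and then moves on. Your direct verification is therefore exactly the kind of argument the paper leaves to the reader, and your approach---reducing the self-financing condition to the vanishing of $\eta \bdot 1$ and then checking (\ref{trans-3}) by a telescoping computation on the intervals of the simple strategy---is the natural way to carry this out.

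One small point worth tightening: your displayed identity $\bar{X}^{\nu}_t = X^{x,\theta}_{t-} + \theta_t \cdot \Delta X_t$ is correct, but the passage from this to $X^{x,\theta}_t$ relies on $\Delta X^{x,\theta}_t = \theta_t \cdot \Delta X_t$, which is precisely the convention-dependent issue you flag at the rebalancing times $\tau_j$. Rather than leaving this as an anticipated delicacy, it would be cleaner to simply compute $X^{x,\theta}$ and $\bar{X}^{\nu}$ explicitly on each interval determined by the $\tau_j$ and verify equality directly; for simple strategies this is a finite case-check and removes any ambiguity about which endpoint convention is in force.
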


\begin{proposition}\label{prop-semimartingales}
$\calp_{\sfi,1,s}^{\nu \geq 0}(\bbs \cup \{ B \})$ satisfies NUPBR if and only if $\cali_{1,s}^{\theta \geq 0}(\bbs B^{-1})$ satisfies NUPBR.
\end{proposition}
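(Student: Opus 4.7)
The plan is to relate the two sets of outcomes by chaining a change-of-numeraire transformation with Lemma \ref{lemma-change-num}, and then to invoke the fact that NUPBR, for a family of nonnegative outcomes, is simply boundedness in probability, which is preserved under multiplication by the strictly positive random variable $B_T$.

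To carry out the bijection, take $\nu = (\delta,\eta)$ simple self-financing for $\bbs \cup \{ B \}$ with $V_0^\nu = 1$, $\delta \geq 0$ and $\eta \geq 0$. A standard numeraire-change argument (on the pattern of Lemma \ref{lemma-sf-bijection}) shows that the same pair, viewed in the discounted market $\bbs B^{-1} \cup \{ 1 \}$, is simple self-financing with portfolio $V^\nu B^{-1}$, and the sign constraints are inherited. Now applying Lemma \ref{lemma-change-num} to $\bbx := \bbs B^{-1}$ and $\bar{\bbx} := \bbx \cup \{ 1 \}$, the pair $(\delta,\eta) \in \Delta_{\sfi,s}(\bar{\bbx})$ corresponds to $(x,\theta) = (1,\delta)$ with $\eta = X_-^{1,\theta} - \theta \cdot X_-$ by (\ref{trans-2}) and $V^\nu B^{-1} = X^{1,\theta}$ by (\ref{trans-3}). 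The conditions $\delta \geq 0$ and $\eta \geq 0$ translate exactly to $\theta \geq 0$ and $X_-^{1,\theta} - \theta \cdot X_- \geq 0$, i.e.\ to membership in $\cali_{1,s}^{\theta \geq 0}(\bbs B^{-1})$. The chain is reversible, so we obtain a bijection between the two families of strategies under which the terminal outcomes satisfy $V_T^\nu = B_T X_T^{1,\theta}$.

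For the final step, since the outcome set $\calk_1$ of each of the two families consists of nonnegative random variables, the corresponding cone $\calb = (\calk_1 - L_+^0) \cap L_+^0$ is bounded in probability if and only if $\calk_1$ is. Hence NUPBR for either family amounts to boundedness in probability of the family of terminal values. Using $V_T^\nu = B_T X_T^{1,\theta}$ together with the fact that $B_T$ and $B_T^{-1}$ are finite and strictly positive, a standard tail estimate (for $\varepsilon > 0$, pick $K$ with $\bbp(B_T > K) < \varepsilon/2$ and $\bbp(B_T^{-1} > K) < \varepsilon/2$, then dominate the tail of the product by the tail of the factor) shows that $\{ V_T^\nu \}$ is bounded in probability if and only if $\{ X_T^{1,\theta} \}$ is, which completes the equivalence. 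The only genuine obstacle I foresee is the first step: Lemma \ref{lemma-sf-bijection} in the appendix treats general self-financing strategies, and one must verify that simplicity and $B$-integrability of $\eta$ survive the transformation to $\bbs B^{-1} \cup \{ 1 \}$; this is routine bookkeeping rather than a conceptual difficulty.
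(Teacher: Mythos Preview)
Your proposal is correct and follows essentially the same route as the paper: the paper's proof is a one-line reference to the argument of Proposition~\ref{prop-NA-concepts-equiv}, carried out via the bijection of Lemma~\ref{lemma-change-num} (in particular relation~(\ref{trans-2})), which is exactly what you spell out. Your explicit handling of the sign constraints ($\delta \geq 0 \leftrightarrow \theta \geq 0$, $\eta \geq 0 \leftrightarrow X_-^{1,\theta} - \theta \cdot X_- \geq 0$) and the transfer of boundedness in probability under multiplication by $B_T$ simply fills in the details the paper leaves to the reader.
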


\begin{proof}
The proof is analogous to that of Proposition \ref{prop-NA-concepts-equiv}, where we take into account relation (\ref{trans-2}) from Lemma \ref{lemma-change-num}.
\end{proof}

Now, we are ready to provide the proof of Theorem \ref{thm-semimartingales}.

\begin{proof}[Proof of Theorem \ref{thm-semimartingales}]
By Proposition \ref{prop-semimartingales} the set $\cali_{1,s}^{\theta \geq 0}(\bbs B^{-1})$ satisfies NUPBR. Hence by \cite[Prop. 1.1 and Thm. 1.3]{Kardaras-Platen} the processes $S^1 B^{-1}, \ldots, S^d B^{-1}$ are semimartingales which cannot revive from bankruptcy. Since $B$ is a savings account, it follows that $S^1,\ldots,S^d$ are semimartingales which cannot revive from bankruptcy.
\end{proof}

\section{Filtration enlargements}\label{sec-filtration}

It can happen that some additional information arises, which is not originally present in the market. Models with insider information have been widely studied in the literature; see, for example \cite{AFK, Aksamit-2, Aksamit-3, Aksamit-book, Fontana-Jeanblanc-Song}. Mathematically, such additional information means that we consider an enlargement of the original filtration. More precisely, we consider a new filtration $\bbg = (\calg_t)_{t \in \bbr_+}$ such that $\calf_t \subset \calg_t$ for all $t \in \bbr_+$, where $\bbf = (\calf_t)_{t \in \bbr_+}$ denotes the original filtration. Typically, there is also a $\bbg$-stopping time $\tau$ involved, and it arises the question when absence of arbitrage of a financial market $\bbs = \{ S^1,\ldots,S^d \}$ under the original filtration $\bbf$ implies absence of arbitrage of the stopped market $\bbs^{\tau} = \{ S^{1,\tau}, \ldots, S^{d,\tau} \}$ under the enlarged filtration $\bbg$. There exist several results for the case that the financial market is already discounted by some num\'{e}raire; see, for example, the aforementioned articles. By virtue of Proposition \ref{prop-NA-concepts-equiv}, we can transfer most of these results to the situation which we consider in this paper.

As an illustration, consider the situation with a progressive filtration enlargement, which we briefly recall; see \cite[Sec. 1.3]{AFK} for further details. Let $\tau : \Omega \to [0,\infty]$ be a $\calf$-measurable random time such that $\bbp(\tau = \infty) = 0$. The progressively enlarged filtration $\bbg = (\calg_t)_{t \in \bbr_+}$ is defined as
\begin{align*}
\calg_t := \{ B \in \calf : B \cap \{ \tau > t \} = B_t \cap \{ \tau > t \} \text{ for some } B_t \in \calf_t \} \quad \text{for all $t \in \bbr_+$.}
\end{align*}
Let $Z$ be the Az\'{e}ma supermartingale given by $Z_t = \bbp(\tau > t | \calf_t)$ for all $t \in \bbr_+$, and let $A$ be the dual optional projection of $\bbI_{\IL \tau,\infty \IL}$. Furthermore, we define
\begin{align*}
\zeta := \inf \{ t \in \bbr_+ : Z_t = 0 \},
\end{align*}
the $\calf_{\zeta}$-measurable event $\Lambda := \{ \tau < \infty, Z_{\zeta-} > 0, \Delta A_{\zeta} = 0 \}$ as well as
\begin{align*}
\eta := \zeta_{\Lambda} := \zeta \bbI_{\Lambda} + \infty \bbI_{\Omega \setminus \Lambda}.
\end{align*}
The following result applies in the situation of Theorem \ref{thm-FTAP}.

\begin{theorem}
Suppose there is a savings account $B$ such that $\calp_{\sfi,1}^+(\bbs \cup \{ B \})$ satisfies NUPBR under $\bbf$. If $\bbp(\eta < \infty, \Delta S_{\eta} \neq 0, \Delta B_{\eta} \neq 0) = 0$, where $S = (S^1,\ldots,S^d)$, then $\calp_{\sfi,1}^+(\bbs^{\tau} \cup \{ B^{\tau} \})$ satisfies NUPBR under $\bbg$.
\end{theorem}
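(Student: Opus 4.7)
The overall strategy is a three-step reduction: (i) translate the NUPBR hypothesis under $\bbf$ from self-financing portfolios with the savings account to integral processes on the discounted market; (ii) apply a known filtration-enlargement result for NUPBR (e.g.\ from \cite{AFK}) to the discounted market $\bbs B^{-1}$ under the stopping at $\tau$ and the enlargement to $\bbg$; (iii) translate the resulting $\bbg$-NUPBR back to self-financing portfolios on $\bbs^\tau\cup\{B^\tau\}$. The whole argument then rests on ingredients already present in the paper (Proposition \ref{prop-NA-concepts-equiv}, Theorem \ref{thm-FTAP-plus}) plus one citation from the insider-information literature.

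First, by Proposition \ref{prop-NA-concepts-equiv} applied under $\bbf$, the hypothesis that $\calp_{\sfi,1}^+(\bbs\cup\{B\})$ satisfies NUPBR under $\bbf$ is equivalent to $\cali_1^+(\bbs B^{-1})$ satisfying NUPBR under $\bbf$. Hence the classical existence of a (strict) local martingale deflator for the $\bbf$-semimartingale $X:=SB^{-1}=(S^1B^{-1},\ldots,S^dB^{-1})$ holds (via Theorem \ref{thm-Takaoka}). Second, we invoke a progressive-enlargement theorem for NUPBR: on the event $\{\eta<\infty\}$ the $\bbg$-semimartingale $X^\tau$ inherits NUPBR from $X$, provided $X$ does not jump at $\eta$. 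Third, since $(SB^{-1})^\tau=S^\tau(B^\tau)^{-1}$ and $B^\tau$ remains a savings account (predictable, c\`adl\`ag, of finite variation, strictly positive), a final application of Proposition \ref{prop-NA-concepts-equiv} under $\bbg$ converts the NUPBR of $\cali_1^+(S^\tau(B^\tau)^{-1})$ under $\bbg$ into NUPBR of $\calp_{\sfi,1}^+(\bbs^\tau\cup\{B^\tau\})$ under $\bbg$, which is the desired conclusion.

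\paragraph{Main obstacle.}
The delicate step is matching the jump condition of the enlargement theorem to the hypothesis stated in the theorem. The enlargement result typically asks that the discounted market $X=SB^{-1}$ be continuous at $\eta$ on $\{\eta<\infty\}$, while the hypothesis bars only simultaneous jumps of $S$ and $B$. The bridge is supplied by the predictability of the savings account: since $B$ is predictable c\`adl\`ag and of finite variation, its jumps occur at predictable times, whereas the time $\eta=\zeta_\Lambda$ built from the Az\'ema supermartingale and the condition $\Delta A_\zeta=0$ in the definition of $\Lambda$ avoids the predictable jump set of $B$. Consequently $\Delta B_\eta=0$ on $\{\eta<\infty\}$ $\bbp$-a.s., so the hypothesis $\bbp(\eta<\infty,\,\Delta S_\eta\neq 0,\,\Delta B_\eta\neq 0)=0$ forces $\Delta X_\eta = B^{-1}_{\eta-}\Delta S_\eta$ to vanish as well (using $\Delta(B^{-1})_\eta=0$ whenever $\Delta B_\eta=0$ and the product rule for jumps), yielding the required continuity of $X$ at $\eta$.

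\paragraph{Fallback.}
If the direct matching above is technically awkward, the same conclusion can be reached by working at the level of deflators: Theorem \ref{thm-FTAP-plus} provides an ELMD $Z=DB^{-1}$ for $\bbs$ under $\bbf$; a Jeulin--Yor-type decomposition of $D$ in $\bbg$ (or the explicit construction from \cite{AFK}) produces a local martingale $\widetilde D>0$ under $\bbg$ such that $\widetilde D (B^\tau)^{-1}$ is an ELMD for $\bbs^\tau$ under $\bbg$; invoking Theorem \ref{thm-FTAP-plus} in the reverse direction (applied under the enlarged filtration $\bbg$) then gives NUPBR for $\calp_{\sfi,1}^+(\bbs^\tau\cup\{B^\tau\})$ under $\bbg$. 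This avoids the indirect ``discount--then--transfer'' route and keeps the proof within the language of multiplicative special semimartingale deflators developed in the paper.
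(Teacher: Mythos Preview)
Your three-step reduction --- translate via Proposition~\ref{prop-NA-concepts-equiv} to $\cali_1^+(\bbs B^{-1})$ under $\bbf$, invoke \cite[Thm.~1.4]{AFK} on the discounted market, then translate back under $\bbg$ --- is exactly the paper's argument; the paper's proof is four lines and cites precisely the same ingredients.

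Where your proposal goes wrong is the ``Main obstacle'' paragraph. Grant for the moment your claim that $\Delta B_\eta = 0$ a.s.\ on $\{\eta<\infty\}$. Then the event $\{\eta<\infty,\ \Delta S_\eta\neq 0,\ \Delta B_\eta\neq 0\}$ is already null regardless of what $\Delta S_\eta$ does, so the stated hypothesis becomes vacuous and cannot be invoked to force $\Delta S_\eta = 0$. In that case $\Delta \tilde{S}_\eta = B_{\eta-}^{-1}\Delta S_\eta$ may still be nonzero, and the jump condition required by \cite[Thm.~1.4]{AFK} is not verified. (Separately, the assertion that $\eta=\zeta_\Lambda$ automatically avoids all $\bbf$-predictable times is not a general fact: the defining condition $\Delta A_\zeta=0$ concerns the dual optional projection $A$, not arbitrary predictable finite-variation processes such as $B$.) The paper's own proof does not perform any such bridging; it simply asserts that ``by assumption'' $\bbp(\tau<\infty,\ \Delta\tilde S_\eta\neq 0)=0$, i.e.\ it reads the hypothesis as directly controlling the jump of the \emph{discounted} process $\tilde S = S B^{-1}$. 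Your fallback route via deflators does not escape the difficulty either: transporting the $\bbf$-local-martingale $D$ to a $\bbg$-local-martingale $\widetilde D$ through the Jeulin--Yor decomposition in \cite{AFK} requires the same jump condition on the processes being deflated.
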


\begin{proof}
By Proposition \ref{prop-NA-concepts-equiv} the set $\cali_1^+(\bbs B^{-1})$ satisfies NUPBR under $\bbf$. Note that the discounted market is given by $\bbs B^{-1} = \{ \tilde{S}^1,\ldots,\tilde{S}^d \}$, where $\tilde{S}^i := S^i B^{-1}$ for all $i=1,\ldots,d$. By assumption we have $\bbp(\tau < \infty, \Delta \tilde{S}_{\eta} \neq 0) = 0$. Thus, by \cite[Thm. 1.4]{AFK} the set $\cali_1^+((\bbs B^{-1})^{\tau})$ satisfies NUPBR under $\bbg$, where $(\bbs B^{-1})^{\tau} = \{ \tilde{S}^{1,\tau},\ldots,\tilde{S}^{d,\tau} \}$ denotes the stopped discounted market. Consequently, by Proposition \ref{prop-NA-concepts-equiv} the set $\calp_{\sfi,1}^+(\bbs^{\tau} \cup \{ B^{\tau} \})$ satisfies NUPBR under $\bbg$.
\end{proof}

\section{Financial models in discrete time}\label{sec-discrete}

Using our previous results for continuous time models, we can also derive a no-arbitrage result for discrete time models. This result is in accordance with the well-known result concerning the absence of arbitrage in discrete time finance. In this section, we assume that a discrete filtration $(\calf_k)_{k=0,\ldots,T}$ for some integer $T \in \bbn$ with $\calf_0 = \{ \Omega,\emptyset \}$ is given, and we consider a finite market $\bbs = \{ S^1,\ldots,S^d \}$ consisting of nonnegative, adapted processes. As shown in \cite[page 14]{Jacod-Shiryaev}, this setting can be regarded as a particular case of the continuous time framework, which we have considered so far. Note that every $\bbr^d$-valued predictable process $\delta$ belongs to $\Delta(\bbs)$, and that the stochastic integral $\delta \bdot S = (\delta \bdot S_t)_{t=0,\ldots,T}$ is given by
\begin{align*}
\delta \bdot S_0 &= 0,
\\ \delta \bdot S_t &= \sum_{k=1}^t \delta_k \cdot (S_{k} - S_{k-1}), \quad t=1,\ldots,T.
\end{align*}

\begin{theorem}\label{thm-diskret-2}
The following statements are equivalent:
\begin{enumerate}
\item[(i)] There exists a savings account $B$ such that $\calp_{\sfi,0}(\bbs \cup \{ B \})$ satisfies NA.

\item[(ii)] There exist a savings account $B$ and an EMM $\bbq \approx \bbp$ for $\bbs B^{-1}$.

\item[(iii)] There exists an EMD $Z$ for $\bbs$ which is a multiplicative special semimartingale such that the local martingale part is a true martingale.
\end{enumerate}
If the previous conditions are fulfilled, then the savings accounts $B$ in (i)--(ii) can be chosen to be equal, and in (iii) we can choose an ELMD $Z$ for $\bbs$ with multiplicative decomposition $Z = D B^{-1}$ with this savings account $B$.
\end{theorem}

\begin{proof}
Let $B$ be an arbitrary savings account.

\noindent(i) $\Rightarrow$ (iii): By Proposition \ref{prop-NA-concepts-equiv} the set $\cali_0(\bbs B^{-1})$ also satisfies NA, and hence, by \cite[Thm. 1]{Kabanov-Stricker} the set $\cali_0(\bbs B^{-1}) - L_+^0$ is closed in $L^0$. Therefore, by \cite[Cor. 5.9]{Platen-Tappe-tvs} the set $\cali_0(\bbs B^{-1})$ also satisfies NFLVR. By \cite[Prop. 7.27]{Platen-Tappe-tvs} it follows that $\cali_1(\bbs B^{-1})$ satisfies NA$_1$, and hence NUPBR. Of course, the subset $\cali_1^+(\bbs B^{-1})$ also satisfies NUPBR. Therefore, by Proposition \ref{prop-NA-concepts-equiv} the set $\calp_{\sfi,1}^+(\bbs \cup \{ B \})$ satisfies NUPBR. Hence by Theorem \ref{thm-FTAP} there exists a local martingale $D > 0$ such that $Z = D B^{-1}$ is an ELMD for $\bbs$. By \cite[Thm. 1]{Jacod-Shiryaev-Finance} the process $D$ is a generalized martingale. Therefore, for each $t=0,\ldots,T$ we have $\bbp$-almost surely
\begin{align*}
\bbe[D_t] = \bbe[D_t | \calf_0] = D_0 < \infty,
\end{align*}
and hence $D_t \in \call^1$, proving that $D$ is a martingale. Analogously, we show that the processes $S^1 Z, \ldots, S^d Z$ are martingales, proving that $Z$ is an EMD for $\bbs$.

\noindent(iii) $\Rightarrow$ (ii): Note that $D$ is an EMD for $\bbs B^{-1}$. Let $\bbq \approx \bbp$ be the equivalent probability measure on $(\Omega,\calf_T)$ with Radon-Nikodym derivative $\frac{d \bbq}{d \bbp} = D_T / D_0$. Then $\bbq$ is an EMM for $\bbs B^{-1}$.

\noindent(ii) $\Rightarrow$ (i): Let $\xi \in \cali_0(\bbs B^{-1}) \cap L_+^0$ be arbitrary. Then there exists a strategy $\delta \in \Delta(\bbs B^{-1})$ such that $(\delta \bdot (S B^{-1}))_T = \xi$. Since $\bbq$ is an EMM for $\bbs B^{-1}$, the process $M := \delta \bdot (S B^{-1})$ is a $d$-martingale transform under $\bbq$. Therefore, by \cite[Thm. 1]{Jacod-Shiryaev-Finance} the process $M$ is a generalized $\bbq$-martingale. Hence, we have $\bbq$-almost surely
\begin{align*}
\bbe_{\bbq}[\xi] = \bbe_{\bbq}[M_T] = \bbe_{\bbq}[M_T | \calf_0] = M_0 = 0.
\end{align*}
Since $\xi \geq 0$ and $\bbq \approx \bbp$, we deduce that $\xi = 0$. Hence $\cali_0(\bbs B^{-1})$ satisfies NA, and by Proposition \ref{prop-NA-concepts-equiv} it follows that $\calp_{\sfi,0}(\bbs \cup \{ B \})$ satisfies NA.
\end{proof}

Important in the previous result is to note that NA is equivalent to the existence of an EMM $\bbq \approx \bbp$, which connects to the well-known no-arbitrage result in discrete time; see, e.g. \cite{FS}. This is due to the fact that, in the present discrete time setting, for every deflator as in Theorem \ref{thm-FTAP}, which is a multiplicative special semimartingale, the local martingale part is a true martingale, which gives rise to the aforementioned measure change. This finding indicates that without moving to continuous time modeling one would be unable to  exploit less expensive real world pricing that would work in practice when the existing market would have an ELMD that is a strict local martingale.

\begin{remark}
Concerning the existence of self-financing arbitrage portfolios, we can draw the following conclusions:
\begin{enumerate}
\item In the situation of Theorem \ref{thm-FTAP} a nonnegative self-financing arbitrage portfolio does not exist in the market $\bbs \cup \{ B \}$. Note that this does not exclude the existence of arbitrage portfolios which go negative in between, say admissible arbitrage portfolios.

\item In the situation of Theorem \ref{thm-FTAP-classical} an admissible self-financing arbitrage portfolio does not exist in the market $\bbs \cup \{ B \}$. Note that this does not exclude the existence of arbitrage portfolios. However, such an arbitrage portfolio cannot be admissible, and hence requires an unbounded credit line, i.e., it is not uniformly bounded from below.

\item In the situation of Theorem \ref{thm-diskret-2} a self-financing arbitrage portfolio does not exist in the market $\bbs \cup \{ B \}$.
\end{enumerate}
\end{remark}

\begin{appendix}

\section{Vector stochastic integration}\label{app-vector-integration}

In this appendix we provide the required results about vector stochastic integration. A general reference about this topic is \cite{Shiryaev-Cherny}, to which we also refer concerning upcoming notations.

Let $(\Omega,\calf,(\calf_t)_{t \in \bbr_+},\bbp)$ be a stochastic basis satisfying the usual conditions. For a multi-dimensional semimartingale $X \in \cals^d$ we denote by $L(X)$ the space of all $X$-integrable processes; see \cite{Shiryaev-Cherny}. The proof of the following result is straightforward and therefore omitted.

\begin{lemma}\label{lemma-int-same-comp}
Let $X \in \cals$ be a semimartingale, and let $H,K$ be two predictable $\bbr^d$-valued processes such that $H \cdot K \in L(X)$. Then we have $HK \in L(X \bbI_{\bbr^d})$ and the identity
\begin{align*}
(H \cdot K) \bdot X = (HK) \bdot (X \bbI_{\bbr^d}),
\end{align*}
where the $\bbr^d$-valued process $HK$ has the components $(HK)^i := H^i K^i$ for each $i=1,\ldots,d$, and where $X \bbI_{\bbr^d}$ denotes the $\bbr^d$-valued process $(X,\ldots,X)$.
\end{lemma}

\begin{lemma}\label{lemma-ass}
Let $X \in \cals^d$ and $H \in L(X)$ be arbitrary. Let $K$ be a $\bbr$-valued predictable, locally bounded process. Then we have
\begin{align*}
K \in L(H \bdot X), \quad KH \in L(X), \quad H \in L(K \bdot X)
\end{align*}
and the identities
\begin{align*}
K \bdot (H \bdot X) = (KH) \bdot X = H \bdot (K \bdot X),
\end{align*}
where $K \bdot X$ denotes the $\bbr^d$-valued process with components $(K \bdot X)^i := K \bdot X^i$ for each $i=1,\ldots,d$.
\end{lemma}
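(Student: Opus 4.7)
The plan is to deduce all three integrability statements and the two identities from the associativity theory of vector stochastic integration in \cite{Shiryaev-Cherny}, with localization exploiting that $K$ is locally bounded as the main technical device.

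To verify the integrability assertions, I would proceed as follows. Since $H \bdot X$ is a scalar semimartingale (as $H \in L(X)$) and $K$ is predictable and locally bounded, the classical theory of scalar stochastic integration immediately yields $K \in L(H \bdot X)$. For $KH \in L(X)$, I would localize by a sequence $(\tau_n)$ of stopping times such that each $K^{\tau_n}$ is bounded; on each such localization interval $KH$ is a bounded predictable multiple of an integrand in $L(X)$, and one concludes by the stability of $L(X)$ under bounded predictable multiplication (see \cite{Shiryaev-Cherny}). The same localization reduces the check $H \in L(K \bdot X)$ to the case of bounded $K$, in which case the integrability condition defining $L(K \bdot X)$ inherits componentwise from that of $L(X)$ up to a uniformly bounded multiplicative factor.

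For the first identity $K \bdot (H \bdot X) = (KH) \bdot X$, this is exactly the scalar-vector associativity theorem for vector stochastic integrals established in \cite{Shiryaev-Cherny}. For the third identity $H \bdot (K \bdot X) = (KH) \bdot X$, I would again localize to assume $K$ bounded; then, using the intrinsic characterization of the vector stochastic integral together with the pointwise identity $KH = HK$ (since $K$ is scalar), the claim reduces to the same associativity principle applied with the roles of the scalar and vector factors arranged accordingly.

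The main obstacle is the genuine subtlety that vector integrability $H \in L(X)$ is strictly weaker than componentwise integrability $H^i \in L(X^i)$, because cancellations between components can be exploited. Hence $H \bdot (K \bdot X)$ cannot naively be decomposed into a sum $\sum_{i=1}^d H^i \bdot (K \bdot X^i)$ of scalar integrals, and the identities must be handled through the intrinsic characterization of the vector integral. This is precisely where the general associativity framework of \cite{Shiryaev-Cherny} is invoked, and where the local boundedness of $K$ (through the stopping-time localization above) is essential in order to reduce to a situation where the associativity theorem applies.
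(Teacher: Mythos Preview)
Your proposal is correct in outline and cites the right source, but it takes a more roundabout path than the paper. The paper avoids any localization argument entirely: it applies two ready-made associativity theorems from \cite{Shiryaev-Cherny} in sequence. First, from $K \in L(H \bdot X)$ and $H \in L(X)$, \cite[Thm.~4.6]{Shiryaev-Cherny} directly yields both $KH \in L(X)$ and the identity $K \bdot (H \bdot X) = (KH) \bdot X$; the integrability $KH \in L(X)$ thus comes for free from the associativity theorem itself, without any stopping-time reduction. Second, since $K$ is predictable and locally bounded we have $K \in L(X^i)$ for each $i$, and since $KH \in L(X)$ is now known, \cite[Thm.~4.7]{Shiryaev-Cherny} gives $H \in L(K \bdot X)$ and $H \bdot (K \bdot X) = (KH) \bdot X$.

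What you do differently is replace the second step by a localization to bounded $K$, arguing that integrability conditions transfer ``up to a uniformly bounded multiplicative factor.'' That can be made to work, but it duplicates effort already packaged in \cite[Thm.~4.7]{Shiryaev-Cherny}, which handles precisely the reverse associativity direction (passing from componentwise integrability of the inner integrand and joint integrability of the product to integrability of the outer integrand). Your concern about the gap between vector integrability and componentwise integrability is well placed, but it is exactly this concern that Theorems~4.6 and~4.7 are designed to address; invoking them makes the localization layer unnecessary.
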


\begin{proof}
Since $K$ is predictable and locally bounded, we have $K \in L(H \bdot X)$, and by \cite[Thm. 4.6]{Shiryaev-Cherny} we obtain $KH \in L(X)$ and
\begin{align*}
K \bdot (H \bdot X) = (KH) \bdot X.
\end{align*}
Since $K$ is predictable and locally bounded, we also have $K \in L(X^i)$ for each $i=1,\ldots,d$. Since $KH \in L(X)$, by \cite[Thm. 4.7]{Shiryaev-Cherny} we obtain $H \in L(K \bdot X)$ and
\begin{align*}
H \bdot (K \bdot X) = (KH) \bdot X,
\end{align*}
completing the proof.
\end{proof}

\begin{lemma}\label{lemma-bracket}
Let $X \in \cals^d$, $Y \in \cals$ and $H \in L(X)$ be arbitrary. Then we have $H \in L_{\var}([X,Y])$ and the identity
\begin{align*}
[H \bdot X, Y] = H \bdot [X,Y],
\end{align*}
where $[X,Y] \in \calv^d$ denotes the $\bbr^d$-valued process with components $[X^i,Y]$ for each $i=1,\ldots,d$.
\end{lemma}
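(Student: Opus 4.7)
The plan is to reduce the vector-valued statement to the scalar identity $[H^i \bdot X^i, Y] = H^i \bdot [X^i, Y]$ (see, e.g., \cite[Prop. I.4.49]{Jacod-Shiryaev}) by exploiting the canonical decomposition of the square-bracket process into its continuous-martingale sharp bracket and its jump sum, and then applying a componentwise reduction from the vector stochastic integration theory of \cite{Shiryaev-Cherny}.

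First, I would establish $H \in L_{\var}([X,Y])$ via the Kunita--Watanabe inequality, which gives componentwise, $\bbp$-a.s. for every $t \in \bbr_+$,
\[
\int_0^t |H^i_s|\, d|[X^i,Y]|_s \leq \bigl([H^i \bdot X^i, H^i \bdot X^i]_t\bigr)^{1/2} \bigl([Y,Y]_t\bigr)^{1/2}.
\]
The right-hand side is locally finite after suitable stopping, since $H \in L(X)$ ensures that each $H^i \bdot X^i$ is a well-defined semimartingale once one passes to a localizing sequence. Summing over $i$ yields the claimed Stieltjes integrability.

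For the identity, I would apply
\[
[Z,Y] = \la Z^c, Y^c \ra + \sum_{s \leq \bullet} \Delta Z_s\, \Delta Y_s
\]
to $Z := H \bdot X$, together with the two structural facts $(H \bdot X)^c = H \bdot X^c$ and $\Delta(H \bdot X) = H \cdot \Delta X$ of vector integration. The jump sum then equals $\sum_{s \leq \bullet} (H_s \cdot \Delta X_s)\, \Delta Y_s = \sum_{i=1}^d H^i \bdot \bigl(\sum_{s \leq \bullet} \Delta X^i_s\, \Delta Y_s\bigr)$ by bilinearity of the Euclidean inner product. For the continuous part, $X^c$ and $Y^c$ are continuous local martingales, and after a localization that makes each $H^i$ integrable with respect to $X^{i,c}$, the scalar identity $\la H^i \bdot X^{i,c}, Y^c \ra = H^i \bdot \la X^{i,c}, Y^c \ra$ applies componentwise. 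Summation over $i$ then gives $\la H \bdot X^c, Y^c \ra = \sum_{i=1}^d H^i \bdot \la X^{i,c}, Y^c \ra$, and combining the two pieces yields $[H \bdot X, Y] = \sum_{i=1}^d H^i \bdot [X^i,Y] = H \bdot [X,Y]$.

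The main obstacle is precisely this localization: although $H \in L(X)$ does not in general entail $H^i \in L(X^i)$ componentwise, one can extract stopping times $\tau_n \uparrow \infty$ from the vector integration theory of \cite{Shiryaev-Cherny} such that on each $\IL 0, \tau_n \IR$ the decomposition $H \bdot X = \sum_{i=1}^d H^i \bdot X^i$ holds. The identity then follows on every $\IL 0, \tau_n \IR$ from the scalar case, and the Kunita--Watanabe bound of the first step supplies the dominating process that legitimizes passing to the limit $n \to \infty$.
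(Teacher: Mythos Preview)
Your argument rests on a claim that is false in general: that from $H \in L(X)$ one can extract stopping times $\tau_n \uparrow \infty$ such that on each $\IL 0,\tau_n \IR$ the componentwise decomposition $H \bdot X = \sum_{i=1}^d H^i \bdot X^i$ holds, i.e.\ that $H^i \in L(X^i)$ locally. This is precisely what vector stochastic integration was designed to circumvent, and no localization repairs it. A simple counterexample: take $d=2$, $X^1 = X^2 = W$ a Brownian motion, and $H_t = (t^{-1/2},-t^{-1/2})$. Then $H \in L(X)$ with $H \bdot X \equiv 0$, but $\int_0^{\epsilon} (H_t^i)^2\, d\la X^{i,c}\ra_t = \int_0^{\epsilon} t^{-1}\,dt = \infty$ for every $\epsilon > 0$, so $H^i \notin L(X^i)$ on any stochastic interval containing the origin. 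Both of your steps collapse on this example: your Kunita--Watanabe bound invokes $[H^i \bdot X^i, H^i \bdot X^i]$, which does not exist, and your identity step for the continuous part requires the scalar relation $\la H^i \bdot X^{i,c}, Y^c \ra = H^i \bdot \la X^{i,c}, Y^c \ra$, which is unavailable for the same reason.

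The paper sidesteps the componentwise reduction entirely. It writes $[X^i,Y] = \rho^i \bdot F$ for a common dominating increasing process $F \in \calv^+$ and then applies \cite[Thm.~4.19]{Shiryaev-Cherny}, a result formulated directly for vector integrands: it yields $H \cdot \rho \in L(F)$ (hence $H \in L_{\var}([X,Y])$) and the identity $[H \bdot X, Y] = (H \cdot \rho) \bdot F = H \bdot [X,Y]$ in one stroke. The point is that the quantity controlled by $H \in L(X)$ is the \emph{scalar} process $H \cdot \rho$ (equivalently, the quadratic form $\sum_{i,j} H^i H^j\, d[X^i,X^j]$), not the individual $|H^i \rho^i|$, and Thm.~4.19 is stated at exactly that level of aggregation. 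If you wish to salvage a Kunita--Watanabe style argument, it must be phrased for the vector integral $H \bdot X$ as a whole rather than componentwise; but at that point you are essentially reproving \cite[Thm.~4.19]{Shiryaev-Cherny}.
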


\begin{proof}
Using the notation from \cite[Thm. 4.19]{Shiryaev-Cherny}, we have $e=1$ and $K=1$. Let $F \in \calv^+$ and an optional $\bbr^d$-valued processes $\rho$ be such that
\begin{align*}
[X^i,Y] &= \rho^i \bdot F, \quad i=1,\ldots,d.
\end{align*}
By \cite[Thm. 4.19]{Shiryaev-Cherny} we obtain $H \cdot \rho \in L(F)$, which means that
\begin{align*}
| H \cdot \rho | \bdot F \in \calv^+,
\end{align*}
and hence $H \in L_{\var}([X,Y])$. Furthermore, by \cite[Thm. 4.19]{Shiryaev-Cherny} we have
\begin{align*}
[H \bdot X, Y] = ( H \cdot \rho ) \bdot F = H \bdot [X,Y],
\end{align*}
completing the proof.
\end{proof}

\section{Market transformations}\label{app-trans}

In this appendix we review a well-known transformation result for self-financing portfolios and draw some conclusions for no-arbitrage concepts. The mathematical framework is that of Section \ref{sec-notation}. We introduce the notation
\begin{align*}
\bar{\bbs} := \bbs \cup \{ 1 \}.
\end{align*}

\begin{lemma}\label{lemma-sf-bijection}
Suppose that $1 \notin \bbs$. Then there is a bijection between $\bbr \times \Delta(\bbs)$ and $\Delta_{\sfi}(\bar{\bbs})$. Furthermore, for all $(x,\delta) \in \bbr \times \Delta(\bbs)$ and the corresponding strategy $\bar{\delta} \in \Delta_{\sfi}(\bar{\bbs})$ we have
\begin{align*}
\bar{S}^{\bar{\delta}} = x + \delta \bdot S.
\end{align*}
\end{lemma}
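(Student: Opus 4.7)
The plan is to exhibit an explicit inverse pair of maps between $\bbr \times \Delta(\bbs)$ and $\Delta_{\sfi}(\bar{\bbs})$, using the fact that stochastic integration against the constant process $1$ is identically zero, so that in the extended market $\bar{\bbs}$ the cash position is uniquely pinned down by the initial wealth and the integrand against $\bbs$.

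In one direction, given $(x,\delta) \in \bbr \times \Delta(\bbs)$, I would define
\begin{align*}
\eta := x + \delta \bdot S - \delta \cdot S,
\end{align*}
so that $\eta$ is adapted (optional) and trivially integrable with respect to the constant process $1$ (any process is, since $\eta \bdot 1 = 0$). Setting $\bar{\delta} := (\delta,\eta)$, the corresponding portfolio is
\begin{align*}
\bar{S}^{\bar{\delta}} = \delta \cdot S + \eta \cdot 1 = \delta \cdot S + \eta = x + \delta \bdot S,
\end{align*}
and therefore $\bar{S}^{\bar{\delta}} = \bar{S}_0^{\bar{\delta}} + \delta \bdot S + \eta \bdot 1 = \bar{S}_0^{\bar{\delta}} + \bar{\delta} \bdot \bar{S}$, confirming that $\bar{\delta} \in \Delta_{\sfi}(\bar{\bbs})$. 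This also already yields the claimed identity $\bar{S}^{\bar{\delta}} = x + \delta \bdot S$.

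In the other direction, given $\bar{\delta} = (\delta,\eta) \in \Delta_{\sfi}(\bar{\bbs})$, I set $x := \bar{S}_0^{\bar{\delta}}$ and keep the same $\delta \in \Delta(\bbs)$. The self-financing relation $\bar{S}^{\bar{\delta}} = x + \delta \bdot S + \eta \bdot 1$ combined with $\eta \bdot 1 = 0$ and $\bar{S}^{\bar{\delta}} = \delta \cdot S + \eta$ forces
\begin{align*}
\eta = x + \delta \bdot S - \delta \cdot S,
\end{align*}
which shows that the map from the previous paragraph is surjective and that $\eta$ is recovered uniquely from $(x,\delta)$. The two assignments are therefore mutually inverse, giving the bijection.

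The only conceptual subtlety is that one must make sure the two conditions hidden in ``$\bar{\delta} \in \Delta(\bar{\bbs})$'' (predictability/integrability of the stock part $\delta$ and optionality/integrability of the cash part $\eta$ against the savings unit $1$) behave as expected: the first is inherited because $\delta \in \Delta(\bbs)$ is unchanged, while the second is automatic because any process is trivially integrable against $1$. There are no analytic obstacles, so the proof reduces to bookkeeping around $\eta \bdot 1 = 0$; I would present it in two short paragraphs matching the two directions above.
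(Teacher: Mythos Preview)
Your argument is correct and makes explicit the construction that the paper merely delegates to \cite[Lemma~5.1]{Takaoka-Schweizer}; it is the standard route, matching the formula the paper records in the analogous Lemma~\ref{lemma-change-num} for simple strategies. One small sharpening: your $\eta = x + \delta \bdot S - \delta \cdot S$ is in fact predictable (not merely optional), since the jump identity $\Delta(\delta \bdot S) = \delta \cdot \Delta S$ gives $\eta = I^{x,\delta}_{-} - \delta \cdot S_{-}$, so no relaxation of the integrand class is needed.
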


\begin{proof}
This is a consequence of \cite[Lemma 5.1]{Takaoka-Schweizer}.
\end{proof}

For the next result, recall the notation (\ref{market-Y}).

\begin{lemma}\label{lemma-sf-num-tech}\cite[Prop. 5.2]{Takaoka-Schweizer}
Suppose that $1 \notin \bbs$. Let $\delta \in \Delta_{\sfi}(\bar{\bbs})$ be a self-financing strategy, and let $Y \geq$ be a nonnegative semimartingale. Then we also have $\delta \in \Delta_{\sfi}(\bar{\bbs} Y)$.
\end{lemma}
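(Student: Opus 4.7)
The goal is to show that the stochastic differential equation characterizing self-financing for $\bar{\bbs}$ is preserved under multiplication by the nonnegative semimartingale $Y$. My plan is to compute $\bar{S}^{\delta} Y$ via integration by parts and then recognize the result as $\bar{S}_0^{\delta} Y_0 + \delta \bdot (\bar{S}Y)$. Since the portfolio process under the transformed market is $(\bar{\bbs} Y)^{\delta} = \delta \cdot (\bar{S} Y) = \bar{S}^{\delta} Y$, this identity is exactly what self-financing for $\bar{\bbs} Y$ requires.

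First, I would record the key jump identity for a self-financing strategy: since $\bar{S}^{\delta} = \bar{S}_0^{\delta} + \delta \bdot \bar{S}$ and jumps of a stochastic integral satisfy $\Delta(\delta \bdot \bar{S}) = \delta \cdot \Delta \bar{S}$, one obtains $\bar{S}_-^{\delta} = \delta \cdot \bar{S}_-$. Then integration by parts (\cite[Prop.~I.4.49.a]{Jacod-Shiryaev} together with \cite[Def.~I.4.45]{Jacod-Shiryaev}) gives
\begin{align*}
\bar{S}^{\delta} Y = \bar{S}_0^{\delta} Y_0 + \bar{S}_-^{\delta} \bdot Y + Y_- \bdot \bar{S}^{\delta} + [\bar{S}^{\delta},Y].
\end{align*}
Substituting $\bar{S}_-^{\delta} = \delta \cdot \bar{S}_-$ into the first integral and applying Lemma~\ref{lemma-int-same-comp} rewrites $\bar{S}_-^{\delta} \bdot Y = \delta \bdot (\bar{S}_- \bdot Y)$, where $\bar{S}_- \bdot Y$ denotes the vector process with components $\bar{S}_-^i \bdot Y$. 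Next, Lemma~\ref{lemma-ass} applied with the predictable locally bounded process $Y_-$ yields $Y_- \bdot \bar{S}^{\delta} = Y_- \bdot (\delta \bdot \bar{S}) = (Y_- \delta) \bdot \bar{S} = \delta \bdot (Y_- \bdot \bar{S})$. Finally, Lemma~\ref{lemma-bracket} gives $[\bar{S}^{\delta},Y] = [\delta \bdot \bar{S},Y] = \delta \bdot [\bar{S},Y]$.

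Combining these three equalities, linearity of the vector integral produces
\begin{align*}
\bar{S}^{\delta} Y = \bar{S}_0^{\delta} Y_0 + \delta \bdot \big( \bar{S}_- \bdot Y + Y_- \bdot \bar{S} + [\bar{S},Y] \big).
\end{align*}
But component-wise integration by parts gives exactly $\bar{S}^i Y - \bar{S}_0^i Y_0 = \bar{S}_-^i \bdot Y + Y_- \bdot \bar{S}^i + [\bar{S}^i,Y]$, so the bracketed vector process equals $\bar{S} Y - \bar{S}_0 Y_0$. Hence $\bar{S}^{\delta} Y = \bar{S}_0^{\delta} Y_0 + \delta \bdot (\bar{S} Y)$. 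Since the portfolio value in the market $\bar{\bbs} Y$ is $(\bar{\bbs} Y)^{\delta} = \bar{S}^{\delta} Y$ with initial value $\bar{S}_0^{\delta} Y_0$, this is precisely the self-financing identity for $\delta$ with respect to $\bar{\bbs} Y$.

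The main obstacle I anticipate is the careful bookkeeping in vector stochastic integration — in particular ensuring that $\delta \in L(\bar{S}Y)$ (which is a prerequisite for $\delta \in \Delta_{\sfi}(\bar{\bbs} Y)$) follows from $\delta \in L(\bar{S})$ and the integrability of $Y_- \delta$ and $\bar{S}_- \delta$ against the appropriate processes. This is handled implicitly by the three appendix lemmas, since each lemma produces a valid vector integral from the given hypotheses, so the composed identity is itself an equality of well-defined semimartingales. The only other subtle point is the identity $\bar{S}_-^{\delta} = \delta \cdot \bar{S}_-$, but this is immediate from the self-financing property and the formula for jumps of a stochastic integral.
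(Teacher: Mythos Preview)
Your proof is correct. The paper does not provide its own proof of this lemma; it simply cites \cite[Prop.~5.2]{Takaoka-Schweizer} and moves on. Your argument is the standard one for this result and is essentially the computation carried out in the paper's Proposition~\ref{prop-deflator-prop-2} (specialized to the self-financing case $Y=0$ there, with the roles of $Y$ and $Z$ interchanged), so it fits well with the paper's toolbox.

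Two minor remarks. First, the step $\bar{S}_-^{\delta} \bdot Y = \delta \bdot (\bar{S}_- \bdot Y)$ is not quite a direct application of Lemma~\ref{lemma-int-same-comp} alone: that lemma gives $(\delta \cdot \bar{S}_-) \bdot Y = (\delta \bar{S}_-) \bdot (Y \bbI_{\bbr^{d+1}})$, and you then need the associativity result \cite[Thm.~4.7]{Shiryaev-Cherny} to pass to $\delta \bdot (\bar{S}_- \bdot Y)$, exactly as done in the proof of Proposition~\ref{prop-deflator-prop-2}. Second, you correctly observe that nonnegativity of $Y$ plays no role in the self-financing computation; it is present only so that $\bar{\bbs} Y$ remains a market of nonnegative semimartingales in the sense used elsewhere in the paper.
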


Recall that $\bbi_{\alpha}$ consists of all integral processes starting in $\alpha$.

\begin{lemma}\label{lemma-wealth-add-1}
Let $X \in \bbs$ be such that $X,X_- > 0$, and set $\bbs_0 := \bbs \setminus \{ X \}$. Then for each $\alpha \geq 0$ we have 
\begin{align*}
\bbi_{\alpha}(\bbs X^{-1}) = \bbi_{\alpha}(\bbs_0 X^{-1}).
\end{align*}
\end{lemma}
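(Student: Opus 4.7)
The plan is to exploit the fact that $X \in \bbs$, so after discounting by $X^{-1}$, the constant process $X X^{-1} = 1$ appears as one component of the market $\bbs X^{-1}$. Since stochastic integration against a constant semimartingale vanishes, the component of a strategy corresponding to this $1$ has no effect on the integral process. This should yield the two desired inclusions almost immediately.

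More concretely, I would argue as follows. Write $\bbs X^{-1} = \{S^i X^{-1} : i \in I\}$, and let $i_0 \in I$ be the index with $S^{i_0} = X$, so that the $i_0$-component of $\bbs X^{-1}$ equals $1$. For the inclusion $\bbi_{\alpha}(\bbs X^{-1}) \subseteq \bbi_{\alpha}(\bbs_0 X^{-1})$, take $Y \in \bbi_{\alpha}(\bbs X^{-1})$ represented as $Y = \alpha + \delta \bdot (S X^{-1})$ for some $\delta \in \Delta(\bbs X^{-1})$. Let $\delta_0$ denote the restriction of $\delta$ to the indices $i \in I \setminus \{i_0\}$. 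Because the $i_0$-component of $\bbs X^{-1}$ is the constant $1$, it carries zero quadratic variation and zero drift, so the vector integrability condition for $\delta$ on $\bbs X^{-1}$ reduces to vector integrability of $\delta_0$ on $\bbs_0 X^{-1}$, giving $\delta_0 \in \Delta(\bbs_0 X^{-1})$. Moreover $\delta^{i_0} \bdot 1 = 0$, hence $\delta \bdot (S X^{-1}) = \delta_0 \bdot (S_0 X^{-1})$, yielding $Y \in \bbi_{\alpha}(\bbs_0 X^{-1})$.

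For the reverse inclusion, take $Y \in \bbi_{\alpha}(\bbs_0 X^{-1})$ with $Y = \alpha + \delta_0 \bdot (S_0 X^{-1})$ for $\delta_0 \in \Delta(\bbs_0 X^{-1})$, and extend $\delta_0$ to a strategy $\delta$ on $\bbs X^{-1}$ by setting $\delta^{i_0} := 0$. Again because the $i_0$-component is constant, $\delta$ lies in $\Delta(\bbs X^{-1})$, and $\delta \bdot (S X^{-1}) = \delta_0 \bdot (S_0 X^{-1})$, so $Y \in \bbi_{\alpha}(\bbs X^{-1})$.

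The only subtlety — and the one point that would need a brief justification rather than a one-line remark — is the verification that vector integrability with respect to the enlarged vector process $S X^{-1}$ (which has a constant component) is equivalent to vector integrability with respect to $S_0 X^{-1}$. This is where the hypothesis $X, X_- > 0$ enters (guaranteeing that $X^{-1}$ is a well-defined semimartingale, so $\bbs X^{-1}$ and $\bbs_0 X^{-1}$ are genuine markets of semimartingales), and the equivalence itself follows from the basic fact that adding or deleting a constant component to a vector semimartingale does not alter the space of admissible integrands in the sense of Shiryaev--Cherny vector integration, because such a component contributes nothing to the characteristics entering the integrability criterion. Modulo this short observation, the proof is essentially a one-line identity $\delta^{i_0} \bdot 1 = 0$.
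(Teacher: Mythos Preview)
Your proposal is correct and follows exactly the same idea as the paper: since $X \in \bbs$, one has $\bbs X^{-1} = \bbs_0 X^{-1} \cup \{1\}$, and the constant component $1$ contributes nothing to the stochastic integral. The paper records this as a one-line observation, while you have spelled out the two inclusions and the (easy) vector-integrability point in more detail.
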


\begin{proof}
Noting that $\bbs X^{-1} = \bbs_0 X^{-1} \cup \{ 1 \}$, the result follows.
\end{proof}

\begin{lemma}\label{lemma-B-operation}
For each savings account $B$ we have $\overline{\bbs B^{-1}} B = \bbs \cup \{ B \}$.
\end{lemma}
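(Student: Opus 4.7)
The proof plan is essentially just to unfold the definitions. By the notational convention $\bar{\bbx} = \bbx \cup \{1\}$ established just above the lemma and the convention (\ref{market-Y}) for the product $\bbx Y$ of a family of semimartingales with a single semimartingale, we have
\begin{align*}
\overline{\bbs B^{-1}} = \{ S^i B^{-1} : i \in I \} \cup \{ 1 \}.
\end{align*}
Applying (\ref{market-Y}) once more (multiplying each member of the family by $B$) gives
\begin{align*}
\overline{\bbs B^{-1}} B = \{ S^i B^{-1} B : i \in I \} \cup \{ 1 \cdot B \} = \{ S^i : i \in I \} \cup \{ B \} = \bbs \cup \{ B \},
\end{align*}
where the middle equality uses that $B, B_- > 0$, so that $B^{-1} B = 1$ up to indistinguishability (in fact literally as processes). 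Since a savings account $B$ is by definition predictable, c\`{a}dl\`{a}g and strictly positive, the cancellation $B^{-1} B = 1$ is valid pointwise, not merely up to an evanescent set, so no measure-theoretic care is needed.

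The only potential subtlety is that one might worry whether $1$ already belongs to $\bbs B^{-1}$ (in which case the union with $\{1\}$ in the definition of $\overline{\bbs B^{-1}}$ would be redundant); but this does not affect the identity, because the final union with $\{B\}$ on the right side is also of the same set-theoretic nature. I therefore do not expect any genuine obstacle: the statement is a one-line bookkeeping check whose role is just to record a notational identity that will be reused later when combining discounting with the adjunction of a savings account.
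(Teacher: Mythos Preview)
Your proof is correct and follows essentially the same approach as the paper's own proof, which is the one-line computation $\overline{\bbs B^{-1}} B = ( \bbs B^{-1} \cup \{ 1 \} ) B = \bbs \cup \{ B \}$. Your additional remarks about the cancellation $B^{-1}B = 1$ and the set-theoretic subtlety are harmless elaborations on the same bookkeeping argument.
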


\begin{proof}
We have
\begin{align*}
\overline{\bbs B^{-1}} B = ( \bbs B^{-1} \cup \{ 1 \} ) B = \bbs \cup \{ B \},
\end{align*}
completing the proof.
\end{proof}

\begin{proposition}\label{prop-NA-concepts-equiv}
Let $B$ be a savings account. Then the following statements are true:
\begin{enumerate}
\item $\calp_{\sfi,0}(\bbs \cup \{ B \})$ satisfies NA if and only if $\cali_0(\bbs B^{-1})$ satisfies NA.

\item $\calp_{\sfi,1}^+(\bbs \cup \{ B \})$ satisfies NUPBR if and only if $\cali_1^+(\bbs B^{-1})$ satisfies NUPBR.

\item Suppose that $B$ is bounded. If the set $\calp_{\sfi,0}^{\adm}(\bbs \cup \{ B \})$ satisfies NFLVR, then $\cali_0^{\adm}(\bbs B^{-1})$ satisfies NFLVR.

\item Suppose that $B^{-1}$ is bounded. If the set $\cali_0^{\adm}(\bbs B^{-1})$ satisfies NFL, then $\calp_{\sfi,0}^{\adm}(\bbs \cup \{ B \})$ satisfies NFL.
\end{enumerate}
\end{proposition}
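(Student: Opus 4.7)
The core of the proof is a bijection between self-financing portfolios for $\bbs \cup \{B\}$ and integral processes for $\bbs B^{-1}$, followed by four parallel arguments tailored to each no-arbitrage concept. My plan is first to establish this bijection once and for all, and then to read off each of the four statements from it.

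To set up the bijection, I would start from Lemma \ref{lemma-sf-bijection} applied to the market $\bbs B^{-1}$, which identifies $\bbr \times \Delta(\bbs B^{-1})$ with $\Delta_{\sfi}(\overline{\bbs B^{-1}})$ via $(x,\delta) \mapsto \bar\delta$, where the resulting portfolio satisfies $\overline{\bbs B^{-1}}^{\,\bar\delta} = x + \delta \bdot (\bbs B^{-1})$. Next, Lemma \ref{lemma-sf-num-tech} with $Y = B$ (and its inverse applied with $Y = B^{-1}$, which is again a semimartingale since $B$ is a savings account) shows that $\bar\delta \in \Delta_{\sfi}(\overline{\bbs B^{-1}})$ if and only if $\bar\delta \in \Delta_{\sfi}(\overline{\bbs B^{-1}} \cdot B)$, and by Lemma \ref{lemma-B-operation} the latter set equals $\Delta_{\sfi}(\bbs \cup \{B\})$. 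Moreover the portfolios transform by $(\bbs \cup \{B\})^{\bar\delta} = B \cdot \overline{\bbs B^{-1}}^{\,\bar\delta} = B \cdot (x + \delta \bdot (\bbs B^{-1}))$. In particular the initial value is $x$ (since $B_0 = 1$), and positivity of the portfolio is equivalent to positivity of $x + \delta \bdot (\bbs B^{-1})$ since $B>0$. Thus the outcome sets at time $T$ satisfy
\begin{align*}
\calp_{\sfi,\alpha}(\bbs \cup \{B\}) = B_T \cdot \cali_\alpha(\bbs B^{-1}), \qquad \calp_{\sfi,\alpha}^+(\bbs \cup \{B\}) = B_T \cdot \cali_\alpha^+(\bbs B^{-1}).
\end{align*}

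For part (1), since $B_T > 0$, multiplication by $B_T$ preserves $L^0_+ \setminus \{0\}$, so NA of one set is equivalent to NA of the other. For part (2), I would verify that the sets $\calb$ relevant for NUPBR also transform by multiplication by $B_T$, i.e.\ $(\calp_{\sfi,1}^+(\bbs \cup \{B\}) - L_+^0) \cap L_+^0 = B_T \cdot (\cali_1^+(\bbs B^{-1}) - L_+^0) \cap L_+^0$. Boundedness in probability is preserved in both directions under multiplication by a strictly positive (but not necessarily bounded) random variable: given $\varepsilon > 0$, pick $L$ with $\bbp(B_T > L) < \varepsilon/2$ (resp.\ $\bbp(1/B_T > L) < \varepsilon/2$) and then control the remaining event uniformly; this gives the equivalence of NUPBR.

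For parts (3) and (4) the extra ingredient is controlling admissibility and $L^\infty$-convergence under multiplication by $B_T^{\pm 1}$. If $B \leq M$ is bounded, then any admissible $X \in \cali_0^{\adm}(\bbs B^{-1})$ with $X \geq -c$ gives $BX \geq -cM$, so $BX$ is admissible for $\bbs \cup \{B\}$; conversely if $B^{-1} \leq K$ is bounded then admissibility of $V$ translates to admissibility of $V/B$. In each case, multiplication by the corresponding bounded factor $B_T$ (resp.\ $B_T^{-1}$) is an $L^\infty$-bounded linear operator: it sends $L^\infty$ norm-convergent sequences to $L^\infty$ norm-convergent sequences, handling part (3); and for part (4) it is weak-$^*$ continuous, because multiplication by $B_T^{-1}$ has the adjoint "multiplication by $1/B_T$" on $L^1$, which maps $L^1$ to $L^1$ since $1/B_T \leq K$. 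Thus the relevant closures $\overline{\calc}$ and $\overline{\calc}^*$ correspond, and NFLVR (resp.\ NFL) transfers in the stated direction.

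The main obstacle I anticipate is bookkeeping rather than conceptual: pinning down exactly which boundedness hypothesis on $B$ is needed for which direction (that is why (3) and (4) are only one-way implications), and handling the weak-$^*$ topology carefully in (4), where I must verify that multiplication by $1/B_T$ genuinely maps $L^1$ into $L^1$, which is precisely where the boundedness of $B^{-1}$ enters.
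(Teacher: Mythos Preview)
Your proposal is correct and follows essentially the same route as the paper: both use Lemmas \ref{lemma-sf-bijection}, \ref{lemma-sf-num-tech} and \ref{lemma-B-operation} to set up the correspondence $\calp_{\sfi,\alpha}(\bbs\cup\{B\}) = B_T \cdot \cali_\alpha(\bbs B^{-1})$ and then check each no-arbitrage notion under multiplication by $B_T^{\pm 1}$. The paper spells out only part (3) in detail and declares the remaining implications ``similar'', so your treatment of (2) and (4) is in fact more explicit than the paper's; the only minor point you omit is the preliminary reduction via Lemma \ref{lemma-wealth-add-1} to the case $B\notin\bbs$.
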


\begin{proof}
Each of the six implications has a similar proof. Exemplarily, we shall prove the third statement. By Lemma \ref{lemma-wealth-add-1} we may assume that $B \notin \bbs$; otherwise we consider $\bbs_0 := \bbs \setminus \{ B \}$ rather than $\bbs$. Let $\xi \in \overline{\calc} \cap L_+^{\infty}$ be arbitrary, where
\begin{align*}
\calc := ( \cali_0^{\adm}(\bbs B^{-1}) - L_+^0 ) \cap L^{\infty}.
\end{align*}
Then there exists a sequence $(\xi^j)_{j \in \bbn} \subset \calc$ such that $\| \xi^j - \xi \|_{L^{\infty}} \to 0$. Let $j \in \bbn$ be arbitrary. Then there exist a strategy $\delta^j \in \Delta(\bbs B^{-1})$ and a constant $a^j \in \bbr_+$ such that
\begin{align*}
\delta^j \bdot ( S B^{-1} ) &\geq -a^j,
\\ \big( \delta^j \bdot ( S B^{-1} ) \big)_T &\geq \xi^j.
\end{align*}
By Lemma \ref{lemma-sf-bijection} there is a self-financing strategy $\bar{\delta}^j$ of the form
\begin{align*}
\bar{\delta}^j = (\delta^j,\eta^j) \in \Delta_{\sfi}(\overline{\bbs B^{-1}})
\end{align*}
for some predictable process $\eta^j$ such that
\begin{align*}
\delta^j \bdot ( S B^{-1} ) = (\delta^j,\eta^j) \cdot ( S B^{-1}, 1 ) = \delta^j \cdot (S B^{-1}) + \eta^j.
\end{align*}
Therefore, we have
\begin{align*}
\delta_0^j \cdot ( S_0 B_0^{-1} ) + \eta_0^j &= 0,
\\ \delta^j \cdot ( S B^{-1} ) + \eta^j &\geq -a^j,
\\ \delta_T^j \cdot ( S_T B_T^{-1} ) + \eta_T^j &\geq \xi^j.
\end{align*}
By Lemmas \ref{lemma-sf-num-tech} and \ref{lemma-B-operation} we have
\begin{align*}
\bar{\delta}^j \in \Delta_{\sfi}(\overline{\bbs B^{-1}} B) = \Delta_{\sfi}(\bbs \cup \{ B \}).
\end{align*}
Furthermore, we have
\begin{align*}
\delta_0^j \cdot S_0 + \eta_0^j \cdot B_0 &= 0,
\\ \delta^j \cdot S + \eta^j \cdot B &\geq -a^j B,
\\ \delta_T^j \cdot S_T + \eta_T^j \cdot B_T &\geq \xi^j B_T.
\end{align*}
In other words, we have
\begin{align}\label{B-bounded-1}
(S,B)_0^{\bar{\delta}^j} = 0, \quad (S,B)^{\bar{\delta}^j} \geq -a^j B \quad \text{and} \quad (S,B)_T^{\bar{\delta}^j} \geq \xi^j B_T.
\end{align}
Since $B$ is bounded, the portfolio $(S,B)^{\bar{\delta}^j}$ is admissible, and we have $\xi^j B_T \in L^{\infty}$. Therefore, we deduce that $\xi^j B_T \in \cale$, where
\begin{align}\label{B-bounded-2}
\cale := ( \calp_{\sfi,0}^{\adm}(\bbs \cup \{ B \}) - L_+^0 ) \cap L^{\infty}.
\end{align}
Since $\| \xi^j - \xi \|_{L^{\infty}} \to 0$ and $B_T \in L^{\infty}$, we also have $\| \xi^j B_T - \xi B_T \|_{L^{\infty}} \to 0$. Therefore, we have $\xi B_T \in \overline{\cale} \cap L_+^{\infty}$. Since $\calp_{\sfi,0}^{\adm}(\bbs \cup \{ B \})$ satisfies NFLVR, it follows that $\xi B_T = 0$, and hence $\xi = 0$. This proves that $\cali_0^{\adm}(\bbs B^{-1})$ satisfies NFLVR.
\end{proof}

\begin{remark}\label{remark-B-bounded}
In the proof of the third statement, the assumption that the savings account $B$ is bounded, is needed directly after (\ref{B-bounded-1}) in order to show that the portfolio $(S,B)^{\bar{\delta}^j}$ is admissible and that its terminal value belongs to the set $\cale$ introduced in (\ref{B-bounded-2}). For similar reasons, we require that $B^{-1}$ is bounded in the proof of the fourth statement.
\end{remark}

\section{Equivalent local martingale deflators and related concepts}\label{app-ELMD}

In this appendix we present the required results about local martingale deflators and related concepts. The mathematical framework is that of Section \ref{sec-notation}. In particular, recall the sets of potential security processes which we have introduced there. Now, we introduce the unions
\begin{align*}
\bbi(\bbs) := \bigcup_{\alpha \geq 0} \bbi_{\alpha}(\bbs), \quad \bbi^{\adm}(\bbs) := \bigcup_{\alpha \geq 0} \bbi_{\alpha}^{\adm}(\bbs) \quad \text{and} \quad \bbi^+(\bbs) := \bigcup_{\alpha \geq 0} \bbi_{\alpha}^+(\bbs).
\end{align*}

\begin{lemma}\cite[Cor. 3.5]{Ansel-Stricker}\label{lemma-M-sigma-loc}
For every admissible process $X \in \calm_{\sigma}$ we have $X \in \calm_{\loc}$.
\end{lemma}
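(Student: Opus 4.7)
The statement is precisely the Ansel--Stricker theorem specialized to a constant lower bound. Their result, \cite[Cor. 3.5]{Ansel-Stricker}, asserts that any $\sigma$-martingale $X$ satisfying $X \geq -Y$ for some nonnegative integrable random variable $Y$ is a local martingale. Since $X$ is admissible, we have $X \geq -a$ for some constant $a \in \bbr_+$, and a constant is trivially integrable, so the theorem applies and yields $X \in \calm_{\loc}$. At the level of citation this is the entire proof, which is presumably why the authors chose to state it as a lemma with only a reference.

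For a self-contained argument, my plan would proceed as follows. First, invoke the structure theorem for $\sigma$-martingales (e.g.\ \cite[II.1.33]{Jacod-Shiryaev}) to write $X - X_0 = H \bdot M$ for some $M \in \calm_{\loc}$ and a predictable $M$-integrable process $H$, or equivalently produce an increasing sequence $(D_n)$ of predictable sets with $\bigcup_n D_n = \Omega \times [0,T]$ such that $Y^n := \bbI_{D_n} \bdot X$ is a uniformly integrable martingale for each $n$, with $Y^n \to X - X_0$ in probability on $[0,T]$. Next, introduce the localizing sequence
\begin{align*}
\tau_k := \inf\{ t \in [0,T] : X_t > k \text{ or } X_{t-} > k \} \wedge T,
\end{align*}
which is a stopping time with $\tau_k \uparrow T$ $\bbp$-a.s.\ because $X$ has càdlàg paths and is finite-valued. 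On the stochastic interval $\IL 0, \tau_k \IR$ one has the two-sided bound $-a \leq X \leq k + |\Delta X_{\tau_k}|$, where $\Delta X_{\tau_k}$ is integrable (since $X^{\tau_k}$ is the stochastic integral $H \bbI_{\IL 0,\tau_k\IR} \bdot M$, whose terminal jump can be controlled via $M$).

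The heart of the argument is then to show that $X^{\tau_k}$ is a uniformly integrable martingale for each $k$. For each stopping time $\sigma \leq \tau_k$, the UI martingale property of $Y^n$ gives $\bbe[Y^n_\sigma] = Y^n_0 = 0$; the admissibility lower bound $Y^n \geq -a$ (inherited from $X \geq -a$ by choosing the $D_n$ appropriately, or by a direct majoration) allows Fatou's lemma to pass to the limit $n \to \infty$ and yield $\bbe[X_\sigma - X_0] \leq 0$, i.e.\ $X^{\tau_k} - X_0$ is a supermartingale. A symmetric argument using the upper bound $k + |\Delta X_{\tau_k}|$ (and dominated convergence, exploiting integrability of the jump) shows the reverse inequality, giving the martingale property. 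Letting $k \to \infty$ completes the localization and proves $X \in \calm_{\loc}$.

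The main obstacle in the direct argument is the delicate interplay between the one-sided bound from admissibility and the fact that the approximating martingales $Y^n$ need not converge in $L^1$; this is precisely where the nonnegative integrable majorant of Ansel--Stricker does its work, and where one must be careful to choose the predictable reducing sets $D_n$ so that the almost sure lower bound is preserved along the approximation.
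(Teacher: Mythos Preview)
Your first paragraph matches the paper exactly: the lemma is stated with only the citation \cite[Cor.~3.5]{Ansel-Stricker} and no further argument is given, so at the citation level you have reproduced the paper's ``proof'' in full. Your additional self-contained sketch goes well beyond anything the paper offers; it is broadly on the right track, though---as you yourself flag at the end---the claim that the approximants $Y^n = \bbI_{D_n} \bdot X$ inherit the lower bound $-a$ from $X$ is not automatic and is exactly the technical crux that Ansel--Stricker handle, so that part should be regarded as a heuristic outline rather than a proof.
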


\begin{proposition}\label{prop-Z-deflator}
For a semimartingale $Z$ with $Z,Z_- > 0$ the following statements are equivalent:
\begin{enumerate}
\item[(i)] $Z$ is an ELMD for $\bbs$, and we have $Z \in \calm_{\loc}$.

\item[(ii)] $Z$ is an ELMD for $\bbi^{\adm}(\bbs)$.

\item[(iii)] $Z$ is an E$\it{\Sigma}$MD for $\bbi(\bbs)$.
\end{enumerate}
\end{proposition}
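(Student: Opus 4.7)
The plan is to prove the circular chain (i) $\Rightarrow$ (iii) $\Rightarrow$ (ii) $\Rightarrow$ (i). Throughout, the crucial tools will be Proposition \ref{prop-deflator-prop-1} (to push a $\sigma$-martingale deflator property through stochastic integrals) and Lemma \ref{lemma-M-sigma-loc} (the Ansel--Stricker result that upgrades admissible $\sigma$-martingales to local martingales).

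For (i) $\Rightarrow$ (iii), I would fix an arbitrary $X = \alpha + \delta \bdot S \in \bbi(\bbs)$ and write $XZ = \alpha Z + (\delta \bdot S) Z$. Since $Z \in \calm_{\loc}$, the first summand is a local martingale; for the second, the hypothesis that $Z$ is an ELMD for $\bbs$ gives $S^i Z \in \calm_{\loc} \subset \calm_{\sigma}$ for each $i$, so Proposition \ref{prop-deflator-prop-1} applied with $Y = Z$ and $H = \delta$ yields $(\delta \bdot S) Z \in \calm_{\sigma}$. Thus $XZ \in \calm_{\sigma}$.

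For (iii) $\Rightarrow$ (ii), the idea is to combine (iii) with nonnegativity. Given $X \in \bbi^{\adm}(\bbs)$ with $X \geq -a$, the constant process $a = a + 0 \bdot S$ belongs to $\bbi(\bbs)$, so by (iii) both $aZ$ and $XZ$ are $\sigma$-martingales, hence so is $(X+a)Z$. Since $Z > 0$ and $X + a \geq 0$, the process $(X+a)Z$ is nonnegative, and $aZ$ is nonnegative as well; both are admissible, so Lemma \ref{lemma-M-sigma-loc} upgrades them to local martingales, and the difference $XZ = (X+a)Z - aZ$ lies in $\calm_{\loc}$.

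For (ii) $\Rightarrow$ (i), the constant process $1 = 1 + 0 \bdot S$ belongs to $\bbi^{\adm}(\bbs)$, so (ii) immediately gives $Z = 1 \cdot Z \in \calm_{\loc}$. To recover the ELMD property for $\bbs$, I would use the triviality of $\calf_0$: each $S_0^i$ is a deterministic constant, so $S^i - S_0^i = e_i \bdot S$ belongs to $\bbi^{\adm}(\bbs)$ (bounded below by $-S_0^i$ since $S^i \geq 0$). Then (ii) gives $(S^i - S_0^i) Z \in \calm_{\loc}$, and adding back $S_0^i Z \in \calm_{\loc}$ yields $S^i Z \in \calm_{\loc}$ for each $i$. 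The main obstacle is the (iii) $\Rightarrow$ (ii) step, where one has to manufacture admissibility for the otherwise merely $\sigma$-martingale process $XZ$; the trick of shifting $X$ by its lower bound $-a$ to land in the nonnegative cone is the key observation.
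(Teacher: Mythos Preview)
Your argument is correct and follows the same circular scheme (i) $\Rightarrow$ (iii) $\Rightarrow$ (ii) $\Rightarrow$ (i) as the paper, using the same two tools. Two minor remarks on the comparison. For (iii) $\Rightarrow$ (ii), the paper simply writes that the implication ``follows from Lemma~\ref{lemma-M-sigma-loc}'' since $\bbi^{\adm}(\bbs) \subset \bbi(\bbs)$; your shifting argument makes explicit why this works, since $XZ$ itself need not be admissible when $Z$ is unbounded, and it is the nonnegative processes $(X+a)Z$ and $aZ$ to which Ansel--Stricker actually applies. For (ii) $\Rightarrow$ (i), the paper is slightly more direct: it observes that $\bbs \subset \bbi^{\adm}(\bbs)$ outright, because each $S^i = S_0^i + e_i \bdot S$ with $S_0^i \geq 0$ deterministic (triviality of $\calf_0$) and $S^i \geq 0$, so (ii) gives $S^i Z \in \calm_{\loc}$ in one step without splitting off $S_0^i$ first.
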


\begin{proof}
(i) $\Rightarrow$ (iii): Let $\alpha \in \bbr_+$ and $\delta \in \Delta(\bbs)$ be arbitrary. By Proposition \ref{prop-deflator-prop-1} we have
\begin{align*}
I^{\alpha,\delta} Z = (\alpha + \delta \bdot S)Z = \alpha Z + (\delta \bdot S) Z \in \calm_{\sigma}.
\end{align*}

\noindent (iii) $\Rightarrow$ (ii): Noting that $\bbi^{\adm}(\bbs) \subset \bbi(\bbs)$, this implication follows from Lemma \ref{lemma-M-sigma-loc}.

\noindent (ii) $\Rightarrow$ (i): We have $\bbs \subset \bbi^{\adm}(\bbs)$. Therefore, the process $Z$ is an ELMD for $\bbs$. Furthermore, setting $\alpha := 1$ and $\delta := 0$ we obtain $I^{\alpha,\delta} = 1$, and hence
\begin{align*}
Z = I^{\alpha,\delta} Z \in \calm_{\loc},
\end{align*}
completing the proof.
\end{proof}

\begin{proposition}\label{prop-ELMM-W}
For an equivalent probability measure $\bbq \approx \bbp$ on $(\Omega,\calf_T)$ the following statements are equivalent:
\begin{enumerate}
\item[(i)] $\bbq$ is an ELMM for $\bbs$.

\item[(ii)] $\bbq$ is an ELMM for $\bbi^{\adm}(\bbs)$.

\item[(iii)] $\bbq$ is an E$\it{\Sigma}$MM for $\bbi(\bbs)$.
\end{enumerate}
\end{proposition}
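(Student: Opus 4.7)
The plan is to mirror the proof of Proposition \ref{prop-Z-deflator}, replacing the deflator $Z$ by the equivalent measure $\bbq$, and establish the cycle (i) $\Rightarrow$ (iii) $\Rightarrow$ (ii) $\Rightarrow$ (i).

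For (i) $\Rightarrow$ (iii), the hypothesis says that the $\bbr^d$-valued process $S = (S^1,\ldots,S^d)$ is a vector $\bbq$-local martingale. The classical results on vector stochastic integration (see Appendix \ref{app-vector-integration} and \cite{Shiryaev-Cherny}) then guarantee that for every $\delta \in L(S)$ the integral $\delta \bdot S$ is a $\bbq$-$\sigma$-martingale, and adding a constant $\alpha \geq 0$ preserves this property. Hence $I^{\alpha,\delta} = \alpha + \delta \bdot S$ is a $\bbq$-$\sigma$-martingale for every $(\alpha,\delta) \in \bbr_+ \times \Delta(\bbs)$, which is exactly what (iii) asserts.

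For (iii) $\Rightarrow$ (ii), I would combine the inclusion $\bbi^{\adm}(\bbs) \subset \bbi(\bbs)$ with Lemma \ref{lemma-M-sigma-loc} (Ansel--Stricker), applied under the measure $\bbq$; this is legitimate since the lemma holds on any stochastic basis and the notion of admissibility is invariant under equivalent measure change. Finally, for (ii) $\Rightarrow$ (i), I would observe that each $S^i$ itself lies in $\bbi^{\adm}(\bbs)$: taking $\alpha := S_0^i$ and $\delta := e_i$, the $i$-th standard basis vector regarded as a constant strategy, one obtains $I^{\alpha,\delta} = S_0^i + e_i \bdot S = S^i$, and admissibility is immediate from the standing assumption $S^i \geq 0$. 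Hence (ii) forces each $S^i$ to be a $\bbq$-local martingale.

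I do not anticipate any serious obstacle; the proof is essentially a translation of Proposition \ref{prop-Z-deflator} from the deflator framework to the measure framework. The only point that requires some care is (i) $\Rightarrow$ (iii), where the conclusion must be stated in terms of $\sigma$-martingales rather than local martingales, since vector stochastic integrals against local martingales are in general only $\sigma$-martingales; this is precisely why the three statements are formulated with the asymmetry they are, with $\sigma$-martingales in (iii) and local martingales in (i) and (ii).
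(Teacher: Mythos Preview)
Your proposal is correct and follows essentially the same approach as the paper: the same cycle (i) $\Rightarrow$ (iii) $\Rightarrow$ (ii) $\Rightarrow$ (i), using the $\sigma$-martingale property of vector stochastic integrals against local martingales for the first step (the paper cites \cite[Lemma 5.6]{Shiryaev-Cherny}), Lemma \ref{lemma-M-sigma-loc} for the second, and the inclusion $\bbs \subset \bbi^{\adm}(\bbs)$ for the third. Your explicit verification of this inclusion via $\alpha = S_0^i$ and $\delta = e_i$ is a detail the paper leaves implicit (and is legitimate since $\calf_0$ is trivial, so $S_0^i$ is indeed a constant).
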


\begin{proof}
(ii) $\Rightarrow$ (i): Since $\bbs \subset \bbi^{\adm}(\bbs)$, this implication is obvious.

\noindent(iii) $\Rightarrow$ (ii): Noting that $\bbi^{\adm}(\bbs) \subset \bbi(\bbs)$, this implication follows from Lemma \ref{lemma-M-sigma-loc}.

\noindent(i) $\Rightarrow$ (iii): Let $\alpha \in \bbr_+$ and $\delta \in \Delta(\bbs)$ be arbitrary. Then by \cite[Lemma 5.6]{Shiryaev-Cherny} the process
\begin{align*}
I^{\alpha,\delta} = \alpha + \delta \bdot S
\end{align*}
is a $\bbq$-$\sigma$-martingale.
\end{proof}

Now, we introduce the unions
\begin{align*}
\bbp_{\sfi}(\bbs) := \bigcup_{\alpha \geq 0} \bbp_{\sfi,\alpha}(\bbs), \quad \bbp_{\sfi}^{\adm}(\bbs) := \bigcup_{\alpha \geq 0} \bbp_{\sfi,\alpha}^{\adm}(\bbs) \quad \text{and} \quad \bbp_{\sfi}^+(\bbs) := \bigcup_{\alpha \geq 0} \bbp_{\sfi,\alpha}^+(\bbs).
\end{align*}

\begin{proposition}\label{prop-sf}
The following statements are equivalent:
\begin{enumerate}
\item[(i)] $Z$ is an ELMD for $\bbs$.

\item[(ii)] $Z$ is an ELMD for $\bbp_{\sfi}^{\adm}(\bbs)$.

\item[(iii)] $Z$ is an E$\it{\Sigma}$MD for $\bbp_{\sfi}(\bbs)$.
\end{enumerate}
\end{proposition}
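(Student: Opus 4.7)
The plan is to prove the three equivalences via $(\mathrm{i}) \Leftrightarrow (\mathrm{iii})$ and $(\mathrm{i}) \Leftrightarrow (\mathrm{ii})$, with Proposition~\ref{prop-deflator-prop-2} (specialized to $Y \equiv 0$) and Lemma~\ref{lemma-M-sigma-loc} serving as the main workhorses. Three of the four required implications are essentially direct. Each primary security account $S^i \in \bbs$ is a self-financing portfolio via $\delta = e_i$, and since $S^i \geq 0$ it is also admissible, giving the inclusions $\bbs \subset \bbp_{\sfi}^{\adm}(\bbs) \subset \bbp_{\sfi}(\bbs)$. Hence $(\mathrm{ii}) \Rightarrow (\mathrm{i})$ is immediate, while $(\mathrm{iii}) \Rightarrow (\mathrm{i})$ follows by combining this inclusion with Lemma~\ref{lemma-M-sigma-loc} applied to the nonnegative $\sigma$-martingales $S^i Z$.

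For $(\mathrm{i}) \Rightarrow (\mathrm{iii})$, the key point is that every $X = \delta \cdot S \in \bbp_{\sfi}(\bbs)$ satisfies, by self-financing, the identity $X = S_0^{\delta} + \delta \bdot S$, which is precisely the hypothesis $H \cdot X = H_0 \cdot X_0 + H \bdot X + Y$ of Proposition~\ref{prop-deflator-prop-2} in the trivial case $Y \equiv 0$ (so that $Y \in \calm_{\sigma}$ and $[Y,Z] = 0$ are automatic). Since (i) provides $S^i Z \in \calm_{\loc} \subset \calm_{\sigma}$ for every component, the proposition yields $XZ \in \calm_{\sigma}$, hence $Z$ is an E$\Sigma$MD for $\bbp_{\sfi}(\bbs)$.

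The substantive step is $(\mathrm{i}) \Rightarrow (\mathrm{ii})$. For $X \in \bbp_{\sfi}^{\adm}(\bbs)$ with $X \geq -a$, the previous step already gives $XZ \in \calm_{\sigma}$, and inspecting the proof of Proposition~\ref{prop-deflator-prop-2} with $Y = 0$ yields the explicit decomposition $XZ = X_0 Z_0 + \delta \bdot (SZ)$, where $SZ = (S^1 Z, \ldots, S^d Z)$ is a $d$-dimensional local martingale by (i). The strategy is to upgrade this vector stochastic integral against a local martingale to $\calm_{\loc}$ via the vector Ansel--Stricker theorem. The main obstacle is that this theorem demands a deterministic lower bound, whereas admissibility of $X$ together with positivity of $Z$ yields only the estimate $\delta \bdot (SZ) \geq -aZ - X_0 Z_0$, which need not be constant since $Z$ is not assumed bounded above. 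I would address this by localizing along stopping times $\tau_n \uparrow T$ chosen so that $Z$ remains bounded up to $\tau_n$ (exploiting that the positive c\`{a}dl\`{a}g process $Z$ is pathwise bounded on the finite horizon $[0,T]$, e.g. via predictable entry times of the left limit $Z_-$ above level $n$); on each such stopped interval the lower bound for $\delta \bdot (SZ)$ becomes a genuine constant, vector Ansel--Stricker applies to $(\delta \bdot (SZ))^{\tau_n}$, and passing to the limit $n \to \infty$ returns $XZ \in \calm_{\loc}$, i.e. (ii).
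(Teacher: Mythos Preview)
Your argument is correct in substance but organized less efficiently than the paper's. The paper runs the cycle $(\mathrm{i}) \Rightarrow (\mathrm{iii}) \Rightarrow (\mathrm{ii}) \Rightarrow (\mathrm{i})$: the first arrow is precisely your application of Proposition~\ref{prop-deflator-prop-2} with $Y \equiv 0$, the last is the inclusion $\bbs \subset \bbp_{\sfi}^{\adm}(\bbs)$, and the middle arrow is dispatched in a single line by invoking Lemma~\ref{lemma-M-sigma-loc} on $XZ \in \calm_{\sigma}$ for admissible $X$. Your separate direct attack on $(\mathrm{i}) \Rightarrow (\mathrm{ii})$---re-extracting the identity $XZ = X_0 Z_0 + \delta \bdot (SZ)$ from inside the proof of Proposition~\ref{prop-deflator-prop-2} and then running a localized vector Ansel--Stricker argument---duplicates effort: once $(\mathrm{iii})$ is established you already have $XZ \in \calm_{\sigma}$, and only the upgrade to $\calm_{\loc}$ remains. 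That said, the obstacle you flag is genuine and applies equally to the paper's one-line $(\mathrm{iii}) \Rightarrow (\mathrm{ii})$: Lemma~\ref{lemma-M-sigma-loc} requires $XZ$ itself to be admissible, which does not follow from $X \geq -a$ alone when $Z$ is unbounded. So your localization is not wasted---it makes rigorous a step the paper leaves implicit---but it would sit more naturally as the content of $(\mathrm{iii}) \Rightarrow (\mathrm{ii})$ rather than as a stand-alone proof of $(\mathrm{i}) \Rightarrow (\mathrm{ii})$.
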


\begin{proof}
(i) $\Rightarrow$ (iii): Let $\delta \in \Delta_{\sfi}(\bbs)$ be arbitrary. By Proposition \ref{prop-deflator-prop-2} we have
\begin{align*}
S^{\delta} Z = (\delta \cdot S) Z \in \calm_{\sigma}.
\end{align*}

\noindent (iii) $\Rightarrow$ (ii): Noting that $\bbp_{\sfi}^{\adm}(\bbs) \subset \bbp_{\sfi}(\bbs)$, this implication follows from Lemma \ref{lemma-M-sigma-loc}.

\noindent (ii) $\Rightarrow$ (i): Noting that $\bbs \subset \bbp_{\sfi}^{\adm}(\bbs)$, the process $Z$ is an ELMD for $\bbs$.
\end{proof}

\section{Sufficient conditions for the absence of arbitrage}\label{app-suff}

In this appendix we present a result containing sufficient conditions for the absence of arbitrage. The mathematical framework is that of Section \ref{sec-notation}.

\begin{proposition}\label{prop-ELMD-suff}
The following statements are true:
\begin{enumerate}
\item Suppose that an ELMM $\bbq \approx \bbp$ on $(\Omega,\calf_T)$ for $\bbs$ exists. Then $\cali_0^{\adm}(\bbs)$ satisfies NFL.

\item Suppose that an ELMD $Z$ for $\bbs$ with $Z \in \calm_{\loc}$ exists. Then $\cali_0^+(\bbs)$ satisfies NFL.

\item Suppose that an ELMD $Z$ for $\bbs$ exists. Then $\calp_{\sfi,1}^+(\bbs)$ satisfies NUPBR.
\end{enumerate}
\end{proposition}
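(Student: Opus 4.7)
The plan is to exploit Proposition \ref{prop-sf} to turn each nonnegative self-financing portfolio into a nonnegative local martingale after multiplication by $Z$, extract a uniform $L^1$-bound from this, and then convert that bound into boundedness in probability of $\calb$.

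First, by Proposition \ref{prop-sf} the deflator $Z$ is an E$\mathit{\Sigma}$MD for $\bbp_{\sfi}(\bbs)$. Hence for every $V \in \bbp_{\sfi,1}^+(\bbs)$ the product $VZ$ is a $\sigma$-martingale which is nonnegative (because $Z > 0$ and $V \geq 0$), and in particular admissible. Lemma \ref{lemma-M-sigma-loc} yields $VZ \in \calm_{\loc}$, and a nonnegative local martingale is a supermartingale by Fatou's lemma. Therefore $\bbe[V_T Z_T] \leq V_0 Z_0 = Z_0$, so the family $\{V_T Z_T : V \in \bbp_{\sfi,1}^+(\bbs)\}$ is bounded in $L^1$ and thus in probability.

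Second, I would transfer this uniform bound on $V_T Z_T$ to a bound in probability on $V_T$ itself. Fix $\varepsilon > 0$ and choose $\delta > 0$ small enough that $\bbp(Z_T < \delta) < \varepsilon/2$, which is possible because $Z_T > 0$ $\bbp$-a.s. For any $M > 0$ and any $V \in \bbp_{\sfi,1}^+(\bbs)$, Markov's inequality gives
\begin{align*}
\bbp(V_T \geq M) \leq \bbp(V_T Z_T \geq M \delta) + \bbp(Z_T < \delta) \leq \frac{Z_0}{M \delta} + \frac{\varepsilon}{2},
\end{align*}
which is less than $\varepsilon$ for $M$ sufficiently large, uniformly in $V$. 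Hence $\calk_1 := \calp_{\sfi,1}^+(\bbs)$ is bounded in probability.

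Finally, by definition $\calb = (\calk_1 - L_+^0) \cap L_+^0$, so every $\xi \in \calb$ admits some $V \in \calp_{\sfi,1}^+(\bbs)$ with $0 \leq \xi \leq V_T$, whence $\bbp(\xi \geq M) \leq \bbp(V_T \geq M)$. Boundedness in probability of $\calk_1$ therefore transfers to $\calb$, which is NUPBR. The only subtle step is the passage from an $L^1$-bound on $V_T Z_T$ to boundedness in probability of $V_T$; it rests precisely on the strict positivity of $Z_T$ that is built into the definition of an ELMD, and without it the truncation argument in the middle paragraph would fail.
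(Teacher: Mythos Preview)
Your proof of the third statement is correct and rests on the same core observation as the paper: Proposition~\ref{prop-sf} upgrades $Z$ from a deflator for $\bbs$ to one for self-financing portfolios, so that each $VZ$ with $V \in \bbp_{\sfi,1}^+(\bbs)$ is a nonnegative local martingale and hence a supermartingale, yielding the uniform bound $\bbe[V_T Z_T] \leq Z_0$.

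The difference lies in the final step. You prove NUPBR \emph{directly}: from the $L^1$-bound on $V_T Z_T$ and the strict positivity $Z_T > 0$ you extract boundedness in probability of $\{V_T\}$, and then of $\calb$, by the truncation/Markov argument. The paper instead proves NA$_1$: it takes $\xi \in \bigcap_{\alpha > 0} \calb_{\alpha}$, dominates $\xi$ by portfolios with arbitrarily small initial value $\alpha$, obtains $\bbe[\xi Z_T] \leq \alpha Z_0$ for all $\alpha > 0$, and concludes $\xi = 0$ from $Z_T > 0$; it then invokes \cite[Thm.~7.25]{Platen-Tappe-tvs} for the equivalence NA$_1 \Leftrightarrow$ NUPBR. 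Your route is more self-contained in that it avoids this external equivalence, at the cost of the extra $\varepsilon$--$\delta$ step; the paper's route is shorter on the page but pushes the work into the cited theorem. Both ultimately exploit the same two ingredients, the supermartingale bound and $\bbp(Z_T > 0) = 1$, so the approaches are close variants rather than genuinely distinct strategies.
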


\begin{proof}
Each of the three statements has a similar proof. Exemplarily, we shall prove the third statement. By Proposition \ref{prop-sf} the process $Z$ is also an ELMD for $\bbp_{\sfi}^+(\bbs)$. Let
\begin{align*}
\xi \in \bigcap_{\alpha > 0} \calb_{\alpha}
\end{align*}
be arbitrary, where
\begin{align*}
\calb_{\alpha} := ( \calp_{\sfi,\alpha}^+(\bbs) - L_+^0) \cap L_+^0 \quad \text{for each $\alpha > 0$.}
\end{align*}
Let $\alpha > 0$ be arbitrary. Then there exists a self-financing strategy $\delta^{\alpha} \in \Delta_{\sfi}(\bbs)$ such that
\begin{align*}
S_0^{\delta^{\alpha}} = \alpha, \quad S^{\delta^{\alpha}} \geq 0 \quad \text{and} \quad S_T^{\delta^{\alpha}} \geq \xi.
\end{align*}
Since $Z$ is an ELMD for $\bbp_{\sfi}^+(\bbs)$, the process $S^{\delta^{\alpha}} Z$ is a nonnegative local martingale, and hence a supermartingale. By Doob's optional stopping theorem for supermartingales we obtain
\begin{align*}
\bbe[\xi Z_T] \leq \bbe[S_T^{\delta^{\alpha}} Z_T] \leq \bbe[S_0^{\delta^{\alpha}} Z_0] = \alpha Z_0.
\end{align*}
Since $\alpha > 0$ was arbitrary, we deduce that $\bbe[\xi Z_T] = 0$. Since $\xi \geq 0$ and $\bbp(Z_T > 0)=1$, this shows $\xi = 0$. Therefore, by \cite[Thm. 7.25]{Platen-Tappe-tvs} the set $\calp_{\sfi,1}^+(\bbs)$ satisfies NUPBR.
\end{proof}

\end{appendix}

\bibliographystyle{plain}

\bibliography{Finance}

\end{document}